\newtheorem{proposition}{Proposition}
\newtheorem{lemma}[proposition]{Lemma}
\newtheorem{theorem}[proposition]{Theorem}
\newtheorem{corollary}[proposition]{Corollary}
\newenvironment{proof}{\noindent \textbf{{Proof~} }}{\hfill $\blacksquare$}
\def\squareforqed{\hbox{\rlap{$\sqcap$}$\sqcup$}}
\def\qed{\ifmmode\squareforqed\else{\unskip\nobreak\hfil
\penalty50\hskip1em\null\nobreak\hfil\squareforqed
\parfillskip=0pt\finalhyphendemerits=0\endgraf}\fi}
\def\endenv{\ifmmode\;\else{\unskip\nobreak\hfil
\penalty50\hskip1em\null\nobreak\hfil\;
\parfillskip=0pt\finalhyphendemerits=0\endgraf}\fi}
\newcounter{example}
\mathchardef\ordinarycolon\mathcode`\:
\def\vcentcolon{\mathrel{\mathop\ordinarycolon}}
\definecolor{darkblue}{RGB}{0,76,156}
\definecolor{darkkblue}{RGB}{0,0,153}
\definecolor{blue2}{RGB}{102,178,255}
\definecolor{darkred}{RGB}{195,0,0}
\newmdenv[skipabove=7pt,
skipbelow=7pt,
backgroundcolor=darkblue!15,
innerleftmargin=5pt,
innerrightmargin=5pt,
innertopmargin=5pt,
leftmargin=0cm,
rightmargin=0cm,
innerbottommargin=5pt,
linewidth=1pt]{tBox}
\newmdenv[skipabove=7pt,
skipbelow=7pt,
backgroundcolor=blue2!25,
innerleftmargin=5pt,
innerrightmargin=5pt,
innertopmargin=5pt,
leftmargin=0cm,
rightmargin=0cm,
innerbottommargin=5pt,
linewidth=1pt]{dBox}
\newmdenv[skipabove=7pt,
skipbelow=7pt,
backgroundcolor=darkred!15,
innerleftmargin=5pt,
innerrightmargin=5pt,
innertopmargin=5pt,
leftmargin=0cm,
rightmargin=0cm,
innerbottommargin=5pt,
linewidth=1pt]{rBox}
\newcommand{\nc}{\newcommand}
\nc{\ketbra}[2]{\lvert#1\rangle\!\langle#2\rvert}
\DeclarePairedDelimiter{\norm}{\lVert}{\rVert}
\DeclarePairedDelimiter{\abs}{\lvert}{\rvert}
\DeclarePairedDelimiter{\ceil}{\lceil}{\rceil}
\let\oldabs\abs
\def\abs{\@ifstar{\oldabs}{\oldabs*}}
\let\oldnorm\norm
\def\norm{\@ifstar{\oldnorm}{\oldnorm*}}
\DeclarePairedDelimiterX{\infdivx}[2]{(}{)}{%
  #1\;\delimsize\|\;#2%
}
\newcommand{\infdiv}{D\infdivx}
\nc{\proj}[1]{| #1\rangle\!\langle #1 |}
\nc{\avg}[1]{\langle#1\rangle}
\nc{\smfrac}[2]{\mbox{$\frac{#1}{#2}$}}
\nc{\tr}{\operatorname{tr}}
\nc{\ox}{\otimes}
\nc{\dg}{\dagger}
\nc{\dn}{\downarrow}
\nc{\cA}{{\cal A}}
\nc{\cB}{{\cal B}}
\nc{\cC}{{\cal C}}
\nc{\cD}{{\cal D}}
\nc{\cE}{{\cal E}}
\nc{\cF}{{\cal F}}
\nc{\cG}{{\cal G}}
\nc{\cH}{{\cal H}}
\nc{\cI}{{\cal I}}
\nc{\cJ}{{\cal J}}
\nc{\cK}{{\cal K}}
\nc{\cL}{{\cal L}}
\nc{\cM}{{\cal M}}
\nc{\cN}{{\cal N}}
\nc{\cO}{{\cal O}}
\nc{\cP}{{\cal P}}
\nc{\cQ}{{\cal Q}}
\nc{\cR}{{\cal R}}
\nc{\cS}{{\cal S}}
\nc{\cT}{{\cal T}}
\nc{\cU}{{\cal U}}
\nc{\cV}{{\cal V}}
\nc{\cX}{{\cal X}}
\nc{\cY}{{\cal Y}}
\nc{\cZ}{{\cal Z}}
\nc{\cW}{{\cal W}}
\nc{\csupp}{{\operatorname{csupp}}}
\nc{\qsupp}{{\operatorname{qsupp}}}
\nc{\var}{{\operatorname{var}}}
\nc{\rar}{\rightarrow}
\nc{\lrar}{\longrightarrow}
\nc{\polylog}{{\operatorname{polylog}}}
\nc{\wt}{{\operatorname{wt}}}
\nc{\supp}{{\operatorname{supp}}}
\nc{\argmin}{{\operatorname{argmin}}}
\newcommand{\tpmod}[1]{{\@displayfalse\pmod{#1}}}
\def\a{\alpha}
\def\d{\delta}
\def\z{\zeta}
\def\x{\xi}
\def\c{\chi}
\def\L{\Lambda}
\nc{\RR}{{{\mathbb R}}}
\nc{\CC}{{{\mathbb C}}}
\nc{\FF}{{{\mathbb F}}}
\nc{\NN}{{{\mathbb N}}}
\nc{\ZZ}{{{\mathbb Z}}}
\nc{\PP}{{{\mathbb P}}}
\nc{\QQ}{{{\mathbb Q}}}
\nc{\UU}{{{\mathbb U}}}
\nc{\EE}{{{\mathbb E}}}
\nc{\id}{{\operatorname{id}}}
\nc{\CHSH}{{\operatorname{CHSH}}}
\nc{\rU}{\mbox{U}}
\nc{\ob}[1]{#1}
\nc{\SEP}{{\text{\rm SEP}}}
\nc{\NS}{{\text{\rm NS}}}
\nc{\LOCC}{{\text{\rm LOCC}}}
\nc{\PPT}{{\text{\rm PPT}}}
\nc{\EXT}{{\text{\rm EXT}}}
\nc{\Sym}{{\operatorname{Sym}}}
\nc{\ERLO}{{E_{\text{r,LO}}}}
\nc{\ERLOCC}{{E_{\text{r,LOCC}}}}
\nc{\ERPPT}{{E_{\text{r,PPT}}}}
\nc{\ERLOCCinfty}{{E^{\infty}_{\text{r,LOCC}}}}
\nc{\Aram}{{\operatorname{\sf A}}}
\newcommand{\eps}{\varepsilon}
\def\grd@save@target#1{%
  \def\grd@target{#1}}
\def\grd@save@start#1{%
  \def\grd@start{#1}}
\tikzset{
  grid with coordinates/.style={
    to path={%
      \pgfextra{%
        \edef\grd@@target{(\tikztotarget)}%
        \tikz@scan@one@point\grd@save@target\grd@@target\relax
        \edef\grd@@start{(\tikztostart)}%
        \tikz@scan@one@point\grd@save@start\grd@@start\relax
        \draw[minor help lines,magenta] (\tikztostart) grid (\tikztotarget);
        \draw[major help lines] (\tikztostart) grid (\tikztotarget);
        \grd@start
        \pgfmathsetmacro{\grd@xa}{\the\pgf@x/1cm}
        \pgfmathsetmacro{\grd@ya}{\the\pgf@y/1cm}
        \grd@target
        \pgfmathsetmacro{\grd@xb}{\the\pgf@x/1cm}
        \pgfmathsetmacro{\grd@yb}{\the\pgf@y/1cm}
        \pgfmathsetmacro{\grd@xc}{\grd@xa + \pgfkeysvalueof{/tikz/grid with coordinates/major step}}
        \pgfmathsetmacro{\grd@yc}{\grd@ya + \pgfkeysvalueof{/tikz/grid with coordinates/major step}}
        \foreach \x in {\grd@xa,\grd@xc,...,\grd@xb}
        \node[anchor=north] at (\x,\grd@ya) {\pgfmathprintnumber{\x}};
        \foreach \y in {\grd@ya,\grd@yc,...,\grd@yb}
        \node[anchor=east] at (\grd@xa,\y) {\pgfmathprintnumber{\y}};
      }
    }
  },
  minor help lines/.style={
    help lines,
    step=\pgfkeysvalueof{/tikz/grid with coordinates/minor step}
  },
  major help lines/.style={
    help lines,
    line width=\pgfkeysvalueof{/tikz/grid with coordinates/major line width},
    step=\pgfkeysvalueof{/tikz/grid with coordinates/major step}
  },
  grid with coordinates/.cd,
  minor step/.initial=.2,
  major step/.initial=1,
  major line width/.initial=2pt,
}
\def\problem@s{}
\newcounter{problems@cnt}
\newcommand{\allproblems}{\problem@s}
\pgfplotsset{compat=1.18}
\definecolor{colortwo}{rgb}{0.4,0.77,0.17}
\definecolor{colorthree}{rgb}{0.01,0.51,0.93}
\newtcolorbox{tbox}[3][]{%
colframe=#2,colback=#2!10,coltitle=#2!20!black,title={#3},#1}
\nc{\XZX}{\textit{XZX}}
\nc{\YZY}{\textit{YZY}}
\nc{\ZXZ}{\textit{ZXZ}}
\nc{\WZW}{\textit{WZW}}
\nc{\UZU}{\textit{UZU}}
\nc{\QPS}{\mathrm{QPS}}
\nc{\PIS}{\mathrm{PIS}}
\nc{\UQSP}{W_{\omega,\bm\theta,\bm\phi}}
\nc{\UQPP}{V_{\omega,\bm\theta,\bm\phi}}
\nc{\UQSPs}{W}
\nc{\UQPPs}{V}
\nc{\tp}{trigonometric polynomial}
\nc{\tps}{trigonometric polynomials}
\DeclareMathOperator{\sgn}{sgn}
\DeclareMathOperator{\sqw}{sqw}
\DeclareMathOperator{\poly}{poly}
\DeclareMathOperator{\opspan}{span}
\newcommand{\highlight}[1]{ #1 }
\begin{document}
\title{Quantum Phase Processing and its Applications in Estimating Phase and Entropies}
\author{Youle Wang}
\affiliation{Institute for Quantum Computing, Baidu Research, Beijing 100193, China}
\affiliation{School of Software, Nanjing University of Information Science and Technology, Nanjing, 210044, China}
\author{Lei Zhang}
\affiliation{Institute for Quantum Computing, Baidu Research, Beijing 100193, China}
\affiliation{Thrust of Artificial Intelligence, Information Hub, Hong Kong University of Science and Technology (Guangzhou), Guangdong, China}
\author{Zhan Yu}
\affiliation{Institute for Quantum Computing, Baidu Research, Beijing 100193, China}
\affiliation{Centre for Quantum Technologies, National University of Singapore, 117543, Singapore}
\author{Xin Wang}
\email{felixxinwang@hkust-gz.edu.cn}
\affiliation{Institute for Quantum Computing, Baidu Research, Beijing 100193, China}
\affiliation{Thrust of Artificial Intelligence, Information Hub, Hong Kong University of Science and Technology (Guangzhou), Guangdong, China}

\begin{abstract}
Quantum computing can provide speedups in solving many problems as the evolution of a quantum system is described by a unitary operator in an exponentially large Hilbert space. Such unitary operators change the phase of their eigenstates and make quantum algorithms fundamentally different from their classical counterparts. Based on this unique principle of quantum computing, we develop a new algorithmic toolbox ``quantum phase processing'' that can directly apply arbitrary trigonometric transformations to eigenphases of a unitary operator. The quantum phase processing circuit is constructed simply, consisting of single-qubit rotations and controlled-unitaries, typically using only one ancilla qubit. Besides the capability of phase transformation, quantum phase processing in particular can extract the eigen-information of quantum systems by simply measuring the ancilla qubit, making it naturally compatible with indirect measurement. Quantum phase processing complements another powerful framework known as quantum singular value transformation and leads to more intuitive and efficient quantum algorithms for solving problems that are particularly phase-related. As a notable application, we propose a new quantum phase estimation algorithm without quantum Fourier transform, which requires the fewest ancilla qubits and matches the best performance so far. We further exploit the power of our method by investigating a plethora of applications in Hamiltonian simulation, entanglement spectroscopy and quantum entropies estimation, demonstrating improvements or optimality for almost all cases. 
\end{abstract}

\date{\today}
\maketitle

\section{Introduction}

Quantum computer provides a computational framework that can solve certain problems dramatically faster than classical machines. Quantum computing has been applied in many important tasks, including breaking encryption~\cite{shor1997polynomialtime}, searching databases~\cite{grover1996fast}, and simulating quantum evolution~\cite{lloyd1996universal}.
Recent advances in quantum computing show that \emph{quantum singular value transformation} (QSVT) introduced by~\citet{gilyen2019quantum} has led to a unified framework of the most known quantum algorithms~\cite{martyn2021grand}, including amplitude amplification~\cite{gilyen2019quantum}, quantum walks~\cite{gilyen2019quantum}, phase estimation~\cite{martyn2021grand, rall2021faster}, and Hamiltonian simulations~\cite{low2019hamiltonian, lloyd2021hamiltonian,martyn2022efficient,childs2018toward}. This framework can further be used to develop new quantum algorithms such as quantum entropies estimation~\cite{gilyen2020distributional, subramanian2021quantum, gur2021sublinear}, fidelity estimation~\cite{gilyen2022improved}, ground state preparation and ground energy estimation~\cite{lin2020nearoptimal,lin2022heisenberg, dong2022ground}.

The framework of QSVT was originated from a technique called \emph{quantum signal processing} (QSP)~\cite{low2016methodology, low2017optimal}. By interleaving single-qubit signal unitaries and signal processing unitaries, QSP is able to implement a transformation of the signal in $\mathrm{SU}(2)$. There are several conventions of QSP varied by choosing different signal unitaries. In the construction of QSVT, \citet{gilyen2019quantum} chose the signal unitary to be a reflection, then extended the signal unitary to a multi-qubit block encoding with the idea of qubitization~\cite{low2019hamiltonian}, which naturally leads to a polynomial transformation on the singular values of a block-encoded linear operator. The achievable polynomial transformations in QSVT are decided by reflection-based QSP, which has parity constraints or limitations, i.e., it can implement either an even polynomial or an odd one. Thus, to achieve a general transformation in QSVT, one might have to apply techniques such as linear-combination-of-unitaries~\cite{childs2012hamiltonian} and amplitude amplification~\cite{brassard2002quantum, berry2014exponential}, which take extra resources like ancilla qubits. Based on a convention of QSP using $z$-rotation as the signal unitary \cite{haah2019product, chao2020findinga}, \citet{yu2022power} developed an improved version of QSP that overcomes the parity limitation by adding an extra signal processing unitary, which could implement arbitrary complex trigonometric polynomials on one-qubit quantum systems and also shows insights in understanding quantum neural networks.

For the signal unitary being a $z$-rotation, the corresponding trigonometric QSP naturally possesses the ability of processing phase, which indeed plays a central role in many quantum algorithms.  For example, the trick of \emph{phase kickback}, where the phase of the target qubits is kicked back to the ancilla qubit, is intensively used almost everywhere in quantum computing. With the aid of controlled-unitary gates, many quantum algorithms utilize phase kickback to extract information of large unitary operations from phases of ancilla qubits, including the quantum phase estimation~\cite{kitaev1995quantum, cleve1998quantum}, the swap test~\cite{barenco1997stabilization, buhrman2001quantum}, the Hadamard test~\cite{aharonov2009polynomial}, and the one-clean-qubit model~\cite{knill1998power}. 
Hence, it is of great interest and necessity to explore a generalized formalism that could interpret those phase-related quantum algorithms, which may further leads to new quantum algorithms and helps us better exploit the power of quantum signal processing. Consequently, it is natural to investigate and develop the multi-qubit extension of the improved trigonometric QSP in Ref.~\cite{yu2022power}.

In this work, we generalize the trigonometric QSP and propose a novel algorithmic toolbox called \emph{quantum phase processing} (QPP). This toolbox has the ability to apply arbitrary trigonometric transformations to eigenphases of a unitary operator. Besides achieving the eigenphases transformation, QPP is also natively compatible with indirect measurements, enabling it to extract the eigen-information of quantum systems by measuring a single ancilla qubit. We further employ this toolbox to design efficient quantum algorithms for solving various problems. First, we use the idea of binary search to develop an efficient phase estimation algorithm without using quantum Fourier transform, requiring only one ancilla qubit. Such an algorithm can be applied to solve factoring problems and amplitude estimations. 
Second, we show that QPP can be applied to simulate the time evolution under a Hamiltonian $H$ with access to a block encoding of $H$. 
This method is in the same spirit as QSP-based Hamiltonian simulation~\cite{low2017optimal, low2019hamiltonian}, which also matches the optimal query complexity. 
Third, we propose a generic method to estimate quantum entropies, including the von Neumann entropy, the quantum relative entropy and the family of quantum R\'enyi entropies~\cite{Petz1986a}. Despite the fact that QPP could be combined with amplitude estimation to achieve a quadratic speedup, we present algorithms that repeatedly measure the single ancilla qubit to estimate entropies rather than using amplitude estimation, demonstrating its compatibility with indirect measurements. Overall, QPP provides a powerful algorithmic toolbox to exploit quantum applications and delivers a new perspective on understanding and designing quantum algorithms.

The structure of this paper is presented as follows. Section~\ref{sec:framework of QPP} introduces the structure and principal capability of quantum phase processing. In Section~\ref{sec:unitary}, we propose the novel quantum phase search algorithm, then we analyze the performance of the algorithm and make a brief comparison with previous works. Section~\ref{sec:hamiltonian} interprets the method of Hamiltonian simulation in the QPP structure. In Section~\ref{sec:state}, we develop a generic approach for quantum entropies estimation and further showcase the methods of estimating von Neumann entropies, quantum relative entropies and quantum R\'enyi entropies, then we compare our algorithms with prior methods.  Proofs and further discussions of this work are left in the appendix.

\section{Quantum Phase Processing}\label{sec:framework of QPP}

\subsection{Quantum signal processing}
We first review the concept of quantum signal processing (QSP). QSP was introduced by~\citet{low2016methodology}, who showed how to transform a $2\times2$ signal unitary $R_x(x) = e^{ix\sigma_x}$ into a target unitary whose entries are some transformations of the signal $x$. The approach is to apply the signal unitary $R_x(x)$ interleaved with some signal processing unitaries $R_z(\phi)$, i.e.\,
\begin{equation}
    R_z(\phi_0) R_x(x) R_z(\phi_1) R_x(x) \cdots R_x(x) R_z(\phi_k).
\end{equation}
\citet{gilyen2019quantum} modified the signal unitary as a reflection and explicitly showed that the transformation corresponds to a Chebyshev polynomial of the signal $x$. Another common convention of QSP is to choose the signal unitary to be a $z$-rotation $R_z(x)$ with signal processing unitaries being $x$-rotations $R_x(\phi)$~\cite{haah2019product, chao2020findinga}, which corresponds to a trigonometric polynomial of the signal $x$.
Different types of QSP and their relationships are summarized by~\citet{martyn2021grand}.
Observe that both of these two conventions of QSP have constraints on the achievable polynomials: for the Chebyshev QSP, each entry is a polynomial with either even or odd parity; for the trigonometric QSP, each entry is a trigonometric polynomial (in the exponential form) with either real or imaginary coefficients. As a result, the technique of linear-combination-of-unitaries~\cite{childs2012hamiltonian} might be required for these conventions of QSP to implement a general polynomial transformation, which requires extra ancilla qubits. In a recent work, \citet{yu2022power} overcame the constraints by adding an extra signal processing unitary in each iteration so that one could implement arbitrary complex trigonometric polynomial transformation in a single QSP. Our work is heavily based on this improved trigonometric QSP, which is defined as
\begin{equation}\label{eqn:QSP yzzyz decompose}
    \UQSP^L(x) \coloneqq R_z(\omega)R_y(\theta_0)R_z(\phi_0) \prod_{l = 1}^L R_z(x) R_y(\theta_l)R_z(\phi_l),
\end{equation}
where $L\in \NN$ is the number of layers, $\omega \in \RR$, $\bm\theta = (\theta_0, \theta_1, \ldots, \theta_L)\in \RR^{L+1}$ and $\bm\phi = (\phi_0, \phi_1, \ldots, \phi_L)\in \RR^{L+1}$ are parameters. The quantum circuits of different QSP conventions and their realizable polynomials are presented in Table~\ref{tab:qsp circuits}. 


The following lemma characterizes the correspondence between trigonometric QSP and complex trigonometric polynomials. The initial version of Lemma~\ref{lem:trig_qsp} first introduced in~\cite{yu2022power} is in the form of quantum neural networks. Here we restate the lemma in the formalism of QSP without changing the results.

\begin{lemma}[Trigonometric quantum signal processing
~\cite{yu2022power}]\label{lem:trig_qsp}
There exist $\omega \in \RR$, $\bm\theta = (\theta_0, \theta_1, \ldots, \theta_L)\in \RR^{L+1}$ and $\bm\phi = (\phi_0, \phi_1, \ldots, \phi_L)\in \RR^{L+1}$ such that
\begin{equation}\label{eqn:trig form of QSP_}
    \UQSP^L(x) = \begin{bmatrix}
        P(x) & -Q(x)\\
        Q^*(x) & P^*(x)
    \end{bmatrix}
\end{equation}
if and only if Laurent polynomials $P, Q \in \CC[e^{ix/2}, e^{-ix/2}]$ satisfy
\begin{enumerate}
    \item $\deg(P)\leq L$ and $\deg(Q) \leq L$,\label{item:cond1}
    \item $P$ and $Q$ have parity~\footnote{{For a Laurent polynomial $P\in\CC[z,z^{-1}]$, $P$ has parity $0$ if all coefficients corresponding to odd powers of $z$ are $0$, and similarly $P$ has parity $1$ if all coefficients corresponding to even powers of $z$ are $0$.}} $L \bmod 2$,\label{item:cond2}
    \item $\forall x\in \RR$, $\abs{P(x)}^2 + \abs{Q(x)}^2 = 1$.\label{item:cond3}
\end{enumerate}
\end{lemma}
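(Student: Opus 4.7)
The plan is to prove both directions by induction on the number of layers $L$, following the standard strategy used in the Chebyshev QSP characterization of \citet{gilyen2019quantum} but adapted to the trigonometric setting with three rotation angles per layer.

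For the forward direction, I would start with the base case $L=0$, where $\UQSP^0(x)=R_z(\omega)R_y(\theta_0)R_z(\phi_0)$ is an $x$-independent $\mathrm{SU}(2)$ element; its entries are constants, hence Laurent polynomials of degree $0$, trivially of parity $0$, and the unitarity of $R_z,R_y$ gives $|P|^2+|Q|^2=1$. For the inductive step, write $\UQSP^L(x)=\UQSP^{L-1}(x)\cdot R_z(x)R_y(\theta_L)R_z(\phi_L)$ and compute that right-multiplication by $R_z(x)$ shifts the exponent of $e^{ix/2}$ by $\pm 1$ in each column; the extra $R_y(\theta_L)R_z(\phi_L)$ factor only mixes columns by constants. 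Thus the degree grows by at most $1$, the parity flips from $(L-1)\bmod 2$ to $L\bmod 2$, and unitarity is preserved, yielding conditions \ref{item:cond1}--\ref{item:cond3}. The antidiagonal sign/conjugation structure of the target matrix is maintained by direct matrix multiplication.

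For the reverse direction I would use ``layer stripping.'' Given $P,Q$ of degree at most $L$ with parity $L\bmod 2$ and $|P|^2+|Q|^2=1$, I would extract the coefficients $p_{\pm L}, q_{\pm L}$ of $e^{\pm iLx/2}$ and use the unitarity identity $|P(x)|^2+|Q(x)|^2\equiv 1$ at the top frequency $e^{iLx}$, which forces $p_L\bar{p}_{-L}+q_L\bar{q}_{-L}=0$. This algebraic constraint is exactly what is needed to choose angles $\theta_L,\phi_L$ so that right-multiplying $U(x)$ by $\bigl(R_z(x)R_y(\theta_L)R_z(\phi_L)\bigr)^{-1}$ produces a new matrix whose Laurent polynomial entries $P',Q'$ have degree at most $L-1$ and parity $(L-1)\bmod 2$, while retaining the antidiagonal form and unitarity. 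Iterating $L$ times reduces the problem to the $L=0$ case, where the remaining constant unitary is absorbed into $\omega,\theta_0,\phi_0$ via Euler decomposition of $\mathrm{SU}(2)$; the accumulated angles are the desired QSP parameters in reverse order.

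The main obstacle will be the reverse direction, specifically verifying that a single choice of $(\theta_L,\phi_L)$ can simultaneously annihilate both the $e^{iLx/2}$ and $e^{-iLx/2}$ coefficients across all four matrix entries. The parity assumption is essential here: it ensures the leading terms at $e^{iLx/2}$ and $e^{-iLx/2}$ are the only candidate top-degree contributions, so the four entries share a coupled leading-coefficient structure governed by unitarity. A careful calculation of the leading coefficients of $U(x)R_z(-\phi_L)R_y(-\theta_L)R_z(-x)$ should show that the vanishing conditions reduce to a single two-real-parameter equation solvable from $p_L\bar{p}_{-L}+q_L\bar{q}_{-L}=0$. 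Since an equivalent statement already appears in \cite{yu2022power} (phrased for quantum neural networks), I would if needed invoke that result directly for the $\mathrm{SU}(2)$-parameter-counting step rather than rederive it from scratch.
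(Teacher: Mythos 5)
Your proposal is correct and follows essentially the same route as the paper (which cites \cite{yu2022power} for the proof and spells out the layer-stripping step as Algorithm~\ref{alg:angle finding} in Appendix~\ref{appendix:angle finding}): forward direction by induction on $L$ tracking degree, parity, and unitarity, and reverse direction by peeling off $R_z(x)R_y(\theta_L)R_z(\phi_L)$ to reduce the degree by one. Your observation that the two leading-coefficient vanishing conditions are made simultaneously solvable by the top-frequency constraint $p_L\bar p_{-L}+q_L\bar q_{-L}=0$ (which follows from $\abs{P}^2+\abs{Q}^2\equiv 1$) is exactly the compatibility argument the algorithm relies on.
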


Lemma~\ref{lem:trig_qsp} demonstrates a decomposition of QSP $\UQSP^L(x)$ into complex Laurent polynomials, as well as a construction of QSP from complex Laurent polynomials. {From the second condition of Lemma~\ref{lem:trig_qsp}, it seems that $P$ and $Q$ still have parity constraints, i.e.\ either have parity $0$ or $1$, but in fact Laurent polynomials in $\CC[e^{ix/2}, e^{-ix/2}]$ with parity $0$ are essentially complex trigonometric polynomials in $\CC[e^{ix}, e^{-ix}]$ without parity constraints.} The proof of this theorem also provides an algorithm that calculates angles $\omega, \bm \theta$ and $\bm \phi$ in $\mathcal{O}(\poly(L))$ operations, one can refer to Algorithm~\ref{alg:angle finding} in Appendix~\ref{appendix:angle finding} for more details. There are also many other methods to compute the angles, see e.g., Refs.~\cite{haah2019product, chao2020findinga, silva2022fourierbased, dong2021efficient}. It can be inferred from Lemma~\ref{lem:trig_qsp} that if $P(x)$ satisfies the parity constraint and $\abs{P(x)} \leq 1$ for all $x \in \RR$, then there exists a corresponding $Q(x)$ satisfying the three conditions. The detailed analysis can be found in Appendix~\ref{appendix:angle finding}.


Following the trigonometric QSP construction and decomposition in Ref.~\cite{yu2022power}, we are interested in how to represent the trigonometric polynomial transformation. One way is to project out $P(x)$ from $\UQSP$, i.e., the $\bra{0} \cdot \ket{0}$:

\begin{corollary}\label{coro:trig_qsp_proj}
For any complex-valued trigonometric polynomial $F(x) = \sum_{j=-L}^L c_j e^{ijx}$ with \highlight{$\abs{F(x)} \leq 1$ for all $x \in \RR$}, there exist $\omega \in \RR$ and $\bm\theta,\bm\phi \in \RR^{2L+1}$ such that for all $x\in \RR$,
\begin{equation}\label{eqn:yzzyz_function_}
    \bra{0} \UQSP^{2L}(x) \ket{0} = F(x)
.\end{equation}
\end{corollary}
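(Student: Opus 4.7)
The plan is to apply Lemma~\ref{lem:trig_qsp} directly with the number of layers set to $2L$ and $P(x)\coloneqq F(x)$, then read off the $\bra{0}\cdot\ket{0}$ entry. The key observation that bridges the two statements is a change of variable: any complex trigonometric polynomial $F(x)=\sum_{j=-L}^{L}c_j e^{ijx}$ can be viewed as a Laurent polynomial in $e^{ix/2}$, since $e^{ijx}=(e^{ix/2})^{2j}$. Thus $F\in\CC[e^{ix/2},e^{-ix/2}]$ uses only even powers of $e^{ix/2}$, has degree at most $2L$, and has parity $0 = 2L \bmod 2$. This is precisely the reason the statement of the corollary uses $\UQSPs^{2L}$ rather than $\UQSPs^{L}$.

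With this identification, I would verify the three hypotheses of Lemma~\ref{lem:trig_qsp} for the pair $(P,Q)=(F,Q)$ at layer count $L'=2L$:
\begin{enumerate}
    \item \emph{Degree.} $\deg(F)\le 2L$ by the change of variable above.
    \item \emph{Parity.} $F$ has parity $0=2L\bmod 2$ as noted.
    \item \emph{Normalization.} I need a complementary Laurent polynomial $Q\in\CC[e^{ix/2},e^{-ix/2}]$ with $\deg(Q)\le 2L$, parity $0$, and $|Q(x)|^2=1-|F(x)|^2$ for all $x\in\RR$.
\end{enumerate}
The existence of such a $Q$ is exactly the complementary polynomial construction invoked in the paragraph just before the corollary (``if $P(x)$ satisfies the parity constraint and $|P(x)|\le 1$ $\forall x\in\RR$, then there exists a corresponding $Q(x)$\ldots''), whose detailed proof is deferred to Appendix~\ref{appendix:angle finding}. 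Assuming this result, Lemma~\ref{lem:trig_qsp} yields angles $\omega\in\RR$ and $\bm\theta,\bm\phi\in\RR^{2L+1}$ realizing
\[
\UQSP^{2L}(x)=\begin{bmatrix} F(x) & -Q(x)\\ Q^*(x) & F^*(x)\end{bmatrix},
\]
and projecting onto $\ket{0}\!\bra{0}$ gives $\bra{0}\UQSP^{2L}(x)\ket{0}=F(x)$, which is the desired identity~\eqref{eqn:yzzyz_function_}.

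The only nontrivial step is the existence of the complementary $Q$. Informally, this reduces to factoring the non-negative trigonometric polynomial $1-|F(x)|^2$, for instance via a Fejér--Riesz type argument: one takes the polynomial $R(z)\coloneqq z^{2L}\bigl(1-|F|^2\bigr)$, which is a non-negative polynomial on the unit circle of degree $4L$, and factors it as $R(z)=S(z)S^*(1/z^*)$ for some polynomial $S$ of degree $2L$; setting $Q(x)=e^{-iLx}S(e^{ix/2})$ (after adjusting an overall phase to match parity $0$) gives the required complementary Laurent polynomial. The remaining conditions on degree and parity can then be checked directly. Everything else in the corollary is bookkeeping on top of Lemma~\ref{lem:trig_qsp}.
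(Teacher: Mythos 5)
Your proof follows the paper's argument exactly: regard $F$ as a parity-$0$ Laurent polynomial in $e^{ix/2}$ of degree at most $2L$, invoke the existence of a complementary $Q$ (Lemma~\ref{lem:Laurent complement}, proved via the root-pairing argument of Lemma~\ref{lem:sqrt existence} in Appendix~\ref{appendix:angle finding}), apply Lemma~\ref{lem:trig_qsp} with $L'=2L$, and project onto $\ket{0}\!\bra{0}$. The only minor inaccuracy is your parenthetical suggestion that an ``overall phase'' adjustment enforces parity $0$ on $S$ — the correct reason is that $1-|F|^2$ has only even powers of $e^{ix/2}$ so the Fejér--Riesz factor can itself be taken even — but since you explicitly defer the complementary construction to the appendix result, this does not affect the correctness of the argument.
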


Moreover, based on the fact that any non-negative real-valued trigonometric polynomial can be decomposed as a product between a Laurent polynomial in $\CC[e^{ix/2}, e^{-ix/2}]$ and its complex conjugate, the trigonometric polynomial can be represented by the expectation value of measuring a Pauli-$Z$ observable with respect to the state $\UQSP^L \ket{0}$:

\begin{corollary}\label{coro:trig_qsp_Z_measurement}
For any real-valued trigonometric polynomial $F(x) = \sum_{j=-L}^L c_j e^{ijx}$ with \highlight{$\abs{F(x)} \leq 1$ for all $x \in \RR$}, there exist $\omega \in \RR$ and $\bm\theta,\bm\phi \in \RR^{L+1}$ such that for all $x\in \RR$,
\begin{equation}\label{eqn:yzzyz_function_expectance}
    f_W(x) \coloneqq \bra{0} \UQSP^L(x)^\dagger Z \UQSP^L(x)\ket{0} = F(x)
.\end{equation}
\end{corollary}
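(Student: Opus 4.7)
The plan is to compute $f_W(x)$ in closed form using Lemma~\ref{lem:trig_qsp}, and then reduce the problem to factoring a non-negative trigonometric polynomial. From the matrix form in Eq.~\eqref{eqn:trig form of QSP_}, the first column of $\UQSP^L(x)$ is $\UQSP^L(x)\ket{0} = P(x)\ket{0} + Q^*(x)\ket{1}$, so a direct calculation yields
\begin{equation}
    f_W(x) = \bra{0}\UQSP^L(x)^\dagger Z \UQSP^L(x)\ket{0} = \abs{P(x)}^2 - \abs{Q(x)}^2.
\end{equation}
Combined with condition~(3) of Lemma~\ref{lem:trig_qsp}, namely $\abs{P}^2 + \abs{Q}^2 = 1$, it suffices to construct a $P$ satisfying conditions~(1) and~(2) of the lemma such that $\abs{P(x)}^2 = (1+F(x))/2$, since then $\abs{Q(x)}^2 = (1-F(x))/2$ is automatic and $\abs{P}^2 - \abs{Q}^2 = F$.

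To produce such a $P$, I would apply a Fej\'er--Riesz type factorization to the non-negative real-valued trigonometric polynomial $T(x) \coloneqq (1+F(x))/2$, which is valid because $F$ is real and $\abs{F(x)} \leq 1$. This yields an ordinary polynomial $p(z) = \sum_{k=0}^L a_k z^k$ with $T(x) = \abs{p(e^{ix})}^2$. Setting
\begin{equation}
    P(x) \coloneqq e^{-iLx/2}\, p(e^{ix}) = \sum_{k=0}^L a_k e^{i(2k-L)x/2}
\end{equation}
produces a Laurent polynomial in $\CC[e^{ix/2}, e^{-ix/2}]$ of degree at most $L$, parity $L \bmod 2$, and $\abs{P(x)}^2 = T(x)$, exactly as required by conditions~(1) and~(2).

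With this $P$ in hand, I would invoke the ``if'' direction of Lemma~\ref{lem:trig_qsp} (or equivalently the construction mentioned in the text following the lemma, which produces a valid $Q$ from any $P$ satisfying parity and $\abs{P(x)} \leq 1$) to obtain angles $\omega \in \RR$ and $\bm\theta, \bm\phi \in \RR^{L+1}$ realizing the matrix form in Eq.~\eqref{eqn:trig form of QSP_} with these $P, Q$. Substituting back then gives $f_W(x) = \abs{P(x)}^2 - \abs{Q(x)}^2 = F(x)$ for all $x \in \RR$, completing the argument.

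The main technical obstacle is matching the parity constraint of Lemma~\ref{lem:trig_qsp}: the Fej\'er--Riesz factor $p(e^{ix})$ is an ordinary trigonometric polynomial with exponents $0, 1, \ldots, L$, so its exponents do not share a common parity modulo~$2$. The half-integer prefactor $e^{-iLx/2}$ fixes this by reindexing the exponents of $P$ into the symmetric set $\{-L, -L+2, \ldots, L\}$, all of whose elements share parity $L \bmod 2$, and it leaves $\abs{P}^2$ invariant. A minor point to confirm is that $\abs{P}^2$ has Fourier support only in $[-L, L]$, which holds because the differences of exponents of $P$ lie in this range; this is what permits the decomposition of an arbitrary degree-$L$ real $F$ with $\abs{F}\le 1$ rather than requiring a further parity restriction on $F$ itself.
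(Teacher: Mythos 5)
Your proof is correct and essentially mirrors the paper's: write $f_W = |P|^2 - |Q|^2$, realize the pair via Fej\'er--Riesz type factorizations of $(1\pm F)/2$, and invoke Lemma~\ref{lem:trig_qsp}; the paper obtains $Q$ by factorizing $(1-F)/2$ directly via Lemma~\ref{lem:sqrt existence}, whereas you appeal to the Laurent complementary lemma (Lemma~\ref{lem:Laurent complement}), an immaterial difference. One small imprecision to flag: $p(e^{ix})$ viewed in $\CC[e^{ix/2},e^{-ix/2}]$ already has definite parity $0$ (all exponents even), so the $e^{-iLx/2}$ prefactor does not fix an absence of parity but rather shifts parity $0$ to $L\bmod 2$, which only changes anything when $L$ is odd.
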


\highlight
{
The key of this corollary is that, for any real-valued trigonometric polynomial $F(x)$ with degree $L$ and satisfies $\abs{F(x)}\leq 1$, we could always find a complex root $P\in\CC[e^{ix/2}, e^{-ix/2}]$ such that $PP^* =(1 + F(x))/2$, as proved in Appendix~\ref{appendix:angle finding}, then the expectation value of measuring $Z$ observable turns to be $F(x)$. However, such a root decomposition does not hold in the polynomial space $\CC[x]$. For instance, let $F(x) = x$, there does not exists $P \in \CC[x]$ such that $PP^* = (1 + F(x))/2$. Thus the Chebyshev QSP~\cite{low2016methodology, low2017optimal,gilyen2019quantum} could not directly use the Pauli-$Z$ measurement to obtain a Chebyshev polynomial.
}

As a summary of this subsection, the construction and functions of the trigonometric QSP are briefly reviewed. Particularly, the trigonometric QSP is capable of implementing arbitrary complex trigonometric polynomials by projection or measuring the Pauli-$Z$ observable in a single-qubit system. These properties are the fundamental reasons why its generalized version can achieve improvements in aspects of phase estimation and entropy estimation, as will be demonstrated in later sections.

\subsection{Processing phases of high-dimensional unitaries}
Although the model of QSP provides a systematical method to make arbitrary polynomial transformations, it only works on a qubit-like quantum system.
\citet{gilyen2019quantum} proposed a multi-qubit lifted version of the Chebyshev QSP, called \emph{quantum singular value transformation} (QSVT), which could efficiently apply Chebyshev polynomial transformations to the singular values of a linear operator embedded in a larger unitary. In this work, we consider a similar extension of the trigonometric QSP and establish a \emph{quantum phase processing} (QPP) algorithmic toolbox that could apply arbitrary trigonometric transformations to eigenphases of a unitary matrix $U$. The structure of QPP generalizes the trigonometric QSP by replacing the input signal $x$ with the phases of a higher-dimensional unitary matrix. For an even $L\in \NN$, angle parameters $\omega \in \RR$ and $\bm\theta, \bm\phi \in \RR^{L + 1}$, we define the quantum phase processor of the $n$-qubit unitary $U$ as
\begin{widetext}
\begin{equation}
    \UQPP^L(U) \coloneqq R_z^{(0)}(\omega)R_y^{(0)}(\theta_0) R_z^{(0)}(\phi_0) \left[
    \prod_{l=1}^{L / 2} 
    \begin{bmatrix}
        U^\dagger & 0 \\
        0 & I^{\otimes n}
    \end{bmatrix} R_y^{(0)}(\theta_{2l - 1})R_z^{(0)}(\phi_{2l - 1}) \begin{bmatrix}
        I^{\otimes n} & 0 \\
        0 & U
    \end{bmatrix} R_y^{(0)}(\theta_{2l})R_z^{(0)}(\phi_{2l})
    \right] \label{eqn:qnn_1}
,\end{equation}
\end{widetext}
where $R_y^{(0)}$ and $R_z^{(0)}$ are rotation gates applied on the first qubit. For an odd $L \in \NN$, we apply an extra layer
\[\begin{bmatrix}
  U^\dagger & 0 \\
  0 & I^{\otimes n}
\end{bmatrix} R_y^{(0)}(\theta_L)R_z^{(0)}(\phi_L)\]
to $\UQPP^{L-1}(U)$. 
The quantum circuit of $\UQPP^L(U)$ is shown as in Fig.~\ref{fig:circuit of QPP}.

\begin{figure*}[htbp]
\[ 
\Qcircuit @C=0.5em @R=0.5em {
\lstick{\ket{0}} & \gate{R_z(\phi_L)} & \gate{R_y(\theta_L)} & \ctrl{1}  & \gate{R_z(\phi_{L-1})} & \gate{R_y(\theta_{L-1})} & \ctrlo{1} & \gate{R_z(\phi_{L-2})} & \gate{R_y(\theta_{L-2})} & \qw & &&\cdots&&& & \ctrlo{1} & \gate{R_z(\phi_{0})} & \gate{R_y(\theta_{0})} &\gate{R_z(\omega)} &\qw &\qw\\
\lstick{} &/^n\qw & \qw & \gate{U}  & \qw & \qw  & \gate{U^\dagger} & \qw & \qw & \qw & &&\cdots&&& & \gate{U^\dagger} & \qw & \qw & \qw &\qw
}
\]
\caption{General circuit for quantum phase processing $\UQPP^L(U)$, here the number of layers $L$ is an even integer.}
\label{fig:circuit of QPP}
\end{figure*}

One could find that QPP simply replaces the signal unitary $R_z(x)$ in QSP with interleaved unitaries controlled-$U$ and controlled-$U^\dagger$. We note that such a construction of using controlled-unitaries was first purposed by \citet{low2017optimal}, namely the ``signal transduction'', which was frequently used in various works~\cite{ low2019hamiltonian, silva2022fourierbased, dong2022ground}. The intuition lying behind the extension is that controlled-$U$ and its inverse are naturally multi-qubit analogs of $R_z$ gates. To better understand how rotation gates in the first ancilla qubit process the phase of the target unitary $U$, we analyze the eigenspace decomposition of QPP:

\begin{lemma} [Eigenspace Decomposition of QPP]\label{lem:qpp eigenspace decomp}
    Suppose $U$ is an $n$-qubit unitary with spectral decomposition 
    \begin{equation}
        U = \sum_{j=0}^{2^n - 1} e^{i \tau_j} \ketbra{\chi_j}{\chi_j}
    .\end{equation}
    For all $L \in \NN$, $\omega \in \RR$ and $\bm\theta,\bm\phi \in \RR^{L+1}$, we have
    \begin{equation}
        \UQPP^L(U) = \bigoplus_{j=0}^{2^n - 1} (e^{-i \tau_j/2})^{L\bmod{2}} \cdot \UQSP^L(\tau_j)_{\mathbb{B}_j}
    \end{equation}
    where $\mathbb{B}_j \coloneqq \{\ket{0, \chi_j}, \ket{1, \chi_j}\}$.
\end{lemma}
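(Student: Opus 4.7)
The plan is to exploit the orthogonal decomposition $\CC^2\otimes\CC^{2^n}=\bigoplus_j \mathbb{B}_j$ induced by the spectrum of $U$, and show that every gate in the $\UQPP^L(U)$ circuit preserves each two-dimensional subspace $\mathbb{B}_j=\{\ket{0,\chi_j},\ket{1,\chi_j}\}$. The ancilla rotations $R_y^{(0)}(\cdot)$ and $R_z^{(0)}(\cdot)$ act only on the first qubit and therefore leave the factor $\ket{\chi_j}$ untouched, while the two controlled blocks are of the form $\ketbra{0}{0}\otimes U^\dagger+\ketbra{1}{1}\otimes I$ and $\ketbra{0}{0}\otimes I+\ketbra{1}{1}\otimes U$, which preserve each eigenline $\mathrm{span}\{\ket{\chi_j}\}$ because $\ket{\chi_j}$ is a common eigenvector of $U$ and $U^\dagger$. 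Hence $\UQPP^L(U)$ is block diagonal along $\bigoplus_j \mathbb{B}_j$, and it suffices to identify its restriction to one arbitrary $\mathbb{B}_j$ with (a phase times) $\UQSP^L(\tau_j)$.

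Next I would write down those restrictions in the ordered basis $\{\ket{0,\chi_j},\ket{1,\chi_j}\}$ using the standard convention $R_z(\phi)=\mathrm{diag}(e^{-i\phi/2},e^{i\phi/2})$. The first controlled block restricts to $\mathrm{diag}(e^{-i\tau_j},1)=e^{-i\tau_j/2}R_z(\tau_j)$ and the second restricts to $\mathrm{diag}(1,e^{i\tau_j})=e^{i\tau_j/2}R_z(\tau_j)$, while the ancilla rotations act as themselves. Substituting these identifications into the definition of $\UQPP^L(U)$, each paired iteration indexed by $l=1,\dots,L/2$ becomes
\begin{equation*}
e^{-i\tau_j/2}R_z(\tau_j)R_y(\theta_{2l-1})R_z(\phi_{2l-1})\,e^{i\tau_j/2}R_z(\tau_j)R_y(\theta_{2l})R_z(\phi_{2l}),
\end{equation*}
and the two global phases cancel exactly, leaving the product in the QSP pattern $R_z(\tau_j)R_y(\theta)R_z(\phi)$.

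Multiplying over $l$ and prepending the $R_z(\omega)R_y(\theta_0)R_z(\phi_0)$ layer yields precisely $\UQSP^L(\tau_j)$ as defined in Eq.~\eqref{eqn:QSP yzzyz decompose} when $L$ is even. For odd $L$, the extra final block $\mathrm{diag}(U^\dagger,I)\cdot R_y^{(0)}(\theta_L)R_z^{(0)}(\phi_L)$ contributes one unpaired factor $e^{-i\tau_j/2}$, which is exactly the claimed prefactor $(e^{-i\tau_j/2})^{L\bmod 2}$. Taking the direct sum over $j$ then gives the stated decomposition.

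The bulk of the work is routine algebra; the only step that requires a bit of care is the bookkeeping of global phases that appear when converting $\mathrm{diag}(e^{-i\tau_j},1)$ and $\mathrm{diag}(1,e^{i\tau_j})$ into $R_z(\tau_j)$, since a sign mismatch here would corrupt both the cancellation inside each paired iteration and the leftover factor in the odd-$L$ case. Once the $R_z$ convention is fixed the cancellation is automatic, so this is more a matter of care than of real difficulty.
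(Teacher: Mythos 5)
Your proposal is correct and follows essentially the same route as the paper's own proof: both show that the controlled-$U^\dagger$ and controlled-$U$ blocks restrict to $e^{\mp i\tau_j/2}R_z(\tau_j)$ on each two-dimensional subspace $\mathbb{B}_j$, observe that the two opposite phases cancel within each paired iteration for even $L$, and attribute the leftover $e^{-i\tau_j/2}$ to the unpaired final block when $L$ is odd. Your careful note about the sign bookkeeping in the $R_z$ convention matches the paper's explicit diagonal-matrix computations.
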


The proof of this lemma is deferred to Appendix~\ref{appendix:qpp eigenspace decomp}. Lemma~\ref{lem:qpp eigenspace decomp} shows that the eigenspace of QPP $\UQPP(U)$ coincides with that of the unitary $U$. Using this property, we generalize the single-qubit trigonometric QSP to the multi-qubit QPP that could perform a trigonometric polynomial transformation on the eigenphase of the unitary $U$. In the similar spirit of previous works~\cite{gilyen2019quantum, haah2019product, chao2020findinga}, we could measure the first ancilla qubit and achieve an evolution of the input state upon post-selection of the measurement result being $\ket{0}$, as shown in the follow theorem.


\begin{theorem}[Quantum phase evolution]\label{thm:qpp evolution}
Given an $n$-qubit unitary $U = \sum_{j=0}^{2^n - 1} e^{i \tau_j} \ketbra{\chi_j}{\chi_j}$, for any trigonometric polynomial $F(x) = \sum_{j=-L}^L c_j e^{ijx}$ with $\abs{F(x)} \leq 1$ for all $x \in \RR$, there exist $\omega \in \RR$ and $\bm\theta,\bm\phi \in \RR^{2L+1}$ such that
\begin{equation}
    \UQPP^{2L}(U) = \begin{bmatrix}
        F(U) & \ldots \\
        \ldots & \ldots
    \end{bmatrix}
,\end{equation}
where $F(U) \coloneqq \sum_{j=0}^{2^n - 1} F(\tau_j) \ketbra{\chi_j}{\chi_j}$.
\end{theorem}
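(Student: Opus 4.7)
The plan is to combine the eigenspace decomposition in Lemma~\ref{lem:qpp eigenspace decomp} with the single-qubit polynomial realization in Corollary~\ref{coro:trig_qsp_proj}. Given an arbitrary trigonometric polynomial $F(x) = \sum_{j=-L}^{L} c_j e^{ijx}$ of degree $L$ with $|F(x)|\leq 1$, first invoke Corollary~\ref{coro:trig_qsp_proj} to obtain parameters $\omega \in \RR$ and $\bm\theta,\bm\phi \in \RR^{2L+1}$ such that $\bra{0}\UQSP^{2L}(x)\ket{0} = F(x)$ for every real $x$. Fix these same parameters for the QPP construction $\UQPP^{2L}(U)$.

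Next, apply Lemma~\ref{lem:qpp eigenspace decomp} with the number of layers $2L$. Since $2L$ is even, the prefactor $(e^{-i\tau_j/2})^{2L \bmod 2}$ collapses to $1$, and the decomposition simplifies to
\begin{equation}
    \UQPP^{2L}(U) = \bigoplus_{j=0}^{2^n - 1} \UQSP^{2L}(\tau_j)\big|_{\mathbb{B}_j},
\end{equation}
so $\UQPP^{2L}(U)$ acts on each two-dimensional invariant subspace $\mathbb{B}_j = \{\ket{0,\chi_j},\ket{1,\chi_j}\}$ exactly as the single-qubit QSP unitary $\UQSP^{2L}(\tau_j)$ acts on $\{\ket{0},\ket{1}\}$. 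In particular, the $(0,0)$-entry of this block in the basis $\mathbb{B}_j$ equals $\bra{0}\UQSP^{2L}(\tau_j)\ket{0} = F(\tau_j)$.

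Finally, read off the top-left $2^n\times 2^n$ block of $\UQPP^{2L}(U)$ relative to the ancilla, i.e.\ the operator $(\bra{0}\otimes I^{\otimes n})\,\UQPP^{2L}(U)\,(\ket{0}\otimes I^{\otimes n})$. Expanding $I^{\otimes n} = \sum_j \ketbra{\chi_j}{\chi_j}$ and using the per-block identity above yields
\begin{equation}
    (\bra{0}\otimes I^{\otimes n})\,\UQPP^{2L}(U)\,(\ket{0}\otimes I^{\otimes n}) = \sum_{j=0}^{2^n-1} F(\tau_j)\ketbra{\chi_j}{\chi_j} = F(U),
\end{equation}
which is precisely the claimed top-left block, with the remaining blocks unspecified.

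The only subtle point is bookkeeping: one must verify that the basis ordering used in the single-qubit lemma $\{\ket{0},\ket{1}\}$ is matched consistently with $\{\ket{0,\chi_j},\ket{1,\chi_j}\}$ after the direct sum, so that ``top-left'' means the same thing on both sides. This is a purely notational check given how Lemma~\ref{lem:qpp eigenspace decomp} is stated; once it is in place, no new analytic work is needed and the theorem follows by direct substitution.
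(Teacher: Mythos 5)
Your proposal is correct and follows essentially the same route as the paper's own proof: invoke Corollary~\ref{coro:trig_qsp_proj} to fix the QSP parameters, apply Lemma~\ref{lem:qpp eigenspace decomp} with an even number of layers so the phase prefactor disappears, and project onto the ancilla $\ket{0}$ to read off $F(U)$. The bookkeeping point you flag about matching the basis $\{\ket{0},\ket{1}\}$ with $\{\ket{0,\chi_j},\ket{1,\chi_j}\}$ is exactly what the paper handles by writing $\bra{0}_{\mathbb{B}_j}\UQSP^{2L}(\tau_j)_{\mathbb{B}_j}\ket{0}_{\mathbb{B}_j}$, so no gap remains.
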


The proof of Theorem~\ref{thm:qpp evolution} trivially follows from combining Lemma~\ref{lem:qpp eigenspace decomp} with Corollary~\ref{coro:trig_qsp_proj}, which we defer to Appendix~\ref{appendix:qpp theorems}. The other way is to evaluate trigonometric polynomial on the eigenphases and represent the result by the expectation value of measuring an observable on the first qubit:

\begin{theorem}[Quantum phase evaluation]\label{thm:qpp evaluation}
Given an $n$-qubit unitary $U = \sum_{j=0}^{2^n - 1} e^{i \tau_j} \ketbra{\chi_j}{\chi_j}$ and an $n$-qubit quantum state $\rho$, for any real-valued trigonometric polynomial $F(x) = \sum_{j=-L}^L c_j e^{ijx}$ with \highlight{$\abs{F(x)} \leq 1$ for all $x \in \RR$}, there exist $\omega \in \RR$ and $\bm\theta,\bm\phi \in \RR^{L+1}$ such that $\widehat{\rho} = \UQPP(U) \left(\ketbra{0}{0} \otimes \rho\right)  \UQPP(U)^\dagger$ satisfies
\begin{equation}
    f_V(U) \coloneqq \tr \left[ \left(Z^{(0)}\otimes I^{\otimes n}\right) \cdot \widehat{\rho} \right] = \sum_{j=0}^{2^n - 1} p_{j} F(\tau_j),
\end{equation}
where $p_j = \bra{\chi_j} \rho \ket{\chi_j}$ and $Z^{(0)}$ is a Pauli-$Z$ observable acting on the first qubit.
\end{theorem}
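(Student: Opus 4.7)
The plan is to reduce the multi-qubit statement to the single-qubit Pauli-$Z$ measurement result (Corollary~\ref{coro:trig_qsp_Z_measurement}) by invoking the eigenspace decomposition of QPP (Lemma~\ref{lem:qpp eigenspace decomp}). First I would apply Corollary~\ref{coro:trig_qsp_Z_measurement} to the given real-valued trigonometric polynomial $F$, obtaining angles $\omega\in\RR$ and $\bm\theta,\bm\phi\in\RR^{L+1}$ such that $f_W(x) = \bra{0}\UQSPs^{L}(x)^\dagger Z \UQSPs^{L}(x)\ket{0} = F(x)$ for all $x\in\RR$. These are precisely the angles I will feed into $\UQPPs^{L}(U)$.

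Next I would rewrite the expectation value by commuting the observable across the unitary:
\begin{equation}
f_V(U) = \tr\!\left[\UQPPs(U)^\dagger\bigl(Z^{(0)}\otimes I^{\otimes n}\bigr)\UQPPs(U)\,\bigl(\ketbra{0}{0}\otimes \rho\bigr)\right].
\end{equation}
By Lemma~\ref{lem:qpp eigenspace decomp}, $\UQPPs^{L}(U)$ is block-diagonal with respect to $\mathbb{B}_j = \{\ket{0,\chi_j},\ket{1,\chi_j}\}$, acting on $\mathbb{B}_j$ as $(e^{-i\tau_j/2})^{L\bmod 2}\UQSPs^{L}(\tau_j)$. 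Since $Z^{(0)}\otimes I^{\otimes n}$ also preserves each $\mathbb{B}_j$ (the identity on the second register is transparent to the block structure), the conjugated observable $\UQPPs(U)^\dagger(Z^{(0)}\otimes I^{\otimes n})\UQPPs(U)$ is block-diagonal as well, and on the block $\mathbb{B}_j$ it equals $\UQSPs^{L}(\tau_j)^\dagger Z \UQSPs^{L}(\tau_j)$ since the global phase $(e^{-i\tau_j/2})^{L\bmod 2}$ cancels in the conjugation.

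Then I would expand $\rho = \sum_{j,k}\rho_{jk}\ketbra{\chi_j}{\chi_k}$ in the eigenbasis of $U$, so that $\ketbra{0}{0}\otimes\rho = \sum_{j,k}\rho_{jk}\ketbra{0,\chi_j}{0,\chi_k}$. The key observation is that for $j\neq k$, the operator $\ketbra{0,\chi_j}{0,\chi_k}$ maps $\mathbb{B}_k$ to $\mathbb{B}_j$, and therefore has vanishing trace against any block-diagonal operator; only the diagonal terms $\rho_{jj} = p_j$ survive. For each diagonal term,
\begin{equation}
\tr\!\left[\UQSPs^{L}(\tau_j)^\dagger Z \UQSPs^{L}(\tau_j)\ketbra{0}{0}\right] = f_W(\tau_j) = F(\tau_j),
\end{equation}
so summing over $j$ yields $f_V(U) = \sum_j p_j F(\tau_j)$, as claimed.

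The main (minor) obstacle is making rigorous the vanishing of the off-diagonal contributions; this is where one must be careful to note that it is precisely the $I^{\otimes n}$ factor of the observable $Z^{(0)}\otimes I^{\otimes n}$ that preserves the $\ket{\chi_j}$ register and hence the block structure $\{\mathbb{B}_j\}$. Once that is established, the remainder of the proof is a direct substitution of the single-qubit identity into each $2\times 2$ invariant block.
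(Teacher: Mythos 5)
Your proposal is correct and follows essentially the same approach as the paper's proof: both invoke Corollary~\ref{coro:trig_qsp_Z_measurement} to fix the angles, use Lemma~\ref{lem:qpp eigenspace decomp} to block-diagonalize $\UQPPs^L(U)$ over the invariant subspaces $\mathbb{B}_j$, and observe that expanding $\rho$ in the $\ket{\chi_j}$ basis kills the off-diagonal terms in the trace, leaving $\sum_j p_j F(\tau_j)$. If anything you are more explicit than the paper about two points the paper leaves implicit — the cancellation of the global phase $(e^{-i\tau_j/2})^{L\bmod 2}$ under conjugation, and the reason the cross terms $\ketbra{0,\chi_j}{0,\chi_k}$ ($j\neq k$) contribute nothing against a block-diagonal operator — so these are not gaps but welcome elaborations.
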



Theorem~\ref{thm:qpp evaluation} is proved by combining Lemma~\ref{lem:qpp eigenspace decomp} and Corollary~\ref{coro:trig_qsp_Z_measurement}, as shown in Appendix~\ref{appendix:qpp theorems}. 
Theorem~\ref{thm:qpp evaluation} shows that QPP is natively compatible with indirect measurements, which could represent the target trigonometric polynomial by probabilities of measuring the ancilla qubit. \highlight{Such a property does not emerge in QSVT, since the Chebyshev QSP typically implements the target polynomial transformation using projection rather than the Pauli-$Z$ measurement.} Chebyshev's inequality dictates that an estimate of the expectation value within an additive error $\d$ can be obtained by measuring the ancilla qubit for $\cO(1/\d^2)$ times. Alternatively, one could apply amplitude estimation~\cite{brassard2002quantum} to estimate the value by calling the QPP circuit for $\cO(1/\d)$ times, which is quadratically more efficiently than classical sampling but with a larger circuit depth. In particular, since the ancilla qubit in QPP naturally works as a flag qubit, we could directly apply the iterative amplitude estimation~\cite{grinko2021iterative} on QPP without using extra qubits.


Theorem~\ref{thm:qpp evolution} and Theorem~\ref{thm:qpp evaluation} lie in the heart of QPP algorithmic toolbox, together demonstrating that QPP is a versatile and flexible toolbox for phase-related problems. First, QPP could act in a similar manner to QSVT but transforming eigenphases of a unitary rather than singular values of an embedded linear operator. Moreover, using block encoding and qubitization~\cite{low2019hamiltonian} enables QPP to process eigenvalues of an embedded operator as well, as presented in later applications. Second, QPP could extract the eigen-information after desired transformations by simply measuring the single ancilla qubit, generalizing many indirect measurement methods like the Hadamard test, which could not be natively achieved by QSVT to the best of our knowledge.

For the sake of notation simplicity, we omit the parameters $L,\omega,\bm\theta$ and $\bm\phi$, writing QSP as $W(x)$ and QPP as $V(U)$ in the rest of this paper. The capabilities of QPP are summarized in Figure~\ref{fig:flow chart}. Next we will show that the structure of QPP is a powerful tool for designing efficient and intuitive quantum algorithms for solving various problems, including quantum phase estimation, Hamiltonian simulation, and quantum entropies estimation.

\begin{figure*}[htbp]
    \centering
    \includegraphics[width=0.8\textwidth]{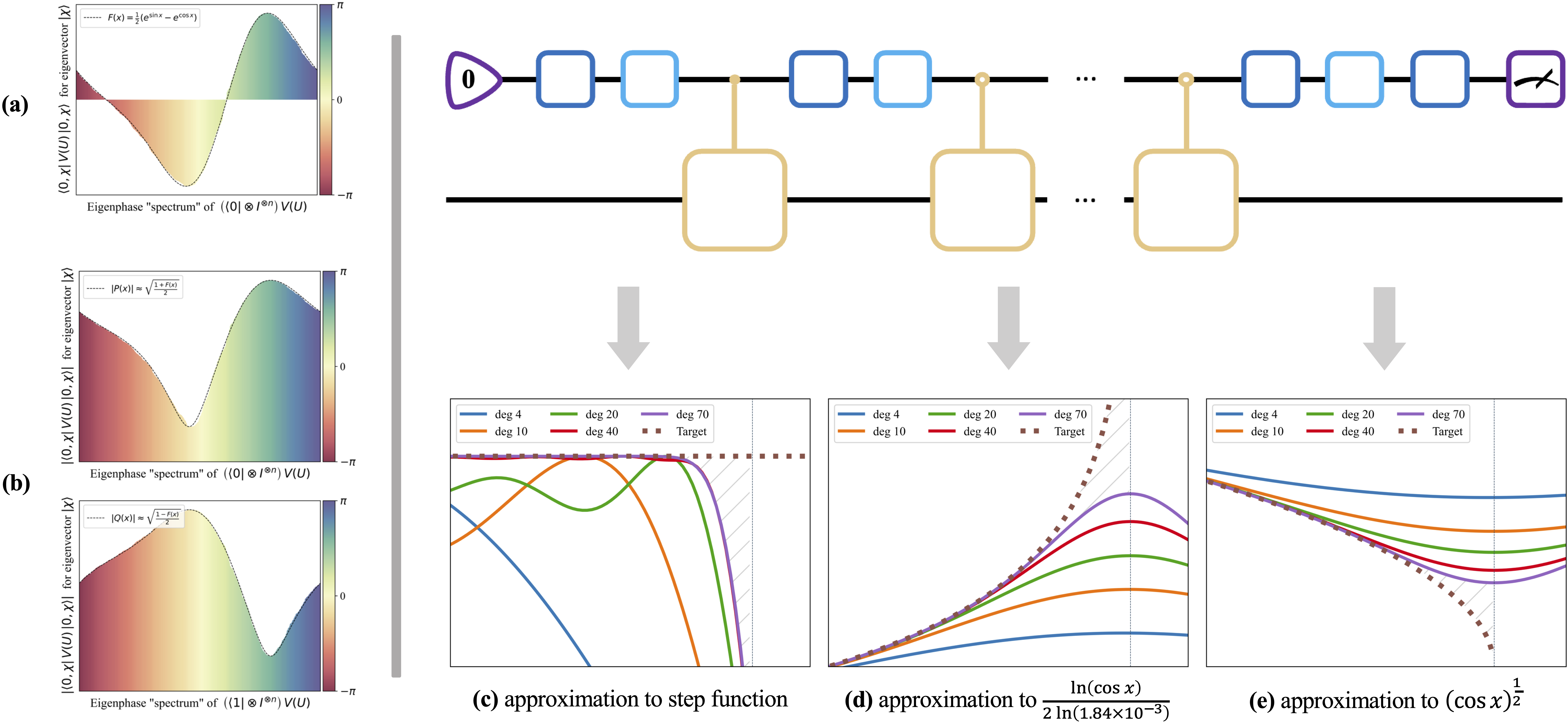}
    \caption{Illustration of the capabilities of quantum phase processing. A trigonometric transformation $F(x) = \frac{1}{2} \left(e^{\sin x} - e^{\cos x} \right)$ of eigenphases of an $n$-qubit unitary $U$ can be represented by a QPP circuit $V(U)$ in two different ways. \textbf{(a)} One approach is to directly simulate the target function. By post-selecting the ancilla qubit to be $\ket{0}$, the function transformations of phases are encoded into the amplitudes of eigenvectors,. \textbf{(b)} The other approach is to retrieve the target function by the expectation value of measuring a Pauli-$Z$ observable on the ancilla qubit. More intuitively, if the input state of circuit $V(U)$ is a maximally mixed state, then the sum of eigenphase transformations is approximately the square difference between two shaded areas in (b) multiplied by a constant. \textbf{(c-e)} Three examples for the QPP circuits to approximate functions near a discontinuity, which will be useful in applications of quantum phase estimation and quantum entropies estimation. In these three subfigures, the quantity ``deg'' refers to the degree of trigonometric polynomials, where every increase of such degree takes an extra layer in a QPP circuit.}
    \label{fig:flow chart}
\end{figure*}


\section{Quantum phase estimation} \label{sec:unitary}

Quantum phase estimation is one of the most important and useful subroutines in quantum computing. The problem of phase estimation is formally defined as follow: Given a unitary $U$ and an eigenstate $\ket{\chi}$ of $U$ with eigenvalue $e^{i\tau}$, estimate the eigenphase $\tau$ up to an additive error $\d$. In this section, we will develop an efficient algorithm for quantum phase estimation based on QPP. Before proceeding, let us do a little warming up to get familiar with the QPP toolbox. We start by considering a simple method of Hadamard test.

\subsection{Warm-up example: the Generalized Hadamard test}\label{sec:Hadamard test}
A common method to estimate the phase of a unitary is the Hadamard test, which solves the following problem: given a unitary $U$ and a state $\ket{\chi}$, estimate $\braket{\chi|U|\chi}$. The Hadamard test uses the measurement result of the ancillary qubit as a random variable whose expected value is the real part $\Re \{\braket{\chi|U|\chi}\}$. It can also estimate the imaginary part $\Im\{\braket{\chi|U|\chi}\}$ by adding a phase gate. We now show that QPP is a generalization of the Hadamard test.

Given a unitary $U$ and a quantum state $\ket{\chi}$ such that $U\ket{\chi} = e^{i \tau} \ket{\chi}$, let $F(x) = \cos(x)$ be the target trigonometric polynomial for QPP. By Theorem~\ref{thm:qpp evaluation}, there exists a single-layer QPP $V(U)$ such that
\begin{equation}
\begin{aligned}
    f_V(U) &= \tr\left[ \bra{0, \chi} V(U)^\dagger \left(Z^{(0)} \otimes I^{\otimes n}\right) V(U) \ket{0, \chi} \right] \\
    &= \cos(\tau) = \Re \{\braket{\chi|U|\chi})\}
,\end{aligned}
\end{equation}
Specifically, by computing the angles $\omega, \bm\theta$ and $\bm\phi$, one can find that the two rotation gates in the first ancilla qubit are essentially Hadamard gates, which means that QPP implements the Hadamard test. Similarly, let the target trigonometric polynomial be $F(x) = \sin(x)$, then QPP can estimate $\sin(\tau) = \Im \{\braket{\chi|U|\chi}\}$. The phase $\tau$ could be obtained from $\cos(\tau)$ and $\sin(\tau)$. Furthermore, one can select trigonometric polynomials other than $\sin(x)$ and $\cos(x)$, which yields a generalization of the Hadamard test. For example, QPP with a trigonometric polynomial $F(x)$ that approximates the function $f(x) = x/2\pi$ could directly estimates the phase $\tau$. 

Although QPP can implement a generalized Hadamard test to estimate the phase, the input state $\ket{\chi}$ is required to be an eigenstate of target unitary. However, quite often we are given a superposition of eigenstates instead of a pure eigenstate, such as in the factoring problem. In addition, measuring the ancilla qubit for $\cO ( 1/\d^2)$ times are necessary to estimate the expected value with an additive error $\d$, despite the fact that the complexity can be improved via amplitude estimation. Can we do better? As demonstrated in the following section, we can further employ QPP to construct a more efficient phase estimation algorithm that accepts a superposition of eigenstates, without using amplitude estimation.

\subsection{Quantum phase searching}
As introduced in Section~\ref{sec:framework of QPP}, QPP can directly process the eigenphases of the target unitary, which allows us to classify the eigenphases. The main idea is to use a trigonometric polynomial to approximate a step function, so that we could utilize QPP to locate the eigenphases by a binary search procedure. We first show that QPP can classify the eigenphases of $U$.

\begin{lemma}[Phase classification]\label{lem:phase classification}
Given a unitary $U = \sum_{j=0}^{2^n - 1} e^{i \tau_j} \ketbra{\chi_j}{\chi_j}$, then for any $\Delta \in (0, \pi)$ and $\eps \in (0, 1)$, there exists a QPP circuit $V(U)$ of $L = \cO \left(\frac{1}{\Delta}\log\frac{1}{\eps}\right)$ layers such that $V (U) \ket{0, \chi_k}$ is
\begin{equation}
    \begin{dcases}
      \sqrt{1 - \eps_k} \ket{0, \chi_k} + \sqrt{\eps_k} \ket{1, \chi_k} & \text{if $\tau_k \in [\Delta, \pi-\Delta)$,}\\
      \sqrt{\eps_k} \ket{0, \chi_k} + \sqrt{1 - \eps_k} \ket{1, \chi_k} & \text{if $\tau_k \in(-\pi + \Delta, -\Delta]$,}
    \end{dcases}
\end{equation}
for $0 \leq k < 2^n$, where $\eps_k \in (0, \eps)$.
\end{lemma}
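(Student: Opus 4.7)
The plan is to realize, via Theorem~\ref{thm:qpp evolution}, a nonnegative real trigonometric polynomial that approximates the indicator function of the interval $[\Delta,\pi-\Delta]$ on the eigenphases of $U$. Concretely, I would first construct a trigonometric polynomial $F(x)$ of degree $\cO\!\left(\tfrac{1}{\Delta}\log\tfrac{1}{\eps}\right)$ satisfying $0\leq F(x)\leq 1$ for all $x\in\RR$, $F(x)\geq 1-\eps/2$ whenever $x\in[\Delta,\pi-\Delta]$, and $F(x)\leq \eps/2$ whenever $x\in[-\pi+\Delta,-\Delta]$. Such an $F$ can be built from a standard polynomial approximation of the sign (or step) function, e.g.\ the error-function-based construction of Low--Chuang or the Jackson/Chebyshev approximants used in the QSVT literature, transferred to the trigonometric setting through a change of variable such as $y=\cos(x/2)$, followed by affine shifting and clipping to remain in $[0,1]$. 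The degree scaling $\cO\!\left(\tfrac{1}{\Delta}\log\tfrac{1}{\eps}\right)$ is exactly the classical degree bound for approximating a step function with a ``gap'' of width $\Delta$ up to error $\eps$.

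With $F$ in hand, Theorem~\ref{thm:qpp evolution} supplies angles $\omega,\bm\theta,\bm\phi$ yielding a QPP circuit $V(U)$ whose top-left block is $F(U)=\sum_k F(\tau_k)\ketbra{\chi_k}{\chi_k}$, and whose number of layers is the same $\cO\!\left(\tfrac{1}{\Delta}\log\tfrac{1}{\eps}\right)$. By Lemma~\ref{lem:qpp eigenspace decomp}, each two-dimensional subspace $\mathbb{B}_k=\{\ket{0,\chi_k},\ket{1,\chi_k}\}$ is invariant under $V(U)$, and the restriction to $\mathbb{B}_k$ is a unitary of the form given by Lemma~\ref{lem:trig_qsp}. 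Hence
\begin{equation*}
V(U)\ket{0,\chi_k} \;=\; F(\tau_k)\ket{0,\chi_k} + G(\tau_k)\ket{1,\chi_k},
\end{equation*}
with $\abs{F(\tau_k)}^2+\abs{G(\tau_k)}^2=1$ by unitarity. For $\tau_k\in[\Delta,\pi-\Delta)$ the lower bound on $F$ gives $\abs{F(\tau_k)}^2\geq(1-\eps/2)^2\geq 1-\eps$, so $\abs{G(\tau_k)}^2=\eps_k$ for some $\eps_k\in(0,\eps)$; for $\tau_k\in(-\pi+\Delta,-\Delta]$ the two amplitudes exchange roles. Absorbing the phases of $F(\tau_k)$ and $G(\tau_k)$ into the basis labels (or equivalently stating the lemma at the level of measurement probabilities) produces the claimed decomposition with positive square roots.

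The main obstacle is the polynomial-construction step: exhibiting a trigonometric polynomial that is simultaneously (i) sandwiched in $[0,1]$, (ii) within $\eps/2$ of $1$ on $[\Delta,\pi-\Delta]$ and within $\eps/2$ of $0$ on $[-\pi+\Delta,-\Delta]$, and (iii) of degree $\cO\!\left(\tfrac{1}{\Delta}\log\tfrac{1}{\eps}\right)$. This is a well-trodden path in the QSP/QSVT literature, but verifying all three conditions simultaneously requires care (one typically approximates $\tfrac{1}{2}(1+\mathrm{sgn}(\cdot))$ and then clips or symmetrizes to enforce the range constraint). Once $F$ is constructed, the rest of the proof is a direct application of Theorem~\ref{thm:qpp evolution} and Lemma~\ref{lem:qpp eigenspace decomp}, so no further technical work is needed.
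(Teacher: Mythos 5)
Your plan follows the same outline as the paper's proof: construct a trigonometric polynomial of degree $\cO\bigl(\tfrac{1}{\Delta}\log\tfrac{1}{\eps}\bigr)$ that approximates a two-level step function on the eigenphase interval, then realize it with a QPP circuit and read off the complementary amplitude from unitarity. The only difference in structure is cosmetic: you put an approximate indicator $F\approx\mathbf{1}_{[\Delta,\pi-\Delta]}$ directly into the $\ket{0}$ amplitude via Theorem~\ref{thm:qpp evolution}, whereas the paper approximates the square wave $f(x)\approx\sgn(\sin x)$ (Lemma~\ref{lem:trig sign approx}) and then constructs $P,Q$ with $\abs{P}^2=(1+f)/2$ and $\abs{Q}^2=(1-f)/2$ through Lemma~\ref{lem:sqrt existence} and Lemma~\ref{lem:trig_qsp}, combined with the eigenspace decomposition of Lemma~\ref{lem:qpp eigenspace decomp}. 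These are identical up to the affine shift $F=(1+f)/2$, and both give the claimed degree scaling. Your remark that the statement should be read at the level of measurement probabilities is also correct, and the paper implicitly does the same: the $P,Q$ produced by the Laurent root decomposition are complex, so only $\abs{P(\tau_k)}^2$ and $\abs{Q(\tau_k)}^2$ are meaningful.

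The one place your sketch would genuinely fail as written is the proposed change of variable $y=\cos(x/2)$. This is an even function of $x$ on $[-\pi,\pi]$, so any polynomial in $\cos(x/2)$ is even in $x$ and necessarily takes identical values on $[\Delta,\pi-\Delta]$ and on the mirrored interval $[-\pi+\Delta,-\Delta]$; the lemma requires the output state to be close to $\ket{0}$ on the former and close to $\ket{1}$ on the latter, so an even trigonometric polynomial cannot do the job. The paper instead composes the degree-$\cO\bigl(\tfrac{1}{\Delta}\log\tfrac{1}{\eps}\bigr)$ Chebyshev approximant of $\sgn$ (Lemma~25 of~\cite{gilyen2019quantum}) with $\sin x$: since $\sin$ is odd and changes sign precisely at $x=0$ and $x=\pm\pi$, the function $\sgn(\sin x)$ is exactly the required classifier, and expanding each $T_j(\sin x)$ in $e^{\pm ix}$ yields a trigonometric polynomial of the same degree (with $\delta=\sin\Delta=\Theta(\Delta)$). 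Replacing $\cos(x/2)$ by $\sin x$ turns your outline into essentially the paper's proof.
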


The proof of Lemma~\ref{lem:phase classification} is deferred to Appendix~\ref{appendix:phase classification}. By phase kickback, measuring the ancilla qubit decides which subinterval the eigenphase $\tau$ belongs to with probability at least $1-\eps$. Next we apply a phase shift $e^{i\z}$ to $U$ to move to the middle point of the designated subinterval, so that $V(e^{i\z}U)$ determines the next subinterval. Using this fascinating property, repeating the binary search procedure shrinks the phase interval until QPP cannot decide next subintervals, i.e.\ $\tau \in [\z_l, \z_r]$ and $\abs{\z_r - \z_l} \approx 2\Delta$. See the phase interval search (PIS) procedure in Algorithm~\ref{alg:interval search} for details.

\begin{figure}[htb]
    \centering
\begin{algorithm}[H]
\caption{Phase Interval Search (PIS)}
\label{alg:interval search}
\begin{algorithmic}[1]
\REQUIRE A unitary $U$, an eigenstate $\ket{\c}$ of $U$ with eigenvalue $e^{i\tau}$, an interval $[\z_l, \z_r]$, a $\Delta \in (0, \frac{1}{2})$, an $\eps \in (0,1)$, and an integer $\cQ$.
\ENSURE An updated interval $[\z_l, \z_r]$ such that $\tau \in [\z_l, \z_r]$ and $\z_r - \z_l = 2(\Delta + \frac{\pi}{2^{\cQ+1}})$.
\FOR{ $j=0\ldots \cQ-1$}
\STATE Set the middle point $\z_m =\frac{\z_l+\z_r}{2}$.
\STATE Construct QPP circuit $\UQPP(e^{-i \z_m} U)$ in Lemma~\ref{lem:phase classification} according to $\Delta$ and $\eps$.
\STATE Apply the circuit to the state $\ket{0, \c}$ and measure the ancilla qubit. If $\z_r - \z_l > 2\pi - 2\Delta$, update
\begin{equation}
    [\z_l, \z_r] \leftarrow
    \begin{dcases}
        [\z_m - \Delta, \z_r + \Delta], & \text{if the outcome is 0} \\
        [\z_l - \Delta, \z_m + \Delta], & \text{if the outcome is 1}
    \end{dcases}
;\end{equation}
otherwise
\begin{equation}
    [\z_l, \z_r] \leftarrow
    \begin{dcases}
        [\z_m - \Delta, \z_r], & \text{if the outcome is 0} \\
        [\z_l, \z_m + \Delta], & \text{if the outcome is 1}
    \end{dcases}
.\end{equation}
\ENDFOR
\STATE Output the interval $[\z_l, \z_r]$.
\end{algorithmic}
\end{algorithm}
\end{figure}

The phase interval search procedure will shrink the phase interval to a length $2(\Delta + \frac{\pi}{2^{\cQ+1}})$ with probability at least $(1-\eps)^\cQ$, where $\cQ$ is the number of repetitions. Now the phase interval is too narrow for phase classification. We apply QPP on $(e^{i\z}U)^d$ for some appropriate integer $d$ so that the binary search procedure can continue to locate the amplified phase $d\tau \in [d\z_l, d\z_r]$. Repeating the entire procedure gives an estimation of phase $\tau$ up to required precision $\d$, as shown in Algorithm~\ref{alg:eigenphase search}.

\begin{figure}[htb]
    \centering
    \begin{algorithm}[H]
\caption{Quantum Phase Search Algorithm (QPS)}
\label{alg:eigenphase search}
\begin{algorithmic}[1]
    \REQUIRE A unitary $U$, an eigenstate $\ket{\c}$ of $U$ with eigenvalue $e^{i\tau}$, a $\Delta \in (0, \frac{1}{2})$, an $\eps \in (0, 1)$ and a $\d \in (0, 1)$.
    \ENSURE A phase $\bar\tau \in \RR$ such that $|\bar{\tau} - \tau| < \delta$.
\STATE Set $\cQ \leftarrow \lceil \log\frac{2\pi}{1 - 2\Delta} \rceil$, $\bar{\Delta} \leftarrow \Delta + \frac{\pi}{2^{\cQ+1}}$, $\cT \leftarrow \lceil \frac{\log\d}{\log\Delta} \rceil$ and $d \leftarrow \lfloor 1/\bar{\Delta} \rfloor$. Initialize the interval $[\z_l, \z_r] \leftarrow [-\pi, \pi]$.
\FOR{$t=0,\ldots, \cT-1$}
\STATE Update the interval by the phase interval search procedure,
\begin{equation}
    [\z_l, \z_r] \leftarrow \PIS \left(U, \ket{\c}, [\z_l, \z_r], \Delta, \frac{\eps}{\cQ\cT}, \cQ\right).
\end{equation}
\STATE Store the middle point of interval $\z_m^{(t)} \leftarrow \frac{\z_l + \z_r}{2}$.
\STATE Update $U \leftarrow \left[e^{-i \z_m^{(t)}} U \right]^d$ and 
\begin{equation}
    [\z_l, \z_r] \leftarrow [d(\z_l - \z_m^{(t)}), d(\z_r - \z_m^{(t)})]
.\end{equation}
\ENDFOR
\STATE Output $\bar\tau \leftarrow \sum_{t=0}^{\cT - 1}\z_m^{(t)} d^{-t}$.
\end{algorithmic}
\end{algorithm}
\end{figure}


Note that Algorithm~\ref{alg:eigenphase search} could accept a superposition of eigenstates as the input, since eigenstates whose eigenvalues disagree with measurement results of the ancilla qubit will be filtered out, and finally the state in the main register converges to a single eigenstate of $U$ at the end of the quantum phase search algorithm. We conclude the above discussions in Theorem~\ref{thm:qps complexity}.

\begin{theorem} [Complexity of Quantum Phase Search] \label{thm:qps complexity}
    Given an $n$-qubit unitary $U$ and an eigenstate $\ket{\c}$ of $U$ with eigenvalue $e^{i\tau}$,  Algorithm~\ref{alg:eigenphase search} can use one ancilla qubit and $\cO \left(\frac{1}{\delta} \log \left(\frac{1}{\eps} \log\frac{1}{\d} \right) \right)$ queries to controlled-$U$ and its inverse to obtain an estimation of $\tau$ up to $\d$ precision with probability at least $1 - \eps$.
\end{theorem}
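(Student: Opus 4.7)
The plan is to decompose the analysis into three layers: correctness of a single call to \PIS, propagation of the accuracy estimate through the outer loop of Algorithm~\ref{alg:eigenphase search}, and finally a bookkeeping of queries and failure probabilities.

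First I would analyze \PIS. At the $j$-th inner iteration Lemma~\ref{lem:phase classification} applied to $e^{-i\zeta_m}U$ gives a QPP circuit of $\mathcal{O}\bigl(\tfrac{1}{\Delta}\log\tfrac{1}{\eps'}\bigr)$ layers whose ancilla measurement, with probability $\geq 1-\eps'$, correctly tells which half of $[\zeta_l,\zeta_r]$ the shifted phase $\tau-\zeta_m$ lies in, up to a $2\Delta$-wide ambiguity band around $\zeta_m$. I would verify, by induction on $j$, that each of the two branches of Algorithm~\ref{alg:interval search} (the wrap-around branch triggered when $\zeta_r-\zeta_l>2\pi-2\Delta$, and the regular branch) preserves $\tau\in[\zeta_l,\zeta_r]$ when the measurement is correct, and roughly halves the length. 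After $\cQ=\lceil\log(2\pi/(1-2\Delta))\rceil$ iterations the interval has length $\leq 2\bar\Delta=2(\Delta+\pi/2^{\cQ+1})$, with failure probability $\leq \cQ\eps'$ by the union bound.

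Next I would handle the outer loop's precision. Write $\tau_t$ for the eigenphase of the current unitary acting on $\ket\c$, so $\tau_0=\tau$; conditional on the preceding \PIS\ succeeding, $|\tau_t-\zeta_m^{(t)}|\leq\bar\Delta$. The update $U\mapsto[e^{-i\zeta_m^{(t)}}U]^d$ enforces $\tau_{t+1}=d(\tau_t-\zeta_m^{(t)})$, and since $d\bar\Delta\leq 1<\pi$ no $2\pi$-wraparound is triggered, so the next \PIS\ still receives a valid starting interval around its amplified target. Telescoping yields $\tau=\sum_{t=0}^{\cT-1}\zeta_m^{(t)}d^{-t}+\tau_{\cT} d^{-\cT}$, so the returned $\bar\tau$ satisfies $|\tau-\bar\tau|\leq d\bar\Delta\cdot d^{-\cT}\leq \Delta^{\cT}$, which the choice $\cT=\lceil\log\delta/\log\Delta\rceil$ makes at most $\delta$.

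Finally, for resources: fix $\Delta$ to an absolute constant (say $1/4$), so $\cQ=\cO(1)$ and $\cT=\cO(\log(1/\delta))$. Feeding $\eps'=\eps/(\cQ\cT)$ into each \PIS\ call and union-bounding over all $\cQ\cT$ classifications gives overall success probability $\geq 1-\eps$; only one ancilla qubit is ever needed because Lemma~\ref{lem:phase classification} is single-ancilla and the qubit is measured and reset between iterations. For query cost, at outer iteration $t$ one layer of \PIS's QPP circuit invokes controlled-$[e^{-i\zeta_m^{(t)}}U]^d$ at cumulative amplification $d^t$, hence uses $d^t$ queries to controlled-$U$ or its inverse, and each \PIS\ has $\cO(\tfrac{\cQ}{\Delta}\log\tfrac{\cQ\cT}{\eps})$ layers; the resulting geometric series is dominated by $t=\cT-1$, giving total queries $\cO\bigl(d^{\cT}\log(\cT/\eps)\bigr)=\cO\bigl(\tfrac{1}{\delta}\log\tfrac{\log(1/\delta)}{\eps}\bigr)$. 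The main obstacle I anticipate is the precision-propagation step: one must carefully justify the modular-$2\pi$ behaviour under $\tau_t\mapsto d(\tau_t-\zeta_m^{(t)})$ and confirm that the margin $\bar\Delta$, not merely $\Delta$, is preserved across the two asymmetric branches inside \PIS\ so that the factor-$d$ amplification is lossless.
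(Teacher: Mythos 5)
Your decomposition — PIS correctness via Lemma~\ref{lem:phase classification} and the union bound, precision propagation by telescoping $\tau = \sum_t \z_m^{(t)} d^{-t} + \tau_\cT d^{-\cT}$, and a geometric sum $\sum_t \cQ\, d^t L$ for the query count with $\eps' = \eps/(\cQ\cT)$ — is exactly the route taken in the paper's Appendices~\ref{appendix:PIS}--\ref{appendix:qps complexity}, and your observation that $d\bar\Delta\leq 1$ keeps the amplified phase inside $(-\pi,\pi)$ is the same justification the paper gives for why no wrap-around occurs after rescaling. One slip worth fixing in your precision step: the inequality $d\bar\Delta\cdot d^{-\cT}\leq\Delta^{\cT}$ does not hold, since $d=\lfloor 1/\bar\Delta\rfloor < 1/\Delta$ implies $d^{-1}>\Delta$ and hence $d^{-\cT}>\Delta^{\cT}$; the appendix instead bounds the residual directly by $\bar\Delta/d^{\cT-1}\leq d^{-\cT}$ and takes $\cT=\lceil\log(1/\delta)/\log d\rceil$ so that $d^{-\cT}\leq\delta$ (note this is inconsistent with Algorithm~\ref{alg:eigenphase search}'s stated $\cT=\lceil\log\delta/\log\Delta\rceil$ — an internal discrepancy in the paper that your write-up inherits). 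Since $d\geq 2$ is a constant once $\Delta$ is fixed, either choice gives $\cT=\Theta(\log(1/\delta))$, so your final query complexity and the theorem's statement are unaffected.
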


We can see that the quantum phase search algorithm provides a nearly quadratic speedup compared to the Hadamard test method in Section~\ref{sec:Hadamard test}. More importantly, the phase search algorithm does not require the input state being an eigenstate of $U$, making it more versatile for solving specific problems like the factoring problem. 

\subsection{Comparison to related works}



The quantum phase estimation method originally purposed to solve the Abelian Stabilizer Problem~\cite{kitaev1995quantum, kitaev2002classical} was found to work for general unitaries. The most well-known version of phase estimation method~\cite{kitaev1995quantum} queries the controlled-$U$ for $\cO(\frac{1}{\d})$ times and applies the inverse quantum Fourier transform (QFT)~\cite{coppersmith2002approximate} to estimate the eigenphase of a unitary $U$ with precision $\d$ and success probability at least $4/\pi^2$. The success probability can be boosted to $1-\eps$ by using additional $\cO(\log\frac{1}{\eps})$ ancilla qubits~\cite{cleve1998quantum}. Introducing classical feed-forward process~\cite{kitaev1995quantum, Griffiths1996} can further reduce the number of ancilla to one without increase of circuit depth. Recent studies~\cite{martyn2021grand, rall2021faster} utilize the structure of QSVT to reinterpret phase estimation methods, bringing potential trade-offs among precision, query complexity and number of ancilla qubits. The quantum phase search method proposed in this work is based on the intuitive idea of binary search, which is fundamentally different from the previous QFT-based algorithms. {\citet{dong2022ground} proposed a phase estimation algorithm, in a similar spirit of ours, to the case $U\ket{\psi}=e^{-i\lambda}\ket{\psi}$, where $\lambda\in[\eta, \pi-\eta]$ for some constant $\eta>0$. However, this prior condition on the eigenvalue is usually not satisfied for an arbitrary unitary and initial state.} Here we compare our phase search method to some previous phase estimation algorithms with respect to the query complexity and number of ancilla qubits under the same precision and success probability, which is shown in Table~\ref{table:QPE algs}. One can see that the quantum phase search method achieves the best query complexity while requiring the least ancilla qubits, turning out to be an efficient phase estimation algorithm. 

\begin{table*}[tp]
\centering
\setlength{\tabcolsep}{1em}
\begin{tabular}{lccccc}
\toprule
Methods for QPE & Queries to controlled-$U$ & \# of ancilla qubits & Success probability & Precision \\
\midrule
QFT-based~\cite{kitaev1995quantum, cleve1998quantum, Nielsen2010} & $ \cO \left( \frac{1}{\d}\left(1 + \frac{1}{\eps}\right) \right)$ & $ \ceil*{\log{\frac{1}{\d}} + \log(2+\frac{1}{\eps})}$ & {} & {} \\
\addlinespace
Semi-classical QFT-based~\cite{kitaev1995quantum, Griffiths1996} & $\cO \left( \frac{1}{\d}\left(1 + \frac{1}{\eps}\right) \right)$ & $1$ & \multirow{3}{*}{$ 1-\eps$} & \multirow{3}{*}{$ \d$} \\
\addlinespace
QSVT-based~\cite{martyn2021grand, rall2021faster} & $ \widetilde\cO \left(\frac{1}{\d} \log\left(\frac{1}{\eps}\right) \right)$ & $ \ceil*{\log{\frac{1}{\d}}} + 3$ (or $3$) & {} & {} \\
\addlinespace
QPP-based (in Theorem~\ref{thm:qps complexity}) & \multicolumn{1}{c}{$ \widetilde\cO \left(\frac{1}{\d} \log\left(\frac{1}{\eps}\right) \right)$} & $1$ & {} & {}\\
\bottomrule
\end{tabular}
\caption{Comparison of QPE Algorithms. In the query complexity, the $\widetilde\cO$ notation omits $\log\log$ factors. The number of ancilla qubits is the total number of qubits used other for the system of $U$.}\label{table:QPE algs}
\end{table*}


\section{Quantum entropy estimation} \label{sec:state}
Quantum entropy is used to characterize the randomness and disorder of a quantum system, which has various theoretical and experimental applications of relevance. Estimating the entropy of a quantum system is an important problem in quantum information science. Classical methods of estimating the quantum entropies require the density matrix of a quantum state, which is costly, especially when the size of system is large. Recent works proposed quantum algorithms that could efficiently estimate quantum entropies~\cite{gilyen2020distributional,subramanian2021quantum,gur2021sublinear}, showing potential quantum speedups over the classical methods. The motivating idea behind these quantum approaches is the purified quantum query model~\cite{low2019hamiltonian}, which prepares a purification of a mixed state $\rho$. {The purified query model can apply to cases where the states are generated by a quantum circuit and have applications in many tasks \cite{belovs2019quantum,gilyen2020distributional}.} Formally, the purified quantum query oracle of a mixed state $\rho$ is a unitary $U_\rho$ acting as
\begin{equation}\label{eqn:purified query}
    U_\rho \ket{0}_A\ket{0}_B = \ket{\Psi_\rho}_{AB} = \sum_{j=0}^{2^n-1} \sqrt{p_j} \ket{\psi_j}_A\ket{\phi_j}_B
\end{equation}
such that $\tr_B(\ketbra{\Psi_\rho}{\Psi_\rho}) = \rho$, where $\{\ket{\psi_j}\}$ and $\{\ket{\phi_j}\}$ are sets of orthonormal states on the system $A$ and $B$ respectively. Using the qubitization method in~\cite{low2019hamiltonian}, such an oracle model can be used to build a qubitized block encoding $\widehat{U}_\rho$ of the target state $\rho$. We show the detailed construction of $\widehat{U}_\rho$ in Appendix~\ref{appendix:dm block enc}. 


Consequently, it is reasonable and compelling to investigate whether QPP, the structure designed for unitary phase processing, could contribute to the improvement of quantum algorithms for quantum entropy estimation. Note that quantum entropies of a quantum state $\rho$ can be interpreted as the corresponding classical entropies of the eigenvalues of $\rho$. If one could find \tps{} $F(x)$ that approximate the classical entropic functions,
then quantum entropies can be naturally estimated via phase evaluation of $\widehat{U}_\rho$ in Theorem~\ref{thm:qpp evaluation} by the spectral correspondence between $\rho$ and $\widehat{U}_\rho$. Specifically, the following theorem is the basic principle of the QPP-based quantum entropies estimation, and the proof of which is deferred to Appendix~\ref{appendix:state entropy estimation}.

\begin{theorem} \label{thm:state entropy estimation}
    Let $\ket{\Psi_\rho}_{AB}$ be a purification of an $n$-qubit state $\rho$ and $\widehat{U}_\sigma$ be a qubitized block encoding of an $n$-qubit state $\sigma$ with $m$ ancilla qubits. For any real-valued polynomial $f(x) = \sum_{k=0}^L c_j x^k$ with \highlight{$\abs{F(x)} \leq 1$ for all $x \in \RR$}, there exists a QPP circuit $\UQPPs(\widehat{U}_\sigma)$ of $L$ layers such that 
    \begin{align}
        \langle Z^{(0)} \rangle_{\ket{\Phi}} = \tr \left(\rho f(\sigma) \right)
    ,\end{align}
    where $\ket{\Phi} = \left(\UQPPs(\widehat{U}_\sigma) \otimes I_B\right)\ket{0^{\otimes (m + 1)}} \ket{\Psi_\rho}_{AB}$ and the polynomial on a quantum state is defined as $f(\sigma) = \sum_{k=0}^L c_j \sigma^k$.
\end{theorem}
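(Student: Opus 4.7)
The plan is to reduce the statement to a direct invocation of Theorem~\ref{thm:qpp evaluation} applied to $\widehat{U}_\sigma$, by (i) choosing a trigonometric polynomial whose values on the eigenphases of $\widehat{U}_\sigma$ reproduce $f$ evaluated on the eigenvalues of $\sigma$, and (ii) absorbing the purifying register $B$ into an effective mixed input on the block-encoding ancillas plus system $A$. First I would set $\sigma=\sum_k \lambda_k\ketbra{v_k}{v_k}$ and recall from Appendix~\ref{appendix:dm block enc} that the qubitized block encoding $\widehat{U}_\sigma$ is block-diagonal with, for each $k$, a two-dimensional invariant subspace spanned by $\ket{0^m}\ket{v_k}$ and some orthogonal vector, on which it acts as a rotation with eigenphases $\pm\arccos(\lambda_k)$. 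Writing the corresponding eigenvectors as $\ket{\chi_k^{\pm}}$, one has $\ket{0^m}\ket{v_k}=(\ket{\chi_k^{+}}+\ket{\chi_k^{-}})/\sqrt{2}$.

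Next I would define the even real trigonometric polynomial $F(x):=f(\cos x)=\sum_{k=0}^L c_k \cos^k(x)$. Each $\cos^k(x)$ expands into $\cos(jx)$ for $0\le j\le k$, so $F$ is a real-valued trigonometric polynomial of degree at most $L$, and $|F(x)|\le 1$ for all $x\in\RR$ follows directly from the hypothesis on $f$. Theorem~\ref{thm:qpp evaluation} then supplies angles producing a QPP circuit $\UQPPs(\widehat{U}_\sigma)$ of $L$ layers such that, for any input density operator $\xi$ on the system of $\widehat{U}_\sigma$,
\begin{equation}
    \tr\!\left[Z^{(0)}\,\UQPPs(\widehat{U}_\sigma)\bigl(\ketbra{0}{0}\otimes\xi\bigr)\UQPPs(\widehat{U}_\sigma)^\dagger\right]=\sum_j p_j(\xi)\,F(\tau_j),
\end{equation}
where $\tau_j$ runs over the eigenphases of $\widehat{U}_\sigma$ and $p_j(\xi)=\bra{\chi_j}\xi\ket{\chi_j}$. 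Because $Z^{(0)}$ and $\UQPPs(\widehat{U}_\sigma)$ act trivially on the purifying register $B$, the expectation $\langle Z^{(0)}\rangle_{\ket{\Phi}}$ equals the right-hand side above with $\xi=\ketbra{0^m}{0^m}\otimes\rho$, obtained by tracing $B$ out of $\ketbra{\Psi_\rho}{\Psi_\rho}$.

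Finally I would compute $p_{k,\pm}=\bra{\chi_k^{\pm}}(\ketbra{0^m}{0^m}\otimes\rho)\ket{\chi_k^{\pm}}$ using the decomposition of $\ket{0^m}\ket{v_k}$ from the first step, together with the fact that different $k$ sit in orthogonal invariant subspaces; this yields $p_{k,\pm}=\tfrac{1}{2}\bra{v_k}\rho\ket{v_k}$. Summing gives
\begin{equation}
    \langle Z^{(0)}\rangle_{\ket{\Phi}}=\sum_k \tfrac{1}{2}\bra{v_k}\rho\ket{v_k}\bigl[F(\arccos\lambda_k)+F(-\arccos\lambda_k)\bigr]=\sum_k \bra{v_k}\rho\ket{v_k}\,f(\lambda_k)=\tr(\rho f(\sigma)),
\end{equation}
where the second equality uses evenness of $F$ together with $F(\arccos\lambda_k)=f(\lambda_k)$. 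The main obstacle is the bookkeeping of the qubitization eigenstructure, specifically verifying that every eigenvector $\ket{v_k}$ of $\sigma$ yields a pair of eigenphases $\pm\arccos(\lambda_k)$ whose contributions combine coherently, which is exactly why the choice $F(x)=f(\cos x)$ (an even polynomial in $\cos x$) is forced; once this is in place the rest of the argument is a routine consequence of Theorem~\ref{thm:qpp evaluation}.
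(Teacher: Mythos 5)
Your proof is correct and follows essentially the same route as the paper's own argument: choose the even trigonometric polynomial $F(x)=f(\cos x)$, invoke Theorem~\ref{thm:qpp evaluation} for the qubitized block encoding $\widehat{U}_\sigma$, exploit the qubitization eigenstructure $\ket{0^{\otimes m},v_k}=\tfrac{1}{\sqrt{2}}(\ket{\chi_k^+}+\ket{\chi_k^-})$ with eigenphases $\pm\arccos\lambda_k$, and account for the purifying register by noting the circuit and observable act trivially on $B$ (the paper does the equivalent via Schmidt decomposition). Your stronger intermediate claim $p_{k,\pm}=\tfrac{1}{2}\bra{v_k}\rho\ket{v_k}$, rather than only $p_{k,+}+p_{k,-}=\bra{v_k}\rho\ket{v_k}$, is also correct because $(\ketbra{0^{\otimes m}}{0^{\otimes m}}\otimes\rho)\ket{\widehat{\bot}_k}=0$.
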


Theorem~\ref{thm:state entropy estimation} shows that one could measure the value of $\tr(\rho f(\sigma))$ as the $Z$ expectation value of the ancilla qubit. An estimate of the expectation value within an additive error $\d$ can be obtained by measuring the ancilla qubit for $\cO(1/\d^2)$ times. Moreover, we could directly apply the iterative amplitude estimation~\cite{grinko2021iterative} on QPP to achieve a quadratic speedup without using extra qubits. For clarity, we present the QPP circuit of quantum entropy estimation in Figure~\ref{fig:circuit of QPP for state}. We also note that Theorem \ref{thm:state entropy estimation} can be applied to extract many information-theoretic properties of quantum states other than quantum entropies.

\begin{figure*}[htbp]
\[ 
\Qcircuit @C=0.5em @R=1.5em {
\lstick{\ket{0}} & \qw & \gate{R_z} & \gate{R_y} & \ctrl{1}  & \gate{R_z} & \gate{R_y} & \ctrlo{1} & \gate{R_z} & \gate{R_y} & \qw & &&\cdots&&& & \ctrlo{1} & \gate{R_z} & \gate{R_y} &\gate{R_z} &\qw &\meter\\
\lstick{\ket{0^{\otimes m}}} & /\qw & \qw & \qw & \multigate{1}{\widehat{U}_\sigma}  & \qw & \qw  & \multigate{1}{\widehat{U}_\sigma^\dagger} & \qw & \qw & \qw & &&\cdots&&& & \multigate{1}{\widehat{U}_\sigma^\dagger} & \qw & \qw & \qw &\qw \\
\lstick{\ket{0^{\otimes n}}_A} & /\qw & \multigate{1}{U_\rho} & \qw & \ghost{\widehat{U}_\sigma}  & \qw & \qw  & \ghost{\widehat{U}_\sigma^\dagger} & \qw & \qw & \qw & &&\cdots&&& & \ghost{\widehat{U}_\sigma^\dagger} & \qw & \qw & \qw &\qw \\
\lstick{\ket{0}_B} & /\qw & \ghost{U_\rho} & \qw & \qw & \qw & \qw  & \qw & \qw & \qw & \qw & &&\cdots&&& & \qw & \qw & \qw & \qw &\qw
}
\]
\caption{The QPP circuit in Theorem~\ref{thm:state entropy estimation} for quantum entropies estimation.}
\label{fig:circuit of QPP for state}
\end{figure*}

In this section, to demonstrate the power of QPP, we utilize the generic method in Theorem~\ref{thm:state entropy estimation} to estimate the most fundamental entropic functionals for quantum systems, including the von Neumann entropy, the quantum relative entropy, and the family of quantum R\'enyi entropies.

\subsection{von Neumann and quantum relative entropy estimation}

The von Neumann entropy~\cite{von1932mathematische} is a generalization of the Shannon entropy from the classical information theory to quantum information theory. For an $n$-qubit quantum state $\rho$, the von Neumann entropy is defined as follows
\begin{equation}
    S(\rho) = - \tr(\rho \ln \rho).
\end{equation}
Let $\{p_j\}_j$ be the eigenvalues of $\rho$, then the von Neumann entropy is the same as the Shannon entropy of the probability distribution $\{p_j\}_j$,
\begin{equation}
    S(\rho) = - \sum_{j=0}^{2^n - 1} p_j \ln p_j.
\end{equation}
Recall from the qubitization technique that {partial} eigenphases of the qubitized block encoding $\widehat{U}_\rho$ are {given by} $\pm \tau_j = \pm \arccos(p_j)$, then we have $p_j = \cos(\pm \tau_j) \in [0, 1]$. Here we assume the non-zero eigenvalues are lower bounded by some $\gamma > 0$. Then by Theorem~\ref{thm:state entropy estimation}, the main idea of using QPP to estimate the von Neumann entropy is to find a polynomial $f(x)$ that approximates the function $\ln(x)$ with some appropriate scale on the interval $[\gamma, 1]$, and $|f(x)| \leq 1$ for $x \in [-1, 1]$. Particularly, the polynomial $f(x)$ could be obtained from the Taylor series of $\ln(x)$. The overall result is stated in the following theorem.

\begin{theorem}[von Neumann entropy estimation] \label{thm:von Neumann complexity}
    Given a purified quantum query oracle $U_\rho$ of a state $\rho$ whose non-zero eigenvalues are lower bounded by $\gamma > 0$, there exists an algorithm that estimates $S(\rho)$ up to precision $\eps$ with high probability by measuring a single qubit, querying $U_\rho$ and $U_\rho^\dagger$ for $\cO\left(\frac{1}{\gamma \eps^2}\log^2(\frac{1}{\gamma})\log(\frac{\log(1/\gamma)}{\eps})\right)$ times. Moreover, using amplitude estimation improves the query complexity to $\cO\left(\frac{1}{\gamma \eps}\log(\frac{1}{\gamma})\log(\frac{\log(1/\gamma)}{\eps})\right)$.
\end{theorem}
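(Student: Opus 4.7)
The plan is to invoke Theorem~\ref{thm:state entropy estimation} with $\sigma=\rho$, so that the $Z$-expectation value on the ancilla equals $\tr(\rho\,f(\rho))=\sum_j p_j f(p_j)$, and then choose a polynomial $f$ that approximates the rescaled logarithm $g(x)=-\tfrac{1}{\alpha}\ln(x)$ on $[\gamma,1]$ for a normalization $\alpha=\ln(1/\gamma)$. Since $-\ln(x)$ ranges over $[0,\ln(1/\gamma)]$ on $[\gamma,1]$, this choice of $\alpha$ makes $g$ take values in $[0,1]$, which in particular makes the QPP requirement $|f(x)|\le 1$ on $[-1,1]$ achievable. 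The estimator is then $\widehat{S}=-\alpha\langle Z^{(0)}\rangle$, so an additive error $\delta$ in the measured expectation propagates to an $\alpha\delta$ error in $\widehat S$.

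The second step is the polynomial approximation. I will exhibit a real polynomial $f$ of degree $L=\cO\!\left(\tfrac{1}{\gamma}\log\!\tfrac{1}{\eps'}\right)$ satisfying $\|f-g\|_{\infty,[\gamma,1]}\le \eps'$ and $|f(x)|\le 1$ on $[-1,1]$. One clean route is to start from a truncated Chebyshev expansion of $\ln$ on $[\gamma,1]$ (standard smooth-function estimates give the quoted degree), rescale by $1/\alpha$, and then multiply by a smooth polynomial mollifier that preserves the approximation on $[\gamma,1]$ while suppressing $|f|$ outside; alternatively, one can directly import the smoothed-logarithm polynomial-approximation lemmas from the QSVT-based entropy literature. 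Choosing $\eps'=\eps/(2\alpha)=\Theta(\eps/\log(1/\gamma))$ yields both $|{-\alpha}\,\tr(\rho f(\rho))-S(\rho)|\le \eps/2$ and the degree bound $L=\cO\!\left(\tfrac{1}{\gamma}\log\!\tfrac{\log(1/\gamma)}{\eps}\right)$.

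For the sampling and oracle cost: by Hoeffding, it takes $M=\cO(\alpha^2/\eps^2)=\cO\!\left(\tfrac{\log^2(1/\gamma)}{\eps^2}\right)$ single-qubit measurements to estimate $\langle Z^{(0)}\rangle$ within additive error $\eps/(2\alpha)$ with high probability. Each execution of the QPP circuit invokes $\widehat{U}_\rho$ a total of $L$ times, and each use of $\widehat{U}_\rho$ costs $\cO(1)$ queries to $U_\rho$ and $U_\rho^\dagger$ by the qubitization construction in Appendix~\ref{appendix:dm block enc}. Multiplying $M\cdot L$ gives the first stated complexity, and replacing naive sampling by iterative amplitude estimation on the single flag qubit converts the $\cO(\alpha^2/\eps^2)$ factor into $\cO(\alpha/\eps)$, yielding the second. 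The main obstacle I expect is the polynomial-approximation step: one must simultaneously meet the degree bound $\tilde{\cO}(1/\gamma)$, the local error on $[\gamma,1]$, and the global constraint $|f|\le 1$ on $[-1,1]$, which requires either citing an existing smoothed-log lemma or carefully combining a Chebyshev truncation with a mollifier. Once that approximation lemma is in hand, the rest is a straightforward composition of Theorem~\ref{thm:state entropy estimation} with standard concentration bounds.
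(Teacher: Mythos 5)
Your approach mirrors the paper's almost step for step: invoke Theorem~\ref{thm:state entropy estimation} with $\sigma=\rho$, approximate a rescaled $\ln$ on $[\gamma,1]$ by a polynomial of degree $\widetilde{\cO}(1/\gamma)$, propagate the error through the scale factor, and close with a Hoeffding/Chebyshev sample count of $\cO(\log^2(1/\gamma)/\eps^2)$ or an amplitude-estimation speedup. The one place you are slightly off is the normalization. The paper works with $f(x)=\frac{\ln x}{2\ln\gamma}$, which takes values in $[0,\tfrac12]$ on $[\gamma,1]$; it then applies Lemma~\ref{lem:gurantee for poly approx} (Corollary~66 of~\cite{gilyen2019quantum}) with $x_0=1$, $r=1-\gamma$, $\delta=\gamma/2$ and $B=\tfrac12$, which simultaneously delivers the local error on $[\gamma,1]$, the degree $\cO\!\big(\tfrac{1}{\gamma}\log\tfrac{\log(1/\gamma)}{\eps}\big)$, and the global bound $|P(x)|\le\eps+\tfrac12\le 1$ on $[-1,1]$ with no further work. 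Your normalization $\alpha=\ln(1/\gamma)$ makes the target range exactly $[0,1]$, so the approximating polynomial can exceed $1$ by $\Theta(\eps')$ already on $[\gamma,1]$; the mollifier you propose must be $\approx 1$ there to preserve the approximation, so it does not repair this overshoot. Halving the normalization as the paper does (or absorbing a $1/(1+\eps')$ rescaling) fixes it, and you should also note that the ``smoothed-log lemma'' you hope to import is exactly Lemma~\ref{lem:gurantee for poly approx}, so there is no need for a separate Chebyshev-plus-mollifier construction. With that adjustment, the rest of your argument, including the rank-free $\gamma$-dependence and the amplitude-estimation variant, matches the paper's proof.
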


In particular, the dependence on $\gamma$ can be translated to the rank (or dimension) of the density matrix, from which we have the following corollary.
\begin{corollary} \label{coro:von Neumann complexity no gamma}
    Given a purified quantum query oracle $U_\rho$ of a state $\rho$ whose rank is $\kappa \leq 2^n$, there exists an algorithm that estimates $S(\rho)$ up to precision $\eps$ with high probability by measuring a single qubit, querying $U_\rho$ and $U_\rho^\dagger$ for $\cO\left( \frac{\kappa}{\eps^3} \log^3 \left(\frac{\kappa}{\eps}\right)\log\left(\frac{1}{\eps}\right)\right)$ times. Moreover, using amplitude estimation improves the query complexity to $\cO\left( \frac{\kappa}{\eps^2} \log^2 \left(\frac{\kappa}{\eps}\right) \log\left(\frac{1}{\eps}\right)\right)$.
\end{corollary}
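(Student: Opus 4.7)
The plan is to reduce Corollary~\ref{coro:von Neumann complexity no gamma} to Theorem~\ref{thm:von Neumann complexity} by choosing a threshold $\gamma$ that depends only on $\kappa$ and $\eps$. The key observation is that since $\rho$ has rank at most $\kappa$, at most $\kappa$ of its eigenvalues are non-zero, and the map $x\mapsto -x\ln x$ vanishes continuously at $x=0$; hence the tail of $S(\rho)$ contributed by tiny eigenvalues is controlled purely by $\kappa$, independently of the spectrum.

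First, I would split $S(\rho)=-\sum_j p_j\ln p_j$ into the contributions from ``large'' eigenvalues $p_j\geq \gamma$ and ``small'' non-zero eigenvalues $0<p_j<\gamma$. Using the monotonicity of $-x\ln x$ on $(0,1/e]$ together with the rank bound, the small-eigenvalue contribution is at most $\kappa\,\gamma\ln(1/\gamma)$. Choosing $\gamma=\Theta\!\bigl(\eps/(\kappa\log(\kappa/\eps))\bigr)$ then forces this contribution to be at most $\eps/2$.

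Next, I would invoke Theorem~\ref{thm:von Neumann complexity} with the above $\gamma$ and target precision $\eps/2$. The resulting algorithm estimates $\tr(\rho\,f(\rho))$ for a polynomial $f$ that uniformly approximates (a rescaling of) $\ln x$ on $[\gamma,1]$ and satisfies $|f(x)|\leq 1$ on $[-1,1]$. For an eigenvalue $p_j\in(0,\gamma)$, both $|p_j f(p_j)|$ and $|p_j\ln p_j|$ are bounded by $\cO(\gamma\log(1/\gamma))$, so summing over the at most $\kappa$ such indices gives only an additional $\cO(\kappa\gamma\log(1/\gamma))=\cO(\eps)$ deviation. Combined with the $\eps/2$ accuracy guaranteed on the large part, the algorithm's output is within $\eps$ of $S(\rho)$ by the triangle inequality.

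Finally, I would substitute $\gamma=\Theta(\eps/(\kappa\log(\kappa/\eps)))$ into the complexities of Theorem~\ref{thm:von Neumann complexity}: we get $1/\gamma=\cO(\kappa\log(\kappa/\eps)/\eps)$ and $\log(1/\gamma)=\cO(\log(\kappa/\eps))$, which simplify to the stated $\cO(\kappa\eps^{-3}\log^3(\kappa/\eps)\log(1/\eps))$ sampling bound and $\cO(\kappa\eps^{-2}\log^2(\kappa/\eps)\log(1/\eps))$ amplitude-estimation bound. The main care point will be justifying that applying the QPP polynomial $f$ ``outside its designed range'' on eigenvalues below $\gamma$ still returns a valid entropy estimate; this is handled cleanly by the uniform bound $|f|\leq 1$ together with the rank constraint limiting how many small eigenvalues can appear.
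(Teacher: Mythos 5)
Your proposal is correct and takes essentially the same approach as the paper: fix $\gamma = \Theta\bigl(\eps/(\kappa\log(\kappa/\eps))\bigr)$, run the algorithm underlying Theorem~\ref{thm:von Neumann complexity}, and use the rank bound together with $|P|\le 1$ to control the contribution of eigenvalues below $\gamma$. Note only that you cannot quite invoke Theorem~\ref{thm:von Neumann complexity} as a black box (its hypothesis that all non-zero eigenvalues exceed $\gamma$ fails here); you must, as your final paragraph and the paper both do, re-derive the error bound $|\tr(\rho P(\rho)) - \tr(\rho f(\rho))| \le \kappa\,\|xP - xf\|_{[0,\gamma]} + \|P - f\|_{[\gamma,1]} \le 2\kappa\gamma + \eta$, which folds the small-eigenvalue contribution into the analysis.
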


We defer the proofs to Appendix~\ref{appendix:guarantee for neumann and relative}. Note that the estimation of the quantum relative entropy between states $\rho$ and $\sigma$, i.e.\ 
\begin{equation}
    \infdiv{\rho}{\sigma} = -\tr(\rho\ln\sigma) - S(\rho),
\end{equation}
immediately follows from the above analysis. In particular, we only need to apply QPP on a qubitized block encoding of $\sigma$ to estimate $\tr(\rho\ln\sigma)$ if the relative entropy is finite. The result of quantum relative entropy estimation is shown in Theorem~\ref{thm:relative complexity}.

\begin{theorem}[Quantum relative entropy estimation]
\label{thm:relative complexity}
 Given purified quantum query oracles $U_\rho$ and $U_\sigma$ of states $\rho$ and $\sigma$, respectively, such that their non-zero eigenvalues are lower bounded by $\gamma > 0$ and the kernel of $\sigma$ has trivial intersection with the support of $\rho$, there exists an algorithm that estimates $\infdiv{\rho}{\sigma}$ up to precision $\eps$ with high probability, querying $U_\rho$, $U_\sigma$ and their inverses for $\cO\left(\frac{1}{\gamma \eps^2}\log^2(\frac{1}{\gamma})\log(\frac{\log(1/\gamma)}{\eps})\right)$ times. Moreover, using amplitude estimation improves the query complexity to $\cO\left(\frac{1}{\gamma \eps}\log(\frac{1}{\gamma})\log(\frac{\log(1/\gamma)}{\eps})\right)$.
\end{theorem}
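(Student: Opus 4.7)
The plan is to use the identity $\infdiv{\rho}{\sigma} = -\tr(\rho\ln\sigma) - S(\rho)$ and estimate the two terms separately, each to precision $\eps/2$. The second term is handled directly by Theorem~\ref{thm:von Neumann complexity} with the same query complexity we aim for, so the work reduces to estimating the cross-entropy $-\tr(\rho\ln\sigma)$ via QPP on a qubitized block encoding $\widehat{U}_\sigma$ of $\sigma$, which can be constructed from $U_\sigma$ using the procedure of Appendix~\ref{appendix:dm block enc}.

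To plug into Theorem~\ref{thm:state entropy estimation}, I need a polynomial $f(x)$ that approximates $\ln(x)/\alpha$ uniformly on $[\gamma,1]$ with $\abs{f(x)}\leq 1$ on $[-1,1]$, where $\alpha=\Theta(\log(1/\gamma))$ is the normalization forced by $\ln\gamma$. A standard truncated Chebyshev or shifted Taylor expansion of $\ln$ yields such an $f$ of degree $L=\cO\!\left(\tfrac{1}{\gamma}\log\tfrac{1}{\eps'}\right)$ with uniform error $\eps'$ on $[\gamma,1]$. Choosing $\eps' = \Theta(\eps/\alpha)$ gives $L = \cO\!\left(\tfrac{1}{\gamma}\log\tfrac{\log(1/\gamma)}{\eps}\right)$. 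The spectral assumption that $\ker(\sigma)$ meets $\supp(\rho)$ trivially, together with the eigenvalue lower bound $\gamma$, guarantees that $\sigma$ restricted to the support of $\rho$ lies in $[\gamma,1]$, so $\alpha\cdot\tr(\rho f(\sigma))$ approximates $\tr(\rho\ln\sigma)$ within $\eps/2$.

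Applying Theorem~\ref{thm:state entropy estimation} with this $f$ produces a QPP circuit $\UQSPs(\widehat{U}_\sigma)$ of $L$ layers whose ancilla $Z$-expectation value equals $\tr(\rho f(\sigma))$. Since the expectation lies in $[-1,1]$, Hoeffding gives an estimate to precision $\eps/(2\alpha)$ using $\cO(\alpha^2/\eps^2)=\cO(\log^2(1/\gamma)/\eps^2)$ independent runs. Multiplying through by $\alpha$ recovers the cross-entropy to precision $\eps/2$, and combining with the $S(\rho)$ estimator yields $\infdiv{\rho}{\sigma}$ to precision $\eps$. Each circuit invocation queries $U_\rho$ once, queries $U_\sigma$ and $U_\sigma^\dagger$ $\cO(L)$ times (through $\widehat{U}_\sigma$), and uses only a single ancilla qubit for the QPP register, giving the total cost $\cO\!\left(\tfrac{1}{\gamma\eps^2}\log^2(\tfrac{1}{\gamma})\log(\tfrac{\log(1/\gamma)}{\eps})\right)$. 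Swapping the repeated measurement for the iterative amplitude estimation of \cite{grinko2021iterative} applied to the ancilla flag replaces $1/\eps^2$ with $1/\eps$ and gives the improved bound.

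The main obstacle I anticipate is not the angle-finding or the block-encoding step, both of which are routine given the earlier material, but controlling the normalization $\alpha$: a loose bound on $\norm{\ln}_{[\gamma,1]}$ would inflate the sampling overhead and spoil the target complexity. I expect to handle this by using the tight estimate $\norm{\ln(\cdot)}_{[\gamma,1]} = \log(1/\gamma)$ and noting that the corresponding polynomial approximation of $\ln/\alpha$ automatically satisfies $\abs{f}\leq 1+o(1)$ on $[\gamma,1]$, with a standard mollification away from $0$ extending the bound to $[-1,1]$ without increasing the degree beyond the stated $L$. A secondary subtlety is that the block encoding $\widehat{U}_\sigma$ has eigenphases $\pm\arccos(p_j)$ rather than $p_j$ themselves, but this is exactly the setting for which Theorem~\ref{thm:state entropy estimation} is stated, so the polynomial-in-$\sigma$ interpretation goes through verbatim.
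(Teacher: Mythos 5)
Your proposal is correct and follows essentially the same approach as the paper: the paper's proof simply states that Theorem~\ref{thm:relative complexity} follows verbatim from the proof of Theorem~\ref{thm:von Neumann complexity} by replacing $\widehat{U}_\rho$ with $\widehat{U}_\sigma$, which is exactly the decomposition $\infdiv{\rho}{\sigma} = -\tr(\rho\ln\sigma) - S(\rho)$ together with a normalized polynomial approximation of $\ln$ on $[\gamma,1]$ fed into Theorem~\ref{thm:state entropy estimation} that you describe. Your extra bookkeeping (the $\eps/2$ split, Hoeffding in place of Chebyshev, the normalization $\alpha = \Theta(\log(1/\gamma))$) is the same argument made explicit.
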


\subsection{Quantum R\'enyi entropy estimation}
The quantum R\'enyi entropy~\cite{Petz1986a} is a quantum version of the classical R\'enyi entropy that was first introduced in~\cite{renyi1961measures}. For $\alpha\in(0,1)\cup(1,\infty)$, the quantum $\a$-R\'enyi entropy of an $n$-qubit quantum state $\rho$ is defined as follows:
\begin{equation}
    S_{\alpha}(\rho)=\frac{1}{1-\alpha}\log\tr(\rho^\alpha).
\end{equation}
Let $\{p_j\}_j$ be the eigenvalues of $\rho$, then the quantum $\a$-R\'enyi entropy reduces to the $\a$-R\'enyi entropy of the probability distribution $\{p_j\}_j$,
\begin{equation}
    S_{\alpha}(\rho)=\frac{1}{1-\alpha}\log \left(\sum_{j=0}^{2^n - 1} p_j^\alpha \right).
\end{equation}
Similarly, we assume all non-zero eigenvalues are greater than some $\gamma > 0$. The method of R\'enyi entropy estimation, based on Theorem~\ref{thm:state entropy estimation}, is in the same spirit of estimating the von Neumann entropy; the only difference is that we now aim to find a polynomial $f(x)$ that approximates the function $x^{\a - 1}$ for any $\alpha>0$ and $\alpha\neq1$ on the interval $[\gamma, 1]$. The exponent is $\a - 1$ because we can write $\tr(\rho^\a)$ as $\tr(\rho \cdot \rho^{\a-1})$, and the isolated $\rho$ comes from the input state of QPP. 

When $\alpha>0$ is a non-integer, the polynomial could be given by separately considering the integer part and decimal part of $\alpha - 1$. 
Thus we only need to find a polynomial that approximates $x^{\a-1}$ for $\alpha \in (0, 1)$ on the interval $x \in [\gamma, 1]$, which could be obtained from the Taylor series of $x^{\a-1}$. We present the results in the following theorem and more discussions in Appendix~\ref{appendix:guarantee for Renyi}.

\begin{theorem} [Quantum R\'enyi entropy estimation for real $\alpha$]
\label{thm:real Renyi complexity}
 Given a purified quantum query oracle $U_\rho$ of a state $\rho$ whose non-zero eigenvalues are lower bounded by $\gamma > 0$, there exists an algorithm that estimates $S_\alpha(\rho)$ up to precision $\eps$ with high probability by measuring a single qubit, querying $U_\rho$ and $U_\rho^\dagger$ for a total number of times of
\begin{equation}
    \begin{dcases}
        \widetilde\cO \left(\frac{1}{\gamma^{3 - 2\a} \eps^2} \cdot \eta^2 \right), & \textrm{if } \a \in (0, 1); \\[10pt]
        \widetilde\cO \left(\frac{\a\gamma + 1}{\gamma \eps^2} \cdot \eta^2 \right), & \textrm{if } \a \in (1, \infty);
    \end{dcases}
\end{equation}
where $\eta = \frac{\tr(\rho^\a)^{-1}}{\abs{1 - \a}}$.
Moreover, using quantum amplitude estimation improves the query complexity to
\begin{equation}
    \begin{dcases}
        \widetilde\cO \left(\frac{1}{\gamma^{2 - \a} \eps} \cdot \eta \right), & \textrm{if } \a \in (0, 1). \\[10pt]
        \widetilde\cO \left(\frac{\a\gamma + 1}{\gamma \eps} \cdot \eta \right), & \textrm{if } \a \in (1, \infty).
    \end{dcases}
\end{equation}
Here the $\widetilde\cO$ notation omits logarithmic factors.
\end{theorem}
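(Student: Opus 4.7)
The plan is to invoke Theorem~\ref{thm:state entropy estimation} with the qubitized block encoding $\widehat U_\rho$ (so $\sigma=\rho$) and a polynomial $f$ that approximates the function $x^{\a-1}$ on $[\gamma,1]$; then $\tr(\rho\,f(\rho))$ estimates $\tr(\rho^\a) = \sum_j p_j^\a$, from which $S_\a(\rho) = \frac{1}{1-\a}\log\tr(\rho^\a)$ follows by taking a logarithm classically. Because Theorem~\ref{thm:state entropy estimation} requires $|f(x)|\le 1$ on $[-1,1]$, I would rescale the approximant by $M := \sup_{x\in[\gamma,1]} x^{\a-1}$, which equals $\gamma^{\a-1}$ when $\a\in(0,1)$ (the singular regime) and equals $1$ when $\a\ge 1$. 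The rescaled polynomial $f/M$ is fed into QPP, and the measured expectation is multiplied back by $M$.

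For $\a\in(0,1)$ I would approximate $x^{\a-1}$ on $[\gamma,1]$ by a truncated Taylor series obtained after an affine change of variable that puts the singularity at controlled distance $\Theta(\gamma)$ from the domain; this yields a polynomial of degree $L = \widetilde\cO(1/\gamma)$ with uniform error $\eps_0$ (the same scaling that underlies the von Neumann case, Theorem~\ref{thm:von Neumann complexity}). For $\a>1$ non-integer I would split the exponent as $\a-1 = \lfloor \a-1\rfloor + r$ with $r\in[0,1)$, realize the monomial $x^{\lfloor\a-1\rfloor}$ exactly (degree $\cO(\a)$), and approximate $x^r$ on $[\gamma,1]$ by the same Taylor-series construction (degree $\widetilde\cO(1/\gamma)$), so that the overall degree is $L = \widetilde\cO(\a + 1/\gamma)$; integer $\a$ is exact. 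Composing with the $p_j = \cos(\tau_j)$ substitution inherited from qubitization converts $f/M$ into a real trigonometric polynomial in $\tau$ of the same degree, and Theorem~\ref{thm:state entropy estimation} produces a QPP circuit using $\cO(L)$ controlled queries to $\widehat U_\rho$.

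It remains to propagate the errors. Writing $\widehat T$ for the final estimate of $\tr(\rho^\a)$, the mean value theorem gives
\[
\left|S_\a(\rho) - \frac{\log \widehat T}{1-\a}\right| \;\lesssim\; \frac{|\widehat T - \tr(\rho^\a)|}{|1-\a|\,\tr(\rho^\a)} \;=\; \eta\,|\widehat T - \tr(\rho^\a)|,
\]
so it suffices to estimate $\tr(\rho\,f(\rho)) = \tr(\rho^\a)/M$ to additive precision $\cO(\eps/(M\eta))$. Choosing $\eps_0 = \cO(\eps/(M\eta))$ controls the polynomial bias (absorbed as a logarithmic factor in $L$), and by Chebyshev's inequality the sample count needed for the remaining statistical error is $\cO(M^2\eta^2/\eps^2)$; multiplying by the per-sample query cost $\cO(L)$ and substituting $M=\gamma^{\a-1}$ or $M=1$ produces the advertised $1/\gamma^{3-2\a}$ (since $\gamma \cdot \gamma^{2\a-2} = \gamma^{2\a-3}$) and $(\a\gamma+1)/\gamma$ factors, respectively. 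Replacing the Chebyshev averaging by iterative amplitude estimation~\cite{grinko2021iterative} on the QPP flag qubit converts $1/\eps^2$ into $1/\eps$ and $M^2$ into $M$, yielding the amplified-estimation variants $\widetilde\cO(\eta/(\gamma^{2-\a}\eps))$ and $\widetilde\cO((\a\gamma+1)\eta/(\gamma\eps))$.

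The main technical obstacle is producing the polynomial approximation of $x^{\a-1}$ on $[\gamma,1]$ with the claimed degree $\widetilde\cO(1/\gamma)$ and an explicit uniform-error constant, while simultaneously tracking how the rescaling factor $M$ in the singular regime $\a<1$ compounds into both the sample variance (via $M^2$) and the stricter polynomial-accuracy demand ($\eps_0 \propto 1/M$); these two effects combine to lift the exponent of $\gamma$ from $1$ to $3-2\a$. Handling the decomposition $\a-1 = \lfloor\a-1\rfloor + r$ is elementary but must be done carefully to keep $|f/M|\le 1$ on all of $[-1,1]$ so that Lemma~\ref{lem:trig_qsp} and hence Theorem~\ref{thm:state entropy estimation} apply, and one must ensure that $\tr(\rho^\a)$ is bounded away from zero so that the final logarithm is well conditioned, which is exactly what the $\eta$ factor encodes.
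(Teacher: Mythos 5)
Your proposal follows essentially the same route as the paper's proof: reduce $S_\alpha$ estimation to estimating $\tr(\rho^\alpha)$ within additive error proportional to $|1-\alpha|\tr(\rho^\alpha)\eps$ (your mean-value-theorem step is the lemma the paper cites from Wang et al.), plug a rescaled degree-$\widetilde\cO(1/\gamma)$ (resp.\ $\widetilde\cO(\alpha+1/\gamma)$) polynomial approximation of $x^{\alpha-1}$ into Theorem~\ref{thm:state entropy estimation}, track how the rescaling $M=\gamma^{\alpha-1}$ enters both the sample variance and the required per-circuit precision to get the $\gamma^{3-2\alpha}$ exponent, and swap Chebyshev averaging for iterative amplitude estimation. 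The one step you flag as an obstacle rather than resolve — keeping $|f/M|\le 1$ on all of $[-1,1]$, not just on $[\gamma,1]$ — is exactly what the paper's Lemma~\ref{lem:gurantee for neg power approx} (Gily\'en et al.\ Corollary~67) provides, so the plan is sound and the remaining work is the invocation of that lemma.
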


Similarly, we provide a method to estimate $S_\a(\rho)$ without information of $\gamma$ in Appendix~\ref{appendix:guarantee for Renyi}.
When $\alpha$ is an integer, the function $x^{\a - 1}$ naturally turns to be a normalized polynomial so that approximation error does not exist by Theorem~\ref{thm:state entropy estimation}. {In this case, the dependence on the threshold $\gamma$ can be further improved.}

\begin{theorem}[Quantum R\'enyi entropy estimation for integer $\alpha$]
\label{thm:integer Renyi complexity}
Suppose $\alpha>1$ is a positive integer, there exists an algorithm that estimates $S_\a(\rho)$ up to precision $\eps$ with high probability by measuring a single qubit, querying $U_\rho$ and $U_\rho^\dagger$ for $\cO\left(\frac{\alpha\tr(\rho^\a)^{-2}}{ \eps^2}\right)$ times.
Moreover, using amplitude estimation improves the query complexity to $\cO\left(\frac{\alpha\tr(\rho^\a)^{-1}}{\eps}\right)$.
\end{theorem}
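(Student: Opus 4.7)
The plan is to apply Theorem~\ref{thm:state entropy estimation} directly with the polynomial $f(x) = x^{\a-1}$. Since $\a\ge 2$ is an integer, $f$ is already a polynomial of degree $\a-1$ satisfying $|f(x)|\le 1$ on $[-1,1]$, so no approximation step is required and the hypothesis of the theorem is met without incurring any truncation error. This immediately produces a QPP circuit $\UQPPs(\widehat{U}_\rho)$ with $L=\a-1$ layers whose ancilla $Z$-expectation value equals $\tr(\rho\cdot\rho^{\a-1}) = \tr(\rho^\a)$. Denote $T\coloneqq\tr(\rho^\a)$. Because each layer uses $\cO(1)$ queries to the qubitized block encoding $\widehat{U}_\rho$, and building $\widehat{U}_\rho$ itself costs $\cO(1)$ queries to $U_\rho$ and $U_\rho^\dagger$ (see Appendix~\ref{appendix:dm block enc}), a single run of the circuit costs $\cO(\a)$ calls to the oracle.

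Second, I would reduce entropy estimation to trace--power estimation via error propagation. Differentiating $S_\a(\rho) = \tfrac{1}{1-\a}\ln T$ shows that if $\bar T$ is an additive $\delta$-estimate of $T$ with $\delta\le T/2$, then $\bigl|\tfrac{1}{1-\a}\ln\bar T - S_\a(\rho)\bigr| = \cO\!\bigl(\delta/((\a-1)T)\bigr)$. Choosing $\delta = \Theta(\eps T)$ therefore suffices to obtain $S_\a(\rho)$ within additive precision $\eps$; the factor $(\a-1)\ge 1$ only helps and can be absorbed into the constant.

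Third, I would convert the expectation value into an estimate in two ways. For classical sampling, Hoeffding's inequality guarantees that $\cO(1/\delta^2) = \cO(T^{-2}/\eps^2)$ independent single-qubit measurements produce $\bar T$ with high probability, giving a total query complexity of $\cO(\a\, T^{-2}/\eps^2)$. For amplitude estimation in its iterative form~\cite{grinko2021iterative} (so that no additional ancilla is required, as noted after Theorem~\ref{thm:state entropy estimation}), $\cO(1/\delta) = \cO(T^{-1}/\eps)$ uses of the QPP circuit suffice, yielding a total of $\cO(\a\, T^{-1}/\eps)$ queries. Substituting $T = \tr(\rho^\a)$ reproduces both stated bounds.

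The one non-routine step is the error propagation through the logarithm: one must verify that the condition $\delta\le T/2$ is actually achievable, so that the first-order expansion of $\ln$ is valid and no hidden $T^{-2}$ blow-up enters through higher-order terms. This is ensured by first producing a constant-factor estimate of $T$ (at negligible extra cost) and then re-running the primitive at the target precision $\delta=\Theta(\eps T)$. Once this is in place, the result is a direct composition of Theorem~\ref{thm:state entropy estimation} with either Hoeffding's inequality or iterative amplitude estimation.
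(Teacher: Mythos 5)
Your proof is correct and takes essentially the same approach as the paper: use the exact polynomial $f(x)=x^{\a-1}$ in Theorem~\ref{thm:state entropy estimation} (so $L=\a-1$ layers, $\cO(\a)$ queries per circuit use), propagate an additive error of $\Theta(\eps\,\tr(\rho^\a))$ on $\tr(\rho^\a)$ through $\frac{1}{1-\a}\ln(\cdot)$, and finish with Chebyshev/Hoeffding sampling or amplitude estimation. The only surface difference is that the paper invokes the reduction $\eps'=\frac{|1-\a|\tr(\rho^\a)}{2}\eps$ as a citation to Ref.~\cite{wang2022quantum} rather than re-deriving it from the first-order expansion of $\ln$ as you do, and it leaves implicit the validity-region check ($\delta\le T/2$) that you explicitly flag and resolve with a two-phase rough-then-fine estimate.
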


Note that this method of computing $S_\a(\rho)$ for $\alpha \in \NN_+$ naturally establishes an efficient algorithm for entanglement spectroscopy, a task of obtaining the entanglement of a quantum state. Consider a bipartite pure state $\ket{\Psi_\rho}_{AB}$ in Eq.~\eqref{eqn:purified query}, the entanglement between systems $A$ and $B$ can be characterized by the eigenvalues of the reduced density operator $\rho = \tr_B(\ketbra{\Psi_\rho}{\Psi_\rho})$. Specifically, one needs to compute the $\tr(\rho^k)$ for $k= 1, \ldots, k_{\max}$ to estimate $k_{\max}$ largest eigenvalues of $\rho$ by the Newton-Girard method~\cite{johri2017entanglementa, subasi2019entanglement, yirka2021qubitefficient}.

\subsection{Comparison to related works}
As introduced above, we utilize the structure of QPP to estimate quantum entropies based on the purified quantum query model. Here we briefly mention some closely related works on quantum entropy estimation under a similar setting. 
For von Neumann entropy $S(\rho)$, \citet{gilyen2020distributional} proposed an efficient quantum algorithm based on QSVT and amplitude estimation that achieves a near-linear query complexity and an additive error $\eps$.  Another work by~\citet{gur2021sublinear} utilized the quantum singular value estimation \cite{kerenidis2016quantum} and amplitude estimation to implement an algorithm with a sublinear query complexity up to a multiplicative error bound.
By contrast, our algorithm could estimate the result by measuring the first ancilla qubit, which has a slightly worse query complexity in the worst case but a smaller circuit size. Nevertheless, by Corollary~\ref{coro:von Neumann complexity no gamma}, our complexity can depend on the rank of the density matrix, which will be further improved for low-rank cases. Note that one could apply amplitude estimation on QPP without using extra qubits, which might be required for previous QSVT-based algorithms. As a result, the QPP-based algorithms allow us to flexibly consider the trade-off between the query complexity and the circuit depth in practical applications. 

\begin{table*}[tp]
\centering
\setlength{\tabcolsep}{1em}
\begin{tabular}{lccc}
\toprule
Methods for $S(\rho)$ estimation & Total queries to $U_\rho$ and $U_\rho^\dagger$ & Queries per use of circuit\\
\midrule
QSVT-based with QAE (\cite{gilyen2020distributional}) & $\widetilde\cO(\frac{d}{\eps^{1.5}})$ & $\widetilde\cO(\frac{d}{\eps^{1.5}})$ \\
\addlinespace
QPP-based (assumes rank, in Corollary~\ref{coro:von Neumann complexity no gamma}) & $\widetilde\cO(\frac{\kappa}{\eps^3})$ & $\widetilde\cO(\frac{\kappa}{\eps})$  \\
\addlinespace
QPP-based with QAE (assumes rank, in Corollary~\ref{coro:von Neumann complexity no gamma})  & $\widetilde\cO(\frac{\kappa}{\eps^2})$ &  $\widetilde\cO(\frac{\kappa}{\eps^2})$  \\
\addlinespace
QPP-based (in Theorem~\ref{thm:von Neumann complexity})  & $\widetilde\cO(\frac{1}{\gamma \eps^2})$ & $\widetilde\cO(\frac{1}{\gamma})$  \\
\addlinespace
QPP-based with QAE (in Theorem~\ref{thm:von Neumann complexity}) & $\widetilde\cO(\frac{1}{\gamma \eps})$ &  $\widetilde\cO(\frac{1}{\gamma\eps})$  \\
\bottomrule
\end{tabular}
\caption{Comparison of algorithms on estimating von Neumann entropy within additive error. Here the $\widetilde\cO$ notation omits $\log$ factors, $\gamma > 0$ is the lower bound of eigenvalues, $\kappa > 0$ is the rank of the state $\rho \in \CC^{d \times d}$, and $\eps$ is the additive error of estimating $S(\rho)$. QAE is short for quantum amplitude estimation.} 
\label{table:von Neumann algs}
\end{table*}

With regard to the family of quantum $\a$-R\'enyi entropies $S_\a(\rho)$, when $\a$ is an integer, QPP establishes an efficient algorithm for entanglement spectroscopy. Compared to previous algorithms for entanglement spectroscopy~\cite{johri2017entanglementa, subasi2019entanglement}, the QPP-based algorithm significantly reduces the circuit width from $\Theta(n\a)$ to $4n+1$ without using qubit resets as in~\cite{yirka2021qubitefficient}. For a more general case that $\a$ is not an integer, \citet{subramanian2021quantum} introduced a quantum algorithm that combines the QSVT technique and the $\mathsf{DQC1}$ (Deterministic Quantum Computation with one clean qubit) method. Their algorithm estimates $S_\alpha(\rho)$ with an additive error by measuring a single qubit, using an expected total number $\widetilde\cO(\sfrac{d^2}{\gamma\eps^2})$ queries to the purified quantum oracle, where $d = 2^n$ is the dimension of $\rho$. Our QPP-based approaches improve the results in~\cite{subramanian2021quantum} in terms of the dependence on the dimension. For instance, for $\alpha>1$, our algorithms based on single-qubit measurement require a query complexity of $\widetilde\cO(\sfrac{\alpha}{\gamma\eps^2 })$. {The main reason there is a speedup factor of $d^2$ is that the $\mathsf{DQC1}$ method requires a maximally mixed state $\frac{1}{d}I$ as the input state, whereas the QPP-based method uses $\rho$ as the input state.} Moreover, as we mention before, QSVT is not natively compatible with indirect measurement, one needs to utilize an extra ancilla qubit to control the QSVT circuit in order to implement indirect measurement, as shown in~\cite{subramanian2021quantum}.
Similar to the earlier description, we could leverage quantum amplitude estimation to achieve a better query complexity. More detailed comparisons are shown in Table~\ref{table:Renyi algs}.

\begin{table*}[tp]
\centering
\setlength{\tabcolsep}{1em}
\begin{tabular}{lcccc}
\toprule
\multirow{2}{*}[-2pt]{Methods for $S_\a(\rho)$ estimation} & \multicolumn{3}{c}{Total queries to $U_\rho$ and $U_\rho^\dagger$}\\
\cmidrule{2-4}
& $\a \in (0, 1)$ & $\a \in (1, \infty)$ &  $\a \in \NN_+$\\
\midrule
QSVT-based with $\mathsf{DQC1}$ (\cite{subramanian2021quantum}) & $\widetilde\cO \left(\frac{d^2}{\gamma \eps^2} \cdot \eta^2\right)$ & $\widetilde\cO \left(\frac{d^2}{\gamma \eps^2} \cdot \eta^2\right)$ & $\cO \left(\frac{d^2}{\eps^2} \cdot \a^3 \eta^2\right)$  \\
\addlinespace
QPP-based (in Theorem~\ref{thm:real Renyi complexity} and~\ref{thm:integer Renyi complexity}) & $\widetilde\cO \left(\frac{1}{\gamma^{3 - 2\a} \eps^2} \cdot \eta^2\right)$ & $\widetilde\cO \left(\frac{\a\gamma + 1}{\gamma \eps^2} \cdot \eta^2\right)$ & $\cO \left(\frac{1}{\eps^2} \cdot \a \eta^2\right)$ \\
\addlinespace
QPP-based with QAE (in Theorem~\ref{thm:real Renyi complexity} and~\ref{thm:integer Renyi complexity}) & $\widetilde\cO \left(\frac{1}{\gamma^{2 - \a} \eps} \cdot \eta\right)$ & $\widetilde\cO \left(\frac{\a\gamma + 1}{\gamma \eps} \cdot \eta\right)$ & $\cO \left(\frac{1}{\eps} \cdot \a \eta\right)$ \\
\bottomrule
\end{tabular}
\caption{Comparison of algorithms on estimating quantum $\a$-R\'enyi entropies within additive error for different $\a$. Here the $\widetilde\cO$ notation omits $\log$ factors, $\gamma > 0$ is the lower bound of eigenvalues of a mixed state $\rho \in \CC^{d \times d}$, $\eta \coloneqq \frac{\tr(\rho^\a)^{-1}}{|1-\a|}$ is the quantity depending on $\a$ and $\rho$, and $\eps$ is the additive error of estimating $S_\a(\rho)$. QAE is short for quantum amplitude estimation.}\label{table:Renyi algs}
\end{table*}

The polynomial transformation implemented by QSVT lies in amplitudes of the outcome state, which could not be obtained by indirect measurements of the ancilla qubit in QSVT, thus most QSVT-based entropies estimation algorithms estimate the value either by applying amplitude estimation or combining with the $\mathsf{DQC1}$ model, and both of these methods increase the circuit size. Another approach is using a polynomial to estimate the square root of the function $\sqrt{f(x)}$, as shown by~\citet{wang2022new}. This approach makes QSVT compatible with indirect measurements, since the approximated function now can be represented by the probability of measuring, which is similar as in QPP. However, the problem is that sometimes $\sqrt{f(x)}$ could be more difficult to approximate than $f(x)$. For example, $f(x) = x$ is just a simple one-term polynomial, whereas $\sqrt{f(x)} = \sqrt{x}$ takes much more terms to precisely approximate. Thus such a method presented in~\cite{wang2022new} may lead to even worse complexity than previous ones in~\cite{gilyen2020distributional, subramanian2021quantum, gur2021sublinear}. We defer further discussion and comparison to Appendix~\ref{appendix:entropy table}.


\section{Hamiltonian simulation} \label{sec:hamiltonian}
{This section aims to investigate the application of our phase processing circuits in simulating the dynamics of a quantum system. Unsurprisingly, our algorithms can reproduce the optimal results introduced in \cite{low2017optimal}.} {Specifically, for Hamiltonian $H$}, given evolution time $t$ and simulation error $\eps$, the goal is to simulate the time evolution $e^{-iHt}$ with error $\eps$, i.e.,\ produce a unitary $U$ such that $\norm{U - e^{-iHt}}\leq \eps$. {Usually}, this is accomplished by designing a quantum circuit to simulate the operator $e^{-iHt}$ with high precision. {In the setup, we assume a block encoding of the Hamiltonian, which is a commonly used input model appearing in \cite{low2019hamiltonian, chakraborty2019power,gilyen2019quantum}. Before presenting the results, we recall the definition of block encoding.}


\subsection{Block Encoding} \label{sec:block enc intro}

A block encoding of a matrix $A\in \CC^{2^n \times 2^n}$ with spectral norm $\norm{A} \leq 1$ is a unitary matrix $U_A$ such that the upper-left block of the matrix is $A$, i.e.\ 
\begin{equation}
    U_A=\begin{bNiceMatrix}[columns-width=1.5em]
        A  & \cdot\\
        \cdot & \cdot
    \end{bNiceMatrix}.
\end{equation}
Then we can write $A = \left(\bra{0^{\otimes m}} \otimes I^{\otimes n}\right) U_A \left(\ket{0^{\otimes m}} \otimes I^{\otimes n}\right)$, where $m$ denotes the number of ancilla qubits in the block encoding. In other words, for any state $\ket{\psi} \in \CC^N$, we have $\left(\bra{0^{\otimes m}} \otimes I^{\otimes n}\right) U_A \ket{0^{\otimes m}, \psi} = A\ket{\psi}$. In particular, let $\ket{\psi_\lambda}$ be an eigenvector of $A$ with an eigenvalue $\lambda$, then we will have a state 
\begin{align}
    U_A \ket{0^{\otimes m},\psi_\lambda}=\lambda \ket{0^{\otimes m}, \psi_\lambda}+\sqrt{1-\lambda^2}\ket{\bot_\lambda},
    \label{eqn:psi}
\end{align}
where $\ket{\bot_\lambda}$ denotes a state orthogonal to $\ket{0^{\otimes m}, \psi_\lambda}$. In the above equation, we absorb the relative phase into states and ignore the global phase. To associate the spectrum of $A$ and its block encoding $U$, it has to be ensured that the subspace $\opspan\{\ket{0^{\otimes m}, \psi_\lambda}, U \ket{0^{\otimes m}, \psi_\lambda}\}$ is invariant under $U_A$. However, this is generally not true for an arbitrary block encoding. To resolve this issue, \citet{low2019hamiltonian} proposed a so-called ``qubitization'' technique that uses controlled-$U_A$ and controlled-$U^\dagger_A$ once to implement a qubitized block encoding $\widehat{U}_A$ preserving the subspace $\opspan\{\ket{0^{\otimes (m+1)}, \psi_\lambda}, \widehat{U}_A \ket{0^{\otimes (m+1)}, \psi_\lambda}\}$. Write
\begin{equation} \label{eqn:qubitization for block encoding}
    \widehat{U}_A \ket{0^{\otimes (m + 1)},\psi_\lambda}=\lambda \ket{0^{\otimes (m + 1)}, \psi_\lambda}+\sqrt{1-\lambda^2}\ket{\widehat{\bot}_\lambda},
\end{equation}
where $\ket{\widehat{\bot}_\lambda}$ denotes a state orthogonal to $\ket{0^{\otimes (m + 1)}, \psi_\lambda}$. Then
\begin{equation}
    \ket{\c_\lambda^\pm} = \frac{1}{\sqrt{2}}( \ket{0^{\otimes (m + 1)}, \psi_\lambda} \pm i\ket{\widehat{\bot}_\lambda})
\end{equation}
are eigenstates of $\widehat{U}_A$ with eigenphases $\pm \tau_\lambda = \pm\arccos\lambda$. Now the spectrum of $A$ and that of $\widehat{U}_A$ are associated, which allows us to process and extract the spectrum of an arbitrary matrix $A$ by applying QPP on $\widehat{U}_A$. More details of qubitization are discussed in Appendix~\ref{appendix:qubitization intro}. In the next section, we will show how to use QPP to solve the Hamiltonian simulation problem.

\subsection{Hamiltonian simulation}

Suppose we have a block encoding $U_H=[\begin{smallmatrix} H/\Lambda & \cdot \\ \cdot & \cdot \end{smallmatrix}]$. With out loss of generality, we may assume $\norm{H} \leq 1$ and $\L = 1$, since for $\norm{H} > 1$ the problem can be considered as simulating evolution under the rescaled Hamiltonian $H/\L$ for time $\L t$. Recall that the qubitization establishes the relation between the eigenvalues of the Hamiltonian $H$ and eigenphases of its qubitized block encoding $\widehat{U}_H$, i.e.\ $\tau_\lambda = \arccos(\lambda)$. Since the time-evolution operator $e^{-iHt}$ can be decomposed as $e^{-i\lambda t}$, the main idea is to transform eigenphases of $\widehat{U}_H$ by Theorem \ref{thm:qpp evolution} as $\tau_\lambda \mapsto e^{-i\lambda t}$. We select a trigonometric polynomial $F(x)$ that approximates the function $f(x)=e^{-i\cos(x) t}$ with desired precision. Then applying the \tp{} $F(x)$ on each eigenphase $\tau_\lambda$ approximates 
\begin{equation}
    f(\tau_\lambda) = e^{-i\cos(\tau_\lambda) t} = e^{-i\lambda t},
\end{equation}
which provides a precise approximation of the time-evolution operator $e^{-iHt}$. The query complexity of Hamiltonian simulation is analyzed in the following theorem, and the detailed proof is deferred to Appendix~\ref{appendix:hamiltonian simulation}.

\begin{theorem}\label{thm:ham simulation complexity}
Given a block encoding $U_H$ of $H/\Lambda$ for some $\Lambda \geq \norm{H}$, there exists an algorithm that simulates evolution under the Hamiltonian $H$ for time $t \in \mathbb{R}$ within precision $\delta>0$, using two ancilla qubits and querying controlled-$U_H$ and controlled-$U_H^\dagger$ for a total number of times in \[\Theta\left(\Lambda|t|+\frac{\log(2/\delta^2)}{\log\left(e+\frac{\log(2/\delta^2)}{\Lambda|t|}\right)}\right).\]
\end{theorem}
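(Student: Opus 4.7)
The plan is to reduce Hamiltonian simulation to quantum phase evolution (Theorem~\ref{thm:qpp evolution}) via the qubitization of the block encoding, then choose a trigonometric polynomial that approximates the shifted exponential $e^{-i\cos(x)t}$ on $[-\pi,\pi]$. First, I would rescale so that it suffices to simulate $e^{-iHt}$ with $\norm{H}\leq 1$ by replacing $t\leftarrow \Lambda t$; this absorbs $\Lambda$ into the time parameter. Next, using one ancilla qubit I would qubitize $U_H$ as in Eq.~\eqref{eqn:qubitization for block encoding} to obtain $\widehat{U}_H$, whose eigenphases are $\pm \tau_\lambda = \pm \arccos(\lambda)$ for every eigenvalue $\lambda$ of $H$. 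The crucial point is that on each two-dimensional invariant subspace $\mathrm{span}\{\ket{\chi_\lambda^+},\ket{\chi_\lambda^-}\}$, any trigonometric polynomial $F$ applied to the eigenphases of $\widehat{U}_H$ evaluates to $F(\arccos\lambda)$ on $\ket{0^{\otimes(m+1)},\psi_\lambda}$ (up to components orthogonal to $\ket{0^{\otimes(m+1)}}$ in the block-encoding ancilla).

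The core approximation step is the Jacobi--Anger expansion
\begin{equation}
    e^{-i\cos(x) t} \;=\; \sum_{k=-\infty}^{\infty} (-i)^k J_k(t)\, e^{ikx},
\end{equation}
where $J_k$ denotes the $k$-th Bessel function of the first kind. I would truncate this series at $|k|\leq L$ to define
\begin{equation}
    F(x) \;\coloneqq\; \sum_{k=-L}^{L} (-i)^k J_k(t)\, e^{ikx},
\end{equation}
and use the standard tail estimate $\sum_{|k|>L}|J_k(t)|=\cO\bigl((e|t|/(2L))^{L}\bigr)$ (valid once $L\gtrsim e|t|$) to force $\sup_{x}|F(x)-e^{-i\cos(x)t}|\leq \delta/2$. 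Inverting this bound via Stirling yields the claimed degree
\begin{equation}
    L \;=\; \Theta\!\left(|t| + \frac{\log(2/\delta^2)}{\log\!\left(e + \frac{\log(2/\delta^2)}{|t|}\right)}\right).
\end{equation}
Because $F$ may not satisfy $|F(x)|\leq 1$ exactly, I would rescale $F$ by a factor $1-\cO(\delta)$ so that Theorem~\ref{thm:qpp evolution} applies; this rescaling contributes at most an additional $\cO(\delta)$ to the final operator-norm error.

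Having fixed $F$, I would invoke Theorem~\ref{thm:qpp evolution} to obtain a QPP circuit $V(\widehat{U}_H)$ of $2L$ layers, using one more ancilla qubit, whose upper-left block equals $F(\widehat{U}_H)$. Acting on an input of the form $\ket{0}\ket{0^{\otimes(m+1)}}\ket{\psi}$ and projecting the two ancilla registers back onto $\ket{0}\ket{0^{\otimes(m+1)}}$ realizes the operator $F(\widehat{U}_H)$ restricted to the physical register; by the spectral correspondence established by qubitization, this operator agrees with $e^{-iHt}$ on eigenvectors of $H$ up to the uniform error of $F$. Counting resources: the QPP ancilla plus the qubitization ancilla give the advertised \textbf{two ancilla qubits} beyond the block-encoding register; each of the $2L$ layers uses one controlled-$\widehat{U}_H$ or controlled-$\widehat{U}_H^\dagger$, and each $\widehat{U}_H$ call consumes one controlled-$U_H$ and one controlled-$U_H^\dagger$, so the total query count to controlled-$U_H$ and its inverse matches the stated $\Theta(\Lambda|t| + \log(2/\delta^2)/\log(e+\log(2/\delta^2)/(\Lambda|t|)))$.

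The main technical hurdle is the quantitative Bessel-function tail bound that produces the non-standard logarithmic expression in the query count; this is a classical but delicate estimate (essentially the one used in Low--Chuang's optimal Hamiltonian simulation). A secondary, purely bookkeeping obstacle is verifying that $F$ (after rescaling) satisfies the normalization hypothesis of Theorem~\ref{thm:qpp evolution} and that the rescaling, qubitization, and truncation errors compose to at most $\delta$ in operator norm; the matching $\Omega$ lower bound on the query cost follows from the known no-fast-forwarding / parameter-counting lower bounds for Hamiltonian simulation, which I would cite rather than reprove.
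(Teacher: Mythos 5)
Your approach mirrors the paper's — qubitize $U_H$, then apply QPP evolution (Theorem~\ref{thm:qpp evolution}) to a Jacobi--Anger truncation of $e^{-i\cos(x)\Lambda t}$ — but two points in your writeup are left imprecise where the paper is careful. First, you target a function approximation error of $\delta/2$ and project the ancillas onto $\ket{0}\ket{0^{\otimes(m+1)}}$; the paper instead derives (Eq.~\eqref{eqn:error}) that a uniform function error $|P(x)-e^{-i\cos(x)t}|\leq\eps$ gives a state error of at most $\sqrt{2\eps}$ \emph{without} post-selection, because the amplitude in the $Q$-component of the QSP unitary is simply lost, and therefore sets the truncation error to $\delta^2/2$ — which is exactly why the theorem's bound reads $\log(2/\delta^2)$ rather than $\log(2/\delta)$. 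Your post-selected variant is legitimate (it is precisely the paper's remark immediately after the theorem) and matches the $\Theta$-bound up to constants, but you do not track the success probability or renormalization, so as written it does not quite establish the stated bound. Second, your claim that \emph{any} trigonometric polynomial $F$ evaluated at the eigenphases of $\widehat{U}_H$ acts on $\ket{0^{\otimes(m+1)},\psi_\lambda}$ as multiplication by $F(\arccos\lambda)$ is not correct in general: that state is an equal superposition of two eigenvectors of $\widehat{U}_H$ with eigenphases $\pm\arccos\lambda$, so the coefficient it actually picks up is $\tfrac{1}{2}\bigl(F(\arccos\lambda)+F(-\arccos\lambda)\bigr)$. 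This is harmless here only because the Jacobi--Anger truncation is an even function of $x$, which follows from the Bessel identity $J_{-k}(t)=(-1)^k J_k(t)$; the paper's appendix proof decomposes the input over both $\ket{\chi_\lambda^\pm}$ explicitly, and you should make this evenness check explicit rather than asserting the statement for arbitrary $F$.
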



{For remark, one could measure the ancilla qubit after applying our circuit, and the residual state is highly approximate to the desired state.} In this case, we can relax the function approximation error. The final state approximates the target state with an error of $\delta$, succeeding with a probability at least $1-2\delta$. {As a result}, the circuit depth can be reduced.

From the result, we conclude that QPP solves Hamiltonian simulation problems with the access to block encoding, and the query complexity matches the optimal results as in~\cite{low2019hamiltonian}. We notice that QPP could also solve other Hamiltonian problems like spectrum estimation, ground state energy estimation, and ground state preparation. Details are in Appendix \ref{appendix:hamiltonian simulation}.


\section{Concluding remarks}


In this paper, we introduce a new quantum phase processing (QPP) algorithmic toolbox based on applying trigonometric transformations to eigenphases of a unitary operator. The toolbox allows us to implement a desired evolution to the input state and, more interestingly, to extract the eigen-information of quantum systems by simply measuring the ancilla qubit. We demonstrate the capability of this toolbox by developing QPP-based algorithms for solving a variety of problems. Owing to the capability of QPP to directly process eigenphases, we design an efficient and intuitive phase estimation algorithm purely based on the idea of binary search, which uses only one ancilla qubit and matches the best prior performance. Utilizing block encoding and qubitization, we show that QPP could reproduce and even improve previous quantum algorithms based on the framework of QSVT, such as Hamiltonian simulation and quantum entropies estimation.

QPP generalizes the trigonometric QSP by extending the $R_z$ rotation instead of the reflection in the Chebyshev-based QSP as QSVT did, providing a new interpretation for lifting QSP to multiple qubits. Due to the common underlying logic between these two QSP conventions, it is unavoidable to find similarities between QPP and QSVT. However, due to the distinctions between trigonometric and Chebyshev-based QSP, our results show that QPP is a powerful toolbox for developing quantum algorithms related to eigenphase transformation and processing, which essentially complements the existing QSVT framework. On one hand, QPP implements arbitrary complex trigonometric polynomial, which overcomes the parity constraints in QSVT and thus exempts the use of linear-combination-of-unitaries in certain cases. On the other hand, QPP is natively compatible with indirect measurements, which could extract eigen-information of the main system via measuring the single ancilla qubit. Notably, QPP could work without amplitude estimation, which requires shorter circuits and less coherence time than QSVT, and hence might be more friendly to near-term quantum hardware. 
QPP natively inherits the trick of phase kickback, and thus it is particularly suitable to develop phase-related quantum algorithms. Other than applications mentioned in this paper, QPP can be potentially applied to other problems, including but not limited to R\'enyi divergence estimation, unitary trace estimation, and quantum Monte Carlo method. Moreover, considering the connections between quantum signal processing and single-qubit quantum neural networks~\cite{perez-salinas2020data, yu2022power}, our results may shed lights and lead to  applications of QSVT and QPP in the area of quantum machine learning~\cite{Cerezo2022,Biamonte2017a,Li2022,Du2023,Wang2020d,Romero2017,Cerezo2020,Yu2022,Larocca2022,Liu2022c}. Overall, we believe that the QPP algorithmic toolbox would deepen our understanding of quantum algorithm design and shed light on the search for quantum applications in quantum physics, quantum chemistry, machine learning, and beyond.

\textbf{Acknowledgement.--}
Part of this work was done when Y. W.,  L. Z., Z. Y., and X. W. were at Baidu Research. X. W.  acknowledges support from the Start-up Fund from the Hong Kong University of Science and Technology (Guangzhou). 
Y. W. acknowledges support from the Baidu-UTS AI Meets Quantum project, the National Natural Science Foundation of China (No. 62071240), and the Innovation Program for Quantum Science and Technology (No. 2021ZD0302900).


%


\appendix

\vspace{2cm}
\onecolumngrid
\vspace{2cm}

\begin{center}
\textbf{
{\Large{Appendix}}}
\end{center}

\renewcommand{\theequation}{\thesection.\arabic{equation}}
\renewcommand{\thetheorem}{S\arabic{theorem}}
\renewcommand{\theproposition}{S\arabic{proposition}}
\renewcommand{\thelemma}{S\arabic{lemma}}
\renewcommand{\theremark}{S\arabic{remark}}
\renewcommand{\thecorollary}{S\arabic{corollary}}
\renewcommand{\thefigure}{S\arabic{figure}}
\renewcommand{\thetable}{S\arabic{table}}
\setcounter{equation}{0}
\setcounter{table}{0}
\setcounter{proposition}{0}
\setcounter{lemma}{0}
\setcounter{corollary}{0}
\setcounter{figure}{0}
\setcounter{remark}{0}
\setcounter{figure}{0}
\setcounter{table}{0}

In this supplementary material, we provide detailed analyses of results stated in this paper, and discuss some further applications of the QPP algorithmic toolbox.

\section{A brief summary of common QSP conventions}

\begin{table}[htbp]
\centering
\setlength{\tabcolsep}{1em}
\begin{tabular}{ccc}
\toprule
Conventions of QSP & QSP circuits & Type of polynomials \\
\midrule
\multirow{2}{*}{$W_x$~\cite{low2016methodology}}  & \multirow{2}{*}{\raisebox{-0.3\totalheight}{\includegraphics[width=0.26\textwidth]{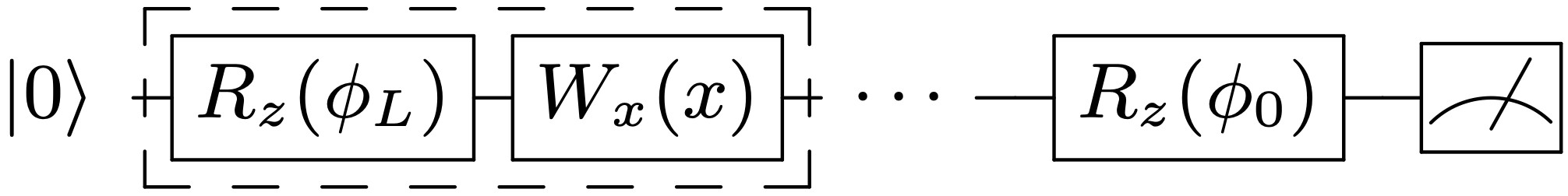}}} & \multirow{2}{*}{$\CC[x]$ with parity $L \bmod{2} $}  \\
\addlinespace
{} & {} & {and other constraints} \\
\midrule
\addlinespace
\multirow{2}{*}{Reflection~\cite{gilyen2019quantum}} & \multirow{2}{*}{\raisebox{-0.3\totalheight}{\includegraphics[width=0.25\textwidth]{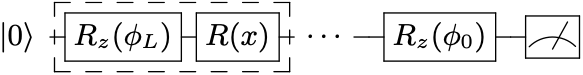}}} & \multirow{2}{*}{$\CC[x]$ with parity $L \bmod{2} $} \\
\addlinespace
{} & {} & {and other constraints} \\
\midrule
\addlinespace
\multirow{2}{*}{$W_z$~\cite{chao2020findinga, haah2019product}} &  \multirow{2}{*}{\raisebox{-0.3\totalheight}{\includegraphics[width=0.25\textwidth]{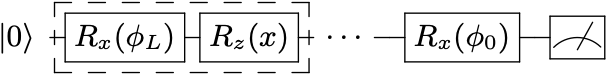}}} & \multirow{2}{*}{$\RR[e^{ix/2}, e^{-ix/2}]$ with parity $L \bmod{2} $} \\
\addlinespace
{or, trigonometric} & {} & {} \\
\midrule
\addlinespace
\multirow{2}{*}{Improved} & \multirow{2}{*}{\raisebox{-0.3\totalheight}{\includegraphics[width=0.42\textwidth]{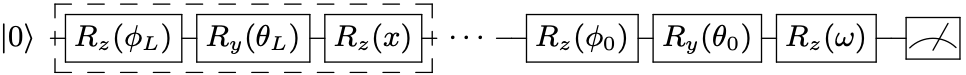}}} & \multirow{2}{*}{$\CC[e^{ix/2}, e^{-ix/2}]$ with parity $L \bmod{2} $}\\
\addlinespace
{trigonometric~\cite{yu2022power}} & {} & {} \\
\bottomrule
\end{tabular}
\caption{A table that summarizes the quantum circuits and the corresponding types of realizable polynomials retrieved by various conventions of QSP, where most of them were originally done by \citet{martyn2021grand} in their Appendix A. 
Note that $W_x(x) \coloneqq \begin{bmatrix} x & i\sqrt{1-x^2}\,\, \\ i\sqrt{1-x^2} & x \end{bmatrix}$, $R(x) \coloneqq \begin{bmatrix} x & \sqrt{1-x^2}\,\, \\ \sqrt{1-x^2} & -x \end{bmatrix}$, and all polynomials are assumed to be normalized.}
\label{tab:qsp circuits}
\end{table}

As shown in Table~\ref{tab:qsp circuits}, the trigonometric QSP can use a single-qubit quantum system to make arbitrary transformation of a polynomial in $\CC[e^{ix/2}, e^{-ix/2}]$ with parity. However, since all normalized and square-integrable functions can be approximated by the polynomials in $\CC[e^{ix/2}, e^{-ix/2}]$ with parity $0$ i.e.\ the polynomials in $\CC[e^{ix}, e^{-ix}]$, the trigonometric QSP can be claimed to be parity-free.

\section{Theorems of Quantum Phase Processing}

\subsection{Existence of Laurent complementary and angle finding }\label{appendix:angle finding}




Other than the characterization of trigonometric QSP in Lemma~\ref{lem:trig_qsp}, we also need to specify the achievable trigonometric polynomials $P(x)$ for which there exists a corresponding $Q(x)$ satisfying the three conditions in Lemma~\ref{lem:trig_qsp}. First we prove the following lemma, using similar ideas of root finding from previous works~\cite{haah2019product, chao2020findinga, silva2022fourierbased}.


\begin{lemma} \label{lem:sqrt existence}
    Suppose $A(x)$ is a non-negative real-valued \tp{}. Then there exists a Laurent polynomial $Q \in \CC[e^{ix/2}, e^{-ix/2}]$ such that $QQ^* = A$.
\end{lemma}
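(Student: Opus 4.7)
The plan is to reduce the statement to the classical Fej\'er--Riesz factorization via the substitution $y = e^{ix/2}$. Writing $A(x) = \sum_{k=-L}^{L} a_k e^{ikx}$, the non-negative trigonometric polynomial $A$ becomes a Laurent polynomial in $y$ on the unit circle $|y|=1$; multiplying through by $y^{2L}$ clears denominators and yields an honest polynomial $p(y) = y^{2L} A(y)$ of degree at most $2L$. Reality of the coefficients, $a_{-k} = \overline{a_k}$, is equivalent to the self-reciprocal identity $p(y) = y^{2L}\,\overline{p(1/\overline{y})}$, so the zero set of $p$ is invariant under the inversion $\alpha \mapsto 1/\overline{\alpha}$ through the unit circle.

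I would then analyse the roots of $p$. Off-circle roots pair up as $\{\alpha,\,1/\overline{\alpha}\}$ with equal multiplicities. For a root $\beta$ with $|\beta|=1$, the hypothesis $A \geq 0$ on $\RR$ forces the multiplicity of $\beta$ to be even, since otherwise $A(x)$ would change sign as $x$ crosses the angle $\arg\beta$. Collecting, I can factor
\[
p(y) = c \prod_{j} (y - \alpha_j)(y - 1/\overline{\alpha_j}) \prod_{k} (y - \beta_k)^2,
\]
where the $\alpha_j$ are chosen once per off-circle pair and the $\beta_k$ enumerate distinct unit-circle roots with half multiplicity; a degree count gives that the total number of $\alpha_j$ plus $\beta_k$ equals $L$.

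The main calculation is to simplify the factors on $|y|=1$ using $\overline{y} = 1/y$. A direct computation gives $y - 1/\overline{\alpha} = -(y/\overline{\alpha})\,\overline{(y - \alpha)}$, whence $(y-\alpha)(y-1/\overline{\alpha}) = -(y/\overline{\alpha})\,|y-\alpha|^2$ on the unit circle; similarly, for $|\beta|=1$ one has $(y-\beta)^2 = -y\beta\,|y-\beta|^2$. Substituting these into the factorization and absorbing all prefactors produces, on $|y|=1$,
\[
A(y) \;=\; y^{-2L}\,p(y) \;=\; \lambda \,\Bigl|\, \textstyle\prod_{j}(y-\alpha_j)\prod_{k}(y-\beta_k)\Bigr|^2,
\]
for a scalar $\lambda$ built from $c$, the $\overline{\alpha_j}^{-1}$ and the $\beta_k$. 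Since the left-hand side is non-negative and the squared modulus on the right is non-negative, $\lambda$ must be real and non-negative. Setting $Q(y) = \sqrt{\lambda}\,\prod_{j}(y-\alpha_j)\prod_{k}(y-\beta_k)$ gives a polynomial in $y = e^{ix/2}$, hence an element of $\CC[e^{ix/2}, e^{-ix/2}]$, and by construction $QQ^* = A$.

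The part I expect to be the main obstacle is the root-multiplicity step: making rigorous that every unit-circle root of $p$ has even multiplicity, and equivalently that the scalar $\lambda$ arising from collecting all the boundary simplifications is both real and non-negative. Both points are local-sign arguments, but they are precisely what makes the factorization genuine over $\CC$ and allows the square root to be extracted in the final step.
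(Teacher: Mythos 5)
Your proof follows essentially the same Fej\'er--Riesz factorization route as the paper's: pass to the polynomial $p(y)$ in $y = e^{ix/2}$, observe that reality and non-negativity make the root multiset invariant under $\alpha \mapsto 1/\overline{\alpha}$ with even multiplicity on the unit circle, split the roots into two halves, and take $Q$ built from one half of each pair --- you are in fact a bit more explicit than the paper about why unit-circle roots must have even multiplicity, which is precisely what makes the split well-defined. The only slip is bookkeeping: with your convention $A(x) = \sum_{k=-L}^{L} a_k e^{ikx}$ one has $A(y) = \sum_k a_k y^{2k}$, so $p(y) = y^{2L}A(y)$ has degree $4L$ (not ``at most $2L$''), and correspondingly the number of $\alpha_j$'s plus $\beta_k$'s should be $2L$, not $L$; none of this affects the substance of the argument.
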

\begin{proof}
Denote $A(x) = \sum_{j=-2L}^{2L} a_j e^{i j x / 2}$. We can decompose $A$ by the set of $4L$ roots $\{\x_k\}_{k=1}^{4L}$ so that
\begin{equation}
    A(x) = a_{2L} e^{-i L x} \prod_{k=1}^{4L} (e^{ix / 2} - \x_k),
\end{equation}
where $\{\x_k\}_{k=1}^{4L}$ can be efficiently found by computing all roots of a regular complex polynomial $g(\x) \coloneqq \sum_{j = 0}^{4L} a_{j - 2L} \x^{j}$. In particular, since $A$ is real and non-negative, these roots appear in inverse conjugate pairs i.e.\ $\{\x_k\}_{k=1}^{4L} = \{\x_k, \frac{1}{\x_k^*}\}_{k=1}^{2L}$. Then $A$ can be further simplified to
\begin{equation}
A(x) = a_{2L}e^{-i L x}\left[\prod_{k=1}^{2L} (e^{ix / 2} - \x_k)\right] \left[\prod_{k=1}^{2L}(e^{ix / 2} - \frac{1}{\x_k^*})\right]
.\end{equation}
From the fact $e^{ix / 2} - \x_k = -e^{ix / 2}\x_k(e^{-ix /2} - \frac{1}{\x_k})$, we have
\begin{equation}
A(x) = a_{2L}\left[\prod_{k=1}^{2L} \x_k\right] \left[\prod_{k=1}^{2L} (e^{-ix / 2} - \frac{1}{\x_k})\right] \left[\prod_{k=1}^{2L}(e^{ix / 2} - \frac{1}{\x_k^*})\right]
.\end{equation}
Let $q \coloneqq a_{2L}\prod_{k=1}^{2L} \x_k$. Note that $q$ is real since 
\begin{equation}
A(0) = a_{2L}g(0) = a_{2L}\prod_{k=1}^{2L} \frac{\x_k}{\x_k^*} = \frac{a_{2L} (\prod_{k=1}^{2L} \x_k)^2}{\prod_{k=1}^{2L} |\x_k|^2}
\end{equation}
is real. Define $Q(x) \coloneqq \sqrt{q} e^{-i L x / 2} \prod_{k=1}^{2L}(e^{ix / 2} - \frac{1}{\x_k^*})$ and the result follows.
\end{proof}

Using Lemma~\ref{lem:sqrt existence}, we show that there exists a $Q$ such that $QQ^* = 1 - PP^*$ for any trigonometric series $P$ satisfying $|P|^2 \le 1$.

\begin{lemma}[Existence of Laurent complementary] \label{lem:Laurent complement}
Let $P \in \CC[e^{ix/2}, e^{-ix/2}]$ be a Laurent polynomial with degree no larger than $L$ and parity $L\bmod{2}$. Then $\abs{P(x)} \leq 1$ for all $x \in \RR$ if and only if there exists a Laurent polynomial $Q \in \CC[e^{ix/2}, e^{-ix/2}]$ satisfying
\begin{enumerate}
    \item $\deg(Q) \leq L$,
    \item $Q$ has parity $L\bmod{2}$,
    \item $\forall x\in \RR$, $|P(x)|^2 + |Q(x)|^2 = 1$.
\end{enumerate}
\end{lemma}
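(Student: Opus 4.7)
The backward direction is immediate: the identity $|P(x)|^2 + |Q(x)|^2 = 1$ forces $|P(x)|^2 \le 1$ for every real $x$. All of the work lies in the forward direction, which I will handle by constructing $Q$ as a spectral factor of the ``complementary'' non-negative Laurent polynomial $A(x) \coloneqq 1 - |P(x)|^2$.

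The construction hinges on a parity observation. Since both $P$ and $P^*$ have parity $L \bmod 2$, every exponent appearing in $PP^*$ is a difference of two integers of the same parity, hence even. Consequently $A \in \CC[e^{ix}, e^{-ix}]$, and setting $y = e^{ix}$ we may view $A$ as a Laurent polynomial $\widetilde A(y) \in \CC[y, y^{-1}]$ of degree at most $L$ which is non-negative on the unit circle. I then apply the Fej\'er--Riesz-type factorization underlying the proof of Lemma~\ref{lem:sqrt existence}, but performed in the integer-power ring $\CC[y, y^{-1}]$ rather than in $\CC[e^{ix/2}, e^{-ix/2}]$, to produce an ordinary polynomial $Q_1(y) \in \CC[y]$ with $\deg Q_1 \le L$ satisfying $|Q_1(e^{ix})|^2 = \widetilde A(e^{ix})$ for every real $x$. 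Finally I set
\[
Q(x) \coloneqq e^{-iLx/2}\, Q_1(e^{ix}) = \sum_{m=0}^{L} q_m\, e^{i(2m - L)x/2},
\]
so that the exponents $2m - L$ sweep out $\{-L, -L+2, \ldots, L-2, L\}$. This automatically delivers $\deg Q \le L$, parity $L \bmod 2$, and $|Q(x)|^2 = \widetilde A(e^{ix}) = 1 - |P(x)|^2$ as required.

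The main obstacle is matching the parity condition (2). A direct invocation of Lemma~\ref{lem:sqrt existence} on $A$ treated as a half-integer Laurent polynomial would yield some $Q$ with $|Q|^2 = A$ but no a priori control on its parity, because the parity-$0$ structure of $A$ forces the roots of its defining complex polynomial to come in additional $\pm\beta$ pairs, and a naive choice among the inverse-conjugate pairs generically mixes both parities. Performing the factorization in the integer variable $y$ and then restoring the half-integer exponents via the phase prefactor $e^{-iLx/2}$ selects the correct parity shift by construction, sidestepping this root-choice ambiguity entirely.
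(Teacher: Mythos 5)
Your argument is correct, and it is a genuinely more careful route than the paper's. The paper's proof simply invokes Lemma~\ref{lem:sqrt existence} to produce some $Q$ with $QQ^* = 1 - PP^*$ and then asserts that conditions (1) and (2) ``follow by the fact that $\deg(P) \le L$ and $P$ has parity $L\bmod 2$.'' The degree bound does follow, but the parity claim is not automatic: Lemma~\ref{lem:sqrt existence} places no parity constraint on $Q$, and the spectral factor produced there depends on how one splits the $4L$ roots into inverse-conjugate pairs. Because the defining polynomial $g(\xi)$ of $A = 1 - PP^*$ has only even-degree terms, its roots come in quadruples $\{\pm\xi, \pm 1/\xi^*\}$, and only certain selections (e.g., the ascending-modulus rule hard-coded in Algorithm~\ref{alg:angle finding}) preserve the $\pm$ symmetry needed for the resulting $Q$ to have the right parity; a generic selection breaks it. Your construction sidesteps this entirely: observing that $A$ lives in $\CC[e^{ix}, e^{-ix}]$, you apply Fej\'er--Riesz in the integer-power ring to obtain $Q_1(y) = \sum_{m=0}^L q_m y^m$ and then set $Q(x) = e^{-iLx/2} Q_1(e^{ix})$, so the exponents $2m - L$ automatically fill out $\{-L, -L+2, \ldots, L\}$ and parity $L \bmod 2$ as well as $\deg Q \le L$ are manifest by inspection, with no root-selection bookkeeping. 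Both approaches rest on the same spectral factorization, but yours makes the parity condition a consequence of the construction rather than an afterthought, which is a genuine improvement on the paper's exposition.
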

\begin{proof}
($\Longleftarrow$) The statement trivially holds from the third condition $|P(x)|^2 + |Q(x)|^2 = 1$.

($\Longrightarrow$) Suppose $P \in \CC[e^{ix/2}, e^{-ix/2}]$ is a Laurent polynomial satisfying above requirements. Note that if $\abs{P(x)} = 1$ for all $x \in \RR$, then the result follows by setting $Q = 0$. Suppose $\abs{P(x)} < 1$ for some $x \in \RR$. Then $A = 1 - PP^*$ is a real and positive Laurent polynomial. By Lemma~\ref{lem:sqrt existence}, there exists a Laurent polynomial $Q \in \CC[e^{ix/2}, e^{-ix/2}]$ such that $QQ^* = 1 - PP^*$ i.e.\ $|P(x)|^2 + |Q(x)|^2 = 1$ for all $x \in \RR$. The first and second conditions follow by the fact that $\deg(P) \leq L$ and $P$ has parity $L\bmod{2}$.
\end{proof}
\renewcommand{\thelemma}{S\arabic{lemma}}

Combined with Lemma~\ref{lem:Laurent complement} and Lemma~\ref{lem:trig_qsp}, now we can compute the corresponding rotation angles $\a$, $\bm\theta$ and $\bm\phi$ of $\UQSP$ for a given trigonometric polynomial $P(x)$, as shown in Algorithm~\ref{alg:angle finding}. 

\begin{figure}[ht]
\begin{algorithm}[H]
\caption{Angle Finding}
\label{alg:angle finding}
\begin{algorithmic}[1]
    \REQUIRE A Laurent polynomial $P \in \CC[e^{ix/2}, e^{-ix/2}]$ such that $\abs{P(x)} \leq 1$ for all $x \in \RR$.
    \ENSURE Rotation parameters $\omega, \bm\theta, \bm\phi$ such that $\bra{0}\UQSP(x)\ket{0} = P(x)$ for all $x \in [-\pi, \pi]$.

\STATE Compute the set of roots $\{\x_k\}_{k=1}^{4L}$ and leading coefficient $a_{2L}$ from real-valued and non-negative trigonometric polynomial $A(x) = 1 - P(x)P^*(x)$, where $L$ is the degree of $P$. Sort the set by ascending modulus and determine
\begin{equation}
    Q \leftarrow \sqrt{a_{2L}\prod_{k=1}^{2L}\x_k}e^{-i L x / 2}\prod_{k=1}^{2L}(e^{ix / 2} - \frac{1}{\x_k^*})
\end{equation}

\WHILE{$\deg(P) > 0$}

\STATE $k \leftarrow \deg(P)$, $p_k \leftarrow P[e^{i k x / 2}]$, $p_{-k} \leftarrow P[e^{-i k x / 2}]$, $q_k \leftarrow Q[e^{i k x / 2}]$ and $q_{-k} \leftarrow Q[e^{-i k x / 2}]$.

\STATE Determine $\theta_k, \phi_k$ such that $\cos{\frac{\theta_k}{2}}e^{i\frac{\phi_k}{2}}p_{k}+\sin{\frac{\theta_k}{2}}e^{-i\frac{\phi_k}{2}}q_k = 0$, $-\sin{\frac{\theta_k}{2}}e^{i\frac{\phi_k}{2}}p_{-k}+\cos{\frac{\theta_k}{2}}e^{-i\frac{\phi_k}{2}}q_{-k} = 0$.

\STATE Update polynomials (simultaneously) such that 
\begin{equation}
    P \leftarrow e^{i\frac{\phi_k}{2}}\cos\frac{\theta_k}{2}e^{ix / 2} P + e^{-i\frac{\phi_k}{2}}\sin\frac{\theta_k}{2}e^{ix / 2} Q, \quad Q \leftarrow e^{-i\frac{\phi_k}{2}}\cos\frac{\theta_k}{2}e^{-ix / 2} Q - e^{i\frac{\phi_k}{2}}\sin\frac{\theta_k}{2}e^{-ix / 2} P
\end{equation}
\ENDWHILE

\STATE Determine $\omega, \theta_0, \phi_0$ such that
$
R_z(\omega)R_y(\theta_0)R_z(\phi_0) =  
    \begin{bmatrix}
        P & -Q \\
        Q^* & P^*
    \end{bmatrix}
$; $\bm{\theta} \leftarrow (\theta_0, ..., \theta_L)$, $\bm{\phi} \leftarrow (\phi_0, ..., \phi_L)$.


\STATE Return $\omega$, $\bm\theta$ and $\bm\phi$.
\end{algorithmic}
\end{algorithm}
\end{figure}

\subsection{Proofs of Corollary~\ref{coro:trig_qsp_proj} and Corollary~\ref{coro:trig_qsp_Z_measurement}}\label{appendix:trig_qsp_outcome}
\renewcommand\thecorollary{\ref{coro:trig_qsp_proj}}
\setcounter{corollary}{\arabic{corollary}-1}
\begin{corollary}
For any complex-valued trigonometric polynomial $F(x) = \sum_{j=-L}^L c_j e^{ijx}$ with \highlight{$\abs{F(x)} \leq 1$ for all $x \in \RR$}, there exist $\omega \in \RR$ and $\bm\theta,\bm\phi \in \RR^{2L+1}$ such that for all $x\in \RR$,
\begin{equation}
    \bra{0} \UQSP^{2L}(x) \ket{0} = F(x)
.\end{equation}
\end{corollary}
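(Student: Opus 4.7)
The plan is to derive Corollary~\ref{coro:trig_qsp_proj} directly from Lemma~\ref{lem:trig_qsp} and the Laurent complementary result (Lemma~\ref{lem:Laurent complement}), by verifying that $F(x)$ fits into the top-left slot of the matrix in Eq.~\eqref{eqn:trig form of QSP_}. The key observation is that a complex trigonometric polynomial $F(x) = \sum_{j=-L}^L c_j e^{ijx}$ can be rewritten as a Laurent polynomial in $\CC[e^{ix/2}, e^{-ix/2}]$ of degree at most $2L$ whose coefficients for all odd powers of $e^{ix/2}$ vanish, i.e.\ $F$ has parity $0 = 2L \bmod 2$. This is exactly the degree and parity data required to match $P$ in Lemma~\ref{lem:trig_qsp} with $L$ replaced by $2L$.

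First I would spell out this identification, setting $P \coloneqq F$ and checking conditions~\ref{item:cond1} and~\ref{item:cond2} of Lemma~\ref{lem:trig_qsp} for the $2L$-layer QSP. Then, since by hypothesis $\abs{F(x)} \leq 1$ for all $x \in \RR$, I would invoke Lemma~\ref{lem:Laurent complement} (whose proof in turn relies on Lemma~\ref{lem:sqrt existence} for factoring the non-negative trigonometric polynomial $1 - FF^*$) to produce a Laurent complement $Q \in \CC[e^{ix/2}, e^{-ix/2}]$ with $\deg(Q) \leq 2L$, parity $0$, and $\abs{F(x)}^2 + \abs{Q(x)}^2 = 1$ for all real $x$. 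This supplies condition~\ref{item:cond3}.

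With all three conditions of Lemma~\ref{lem:trig_qsp} verified for the pair $(F, Q)$, the ``if'' direction of that lemma yields the existence of $\omega \in \RR$ and $\bm\theta, \bm\phi \in \RR^{2L+1}$ so that
\begin{equation}
    \UQSP^{2L}(x) = \begin{bmatrix} F(x) & -Q(x) \\ Q^*(x) & F^*(x) \end{bmatrix}.
\end{equation}
Reading off the top-left entry then gives $\bra{0} \UQSP^{2L}(x) \ket{0} = F(x)$, as claimed.

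The only step that required nontrivial work is the existence of the Laurent complement $Q$, but this has already been handled by Lemmas~\ref{lem:sqrt existence} and~\ref{lem:Laurent complement} in Appendix~\ref{appendix:angle finding}; the corollary itself is essentially a bookkeeping translation between the Laurent-polynomial formalism of Lemma~\ref{lem:trig_qsp} (parametrized by degree $2L$ with parity $0$) and the exponential form $\sum_{j=-L}^L c_j e^{ijx}$ used in the statement. I expect no further obstacles beyond making this correspondence explicit.
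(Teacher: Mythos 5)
Your proof is correct and follows essentially the same route as the paper: view $F$ as a Laurent polynomial of degree at most $2L$ with parity $0$, invoke Lemma~\ref{lem:Laurent complement} to obtain the complementary $Q$ satisfying $\abs{F}^2 + \abs{Q}^2 = 1$, then apply Lemma~\ref{lem:trig_qsp} with $L$ replaced by $2L$ and read off the top-left entry. No gaps.
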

\begin{proof}
Note that $F$ is a Laurent polynomial with degree no larger than $2L$. By Lemma~\ref{lem:Laurent complement}, there exists a Laurent polynomial $G \in \CC[e^{ix/2}, e^{-ix/2}]$ such that $\deg(G) \leq 2L$, $G$ has parity $0$ and $|F(x)|^2 + |G(x)|^2 = 1$ for all $x \in \RR$. By Lemma~\ref{lem:trig_qsp}, there exists $\omega \in \RR$, $\bm\theta \in \RR^{2L+1}$ and $\bm\phi \in \RR^{2L+1}$ such that
\begin{equation}
    \UQSP^{2L} = \begin{bmatrix}
        F & -G \\
        G^* & F^*
    \end{bmatrix}.
\end{equation}
The result directly follows.
\end{proof}

\renewcommand\thecorollary{\ref{coro:trig_qsp_Z_measurement}}
\setcounter{corollary}{\arabic{corollary}-1}
\begin{corollary}
For any real-valued trigonometric polynomial $F(x) = \sum_{j=-L}^L c_j e^{ijx}$ with \highlight{$\abs{F(x)} \leq 1$ for all $x \in \RR$}, there exist $\omega \in \RR$ and $\bm\theta,\bm\phi \in \RR^{L+1}$ such that for all $x\in \RR$,
\begin{equation}
    f_W(x) \coloneqq \bra{0} \UQSP^L(x)^\dagger Z \UQSP^L(x)\ket{0} = F(x)
.\end{equation}
\end{corollary}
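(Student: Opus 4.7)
The plan is to reduce the claim to the existence of a single Laurent polynomial $P \in \CC[e^{ix/2}, e^{-ix/2}]$ whose square modulus realizes $(1+F(x))/2$ with the correct degree and parity, and then invoke Lemma~\ref{lem:trig_qsp} to package $P$ into a QSP sequence.

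First I would unfold the measurement. Writing $W^L(x) = \begin{bmatrix} P(x) & -Q(x) \\ Q^*(x) & P^*(x) \end{bmatrix}$ as guaranteed by Lemma~\ref{lem:trig_qsp}, a direct calculation yields
\begin{equation*}
    f_W(x) = \bra{0}\, W^L(x)^\dagger Z\, W^L(x) \ket{0} = |P(x)|^2 - |Q(x)|^2 = 2|P(x)|^2 - 1,
\end{equation*}
where the last equality uses condition~\ref{item:cond3} of Lemma~\ref{lem:trig_qsp}. Hence the corollary reduces to exhibiting $P \in \CC[e^{ix/2}, e^{-ix/2}]$ of degree $\leq L$ and parity $L \bmod 2$ with $P P^* = (1+F)/2$.

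Next I would check the structural properties of the target. Because $F$ is real-valued with $|F|\leq 1$, the function $A(x) \coloneqq (1+F(x))/2$ is a non-negative real trigonometric polynomial, and since $F \in \CC[e^{ix}, e^{-ix}]$ has degree $L$, so does $A$; viewed as a Laurent polynomial in $e^{ix/2}$, $A$ has degree $\leq 2L$ with parity $0$. Applying Lemma~\ref{lem:sqrt existence} to $A$ then produces some $P_0 \in \CC[e^{ix/2}, e^{-ix/2}]$ with $P_0 P_0^* = A$ and $\deg(P_0) \leq L$.

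The main obstacle is that the raw output of Lemma~\ref{lem:sqrt existence} need not have a single parity, whereas Lemma~\ref{lem:trig_qsp} insists on parity $L \bmod 2$. I would resolve this by exploiting a symmetry hidden in $A$: since $A$ only contains even powers of $e^{ix/2}$, the auxiliary polynomial $g(\xi)$ appearing in the proof of Lemma~\ref{lem:sqrt existence} is even in $\xi$, so its $4L$ roots organize into $\pm$-symmetric quadruples $\{\xi,-\xi,1/\xi^*,-1/\xi^*\}$ (with the degenerate cases $|\xi|=1$ handled separately via multiplicity). Choosing one inverse-conjugate representative from each quadruple \emph{in a way that is closed under negation} forces the resulting product $\prod_k(e^{ix/2}-r_k)$ to be a polynomial in $e^{ix}$; multiplying by the prefactor $e^{-iLx/2}$ then gives a $P$ whose non-vanishing powers of $e^{ix/2}$ all lie in $\{-L, -L+2, \ldots, L-2, L\}$, i.e.\ parity $L \bmod 2$, while preserving $PP^* = A$ and $\deg(P) \leq L$. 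With such $P$ in hand, Lemma~\ref{lem:Laurent complement} supplies a complementary $Q$ and Lemma~\ref{lem:trig_qsp} returns angles $\omega \in \RR$ and $\bm\theta, \bm\phi \in \RR^{L+1}$ realizing $W^L(x)$; substituting back into the identity of the first step gives $f_W(x) = 2|P(x)|^2 - 1 = F(x)$, as required.
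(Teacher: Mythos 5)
Your proof is correct and follows essentially the same route as the paper's: produce a Laurent factor $P$ of $(1+F)/2$ via the root-pairing construction of Lemma~\ref{lem:sqrt existence}, then package $P$ and a complementary $Q$ into a QSP sequence via Lemma~\ref{lem:trig_qsp}, giving $f_W = |P|^2 - |Q|^2 = F$. The paper runs Lemma~\ref{lem:sqrt existence} twice (on $(1 \pm F)/2$) to get $P$ and $Q$ separately; your shortcut $f_W = 2|P|^2 - 1$ combined with Lemma~\ref{lem:Laurent complement} is a modest economy, not a structurally different argument.

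The real value in your writeup is the parity discussion, which supplies a step the paper omits. Lemma~\ref{lem:sqrt existence}, as stated and proved, makes no parity promise: its output depends on which representative one picks from each inverse-conjugate pair of roots of $g(\xi)$, and a careless pick yields a $P$ of mixed parity, violating condition~\ref{item:cond2} of Lemma~\ref{lem:trig_qsp}. Concretely, for $F(x) = \tfrac{1}{10}(1 + \cos x)$ one finds $g(\xi) \propto \xi^4 + 22\xi^2 + 1$ with roots $\{\pm i a,\, \pm i/a\}$, $a \approx 0.213$; choosing the representatives $\{ia, -i/a\}$ gives $P(x) \propto e^{ix/2} + e^{-ix/2} - (a^{-1}-a)\,i$, which carries a nonzero even-degree (constant) term alongside the odd ones. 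Your observation that $A \in \CC[e^{ix}, e^{-ix}]$ forces $g$ to be a polynomial in $\xi^2$ --- so its roots are closed under negation, permitting a $\pm$-closed transversal of the inverse-conjugate pairs and hence a $P$ of parity $L \bmod 2$ --- is exactly the missing detail. The paper's phrase ``Observe that $P$ and $Q$ satisfy the three conditions'' tacitly assumes such a choice; interestingly, the modulus-sorting in Algorithm~\ref{alg:angle finding} does enforce it (since $\xi$ and $-\xi$ share a modulus), but the written proof never makes this explicit, and the same omission recurs in the paper's proof of Lemma~\ref{lem:Laurent complement}. You are right to flag and repair the gap.
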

\begin{proof}
Note that $(1 \pm F(x))/2$ are non-negative real-valued trigonometric polynomials. Then by Lemma~\ref{lem:sqrt existence} there exist two Laurent  polynomials $P, Q$ such that for all $x \in \RR$,
\begin{equation}
    P(x)P^*(x) = \frac{1 + F(x)}{2} \quad \textrm{and} \quad Q(x)Q^*(x) = \frac{1 - F(x)}{2}
.\end{equation}
Observe that $P$ and $Q$ satisfy the three conditions in Lemma~\ref{lem:trig_qsp}, thus there exists $\omega \in \RR$, $\bm\theta  \in \RR^{L+1}$ and $\bm\phi \in \RR^{L+1}$ such that
\begin{equation}
    \braket{0 | \UQSP^L(x)^\dagger Z \UQSP^L(x) | 0 }
     = P(x)P^*(x) - Q(x)Q^*(x) = F(x),
\end{equation}
for all $x\in \RR$.
\end{proof}
\renewcommand{\thecorollary}{S\arabic{corollary}}

\subsection{Proof of Lemma~\ref{lem:qpp eigenspace decomp}}\label{appendix:qpp eigenspace decomp}

\renewcommand\thelemma{\ref{lem:qpp eigenspace decomp}}
\setcounter{lemma}{\arabic{lemma}-1}
\begin{lemma} [Eigenspace Decomposition of QPP]
    Suppose $U$ is an $n$-qubit unitary with spectral decomposition 
    \begin{equation}
        U = \sum_{j=0}^{2^n - 1} e^{i \tau_j} \ketbra{\chi_j}{\chi_j}
    .\end{equation}
    For all $L \in \NN$, $\omega \in \RR$ and $\bm\theta,\bm\phi \in \RR^{L+1}$, we have
    \begin{equation}
        \UQPP^L(U) = \bigoplus_{j=0}^{2^n - 1} (e^{-i \tau_j/2})^{L\bmod{2}} \cdot \UQSP^L(\tau_j)_{\mathbb{B}_j}
    \end{equation}
    where $\mathbb{B}_j \coloneqq \{\ket{0, \chi_j}, \ket{1, \chi_j}\}$.
\end{lemma}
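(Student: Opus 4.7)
The proof reduces entirely to the spectral structure of $U$ and a careful bookkeeping of phases. My plan is to first observe that both controlled gates appearing in $\UQPP^L(U)$ act block-diagonally on the eigenbasis of $U$: for each eigenvector $\ket{\chi_j}$, the subspace $\mathbb{B}_j=\{\ket{0,\chi_j},\ket{1,\chi_j}\}$ is invariant, since $\begin{bmatrix} I & 0 \\ 0 & U\end{bmatrix}$ restricts to $\mathrm{diag}(1,e^{i\tau_j})$ and $\begin{bmatrix} U^\dagger & 0 \\ 0 & I\end{bmatrix}$ restricts to $\mathrm{diag}(e^{-i\tau_j},1)$. The single-qubit rotations $R_y^{(0)},R_z^{(0)}$ obviously preserve this block. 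Hence $\UQPP^L(U)$ decomposes as a direct sum over $j$ of two-dimensional operators acting on $\mathbb{B}_j$, and it suffices to identify this block-operator with (a phase multiple of) $\UQSP^L(\tau_j)$.

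The second step is to rewrite the controlled gates on $\mathbb{B}_j$ as $z$-rotations up to a scalar:
\begin{align}
\mathrm{diag}(1,e^{i\tau_j}) &= e^{i\tau_j/2}\,R_z(-\tau_j) = e^{i\tau_j/2}\,R_z(\tau_j)^\dagger, \\
\mathrm{diag}(e^{-i\tau_j},1) &= e^{-i\tau_j/2}\,R_z(-\tau_j) = e^{-i\tau_j/2}\,R_z(\tau_j)^\dagger.
\end{align}
Substituting these into the definition~\eqref{eqn:qnn_1}, each pair (controlled-$U$ followed later by controlled-$U^\dagger$) contributes a combined scalar $e^{i\tau_j/2}\cdot e^{-i\tau_j/2}=1$, while replacing both controlled operations by $R_z(\tau_j)^\dagger$ acting on $\mathbb{B}_j$. (A small cosmetic adjustment, absorbing the sign $-\tau_j$ into the surrounding angle parameters $\phi_l$, turns $R_z(\tau_j)^\dagger$ into the $R_z(\tau_j)$ that appears in the QSP definition~\eqref{eqn:QSP yzzyz decompose}; equivalently, one can view the QSP convention as indexing by $-\tau_j$, which is harmless for the statement.) For odd $L$, the extra trailing controlled-$U^\dagger$ layer contributes one uncompensated factor $e^{-i\tau_j/2}$, producing precisely the parity-dependent prefactor $(e^{-i\tau_j/2})^{L\bmod 2}$.

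Putting the pieces together, on each block $\mathbb{B}_j$ the resulting operator is exactly the product defining $\UQSP^L(\tau_j)$ with the same angle parameters $\omega,\bm\theta,\bm\phi$, times the scalar $(e^{-i\tau_j/2})^{L\bmod 2}$. Taking the direct sum over $j$ yields the claimed identity. The only subtle step is the phase accounting in the second part, in particular verifying that in the even-$L$ case the controlled-$U$/controlled-$U^\dagger$ phases cancel \emph{exactly in consecutive pairs} as ordered in~\eqref{eqn:qnn_1}; this is a direct consequence of the alternating pattern prescribed by the definition and requires no additional identities beyond the elementary rewriting above.
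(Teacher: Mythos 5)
Your overall strategy matches the paper's proof: you block-diagonalize $\UQPP^L(U)$ over the invariant two-dimensional subspaces $\mathbb{B}_j$, rewrite the controlled gates as scalar multiples of $z$-rotations, and track the residual phases layer by layer. The invariance of $\mathbb{B}_j$, the pairwise cancellation of the scalars in the even-$L$ case, and the single residual factor $e^{-i\tau_j/2}$ in the odd-$L$ case are all correctly identified.

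However, you have a sign error in your $R_z$ identities, and the parenthetical ``cosmetic adjustment'' you propose to repair it does not actually work. With the standard convention $R_z(\theta) = e^{-i\theta Z/2} = \mathrm{diag}(e^{-i\theta/2}, e^{i\theta/2})$, which is what the paper uses, the correct identities are
\begin{equation}
\mathrm{diag}(1,e^{i\tau_j}) = e^{i\tau_j/2}\,R_z(\tau_j), \qquad
\mathrm{diag}(e^{-i\tau_j},1) = e^{-i\tau_j/2}\,R_z(\tau_j),
\end{equation}
so both controlled gates restrict on $\mathbb{B}_j$ to $R_z(\tau_j)$ (not $R_z(\tau_j)^\dagger$) up to a scalar. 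Your identities would require $R_z(\theta)=e^{+i\theta Z/2}$, the opposite sign convention, and the block operator you then assemble is $\UQSP^L(-\tau_j)$ rather than $\UQSP^L(\tau_j)$. The proposed repair, ``absorbing the sign $-\tau_j$ into the surrounding angle parameters $\phi_l$,'' cannot be carried out, because $\tau_j$ varies with $j$ while each $\phi_l$ is a fixed global parameter; and ``viewing the QSP convention as indexing by $-\tau_j$'' is not harmless either, since the lemma's conclusion specifically equates the block with $\UQSP^L(\tau_j)$ with the same angles and the same sign of the signal. Once you adopt the standard $R_z$ convention, the dagger and the extra adjustment disappear, and your proof coincides with the paper's argument.
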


\begin{proof}
Observe that the decomposition of unitaries is
\begin{align}
    \begin{bmatrix}
        U^\dagger & 0 \\
        0 & I
    \end{bmatrix} & =
    \sum_{j=0}^{2^n-1} e^{- i \tau_j} \ketbra{0}{0} \otimes \ketbra{\chi_j}{\chi_j}
    +\sum_{j=0}^{2^n-1} \ketbra{1}{1}\otimes \ketbra{\chi_j}{\chi_j}\\
    & = \sum_{j=0}^{2^n-1} e^{- i \frac{\tau_j}{2}}
    \begin{bmatrix}
        e^{-i\frac{\tau_j}{2}} & 0 \\
        0 & e^{i \frac{\tau_j}{2}}
    \end{bmatrix}
    \otimes \ketbra{\chi_j}{\chi_j} = \sum_{j=0}^{2^n-1} e^{- i \frac{\tau_j}{2}} R_z(\tau_j) \otimes \ketbra{\chi_j}{\chi_j}\\
    & = \bigoplus_{j=0}^{2^n-1} e^{-i \frac{\tau_j}{2}} R_z(\tau_j)_{\mathbb{B}_j}, \text{\quad and} \\
    \begin{bmatrix}
        I & 0\\
        0 & U
    \end{bmatrix} & = 
    \sum_{j = 0}^{2^n - 1}  \ketbra{0}{0}\otimes \ketbra{\chi_j}{\chi_j} +
    \sum_{j = 0}^{2^n - 1} e^{i \tau_j} \ketbra{1}{1} \otimes \ketbra{\chi_j}{\chi_j} \\
    & = \sum_{j = 0}^{2^n - 1} e^{i \frac{\tau_j}{2}}
   \begin{bmatrix}
       e^{-i\frac{\tau_j}{2}} & 0 \\
       0 & e^{i \frac{\tau_j}{2}}
   \end{bmatrix}
   \otimes \ketbra{\chi_j}{\chi_j} = \sum_{j = 0}^{2^n - 1} e^{i \frac{\tau_j}{2}} R_{z}(\tau_j) \otimes \ketbra{\chi_j}{\chi_j}\\
   & = \bigoplus_{j=0}^{2^n-1} e^{i \frac{\tau_j}{2}} R_z(\tau_j)_{\mathbb{B}_j}
.\end{align}
In a similar manner, we can decompose $R_y$ and $R_z$ gates applied on the first qubit, such that for any $\zeta \in \mathbb{R}$,
\begin{equation}
\label{eqn: new gates eigenspace decomp}
\begin{aligned}
R_y^{(0)}(\zeta) \otimes I & = R_y^{(0)}(\zeta) \otimes \sum_{j = 0}^{2^n - 1} \ketbra{\chi_j}{\chi_j} = R_y^{(0)}(\zeta) \otimes \bigoplus_{j = 0}^{2^n - 1} {1}_{\{\ket{\chi_j}\}} = 
\bigoplus_{j = 0}^{2^n - 1} R_y(\zeta) _{\mathbb{B}_j}
\\
R_z^{(0)}(\zeta) \otimes I & = ... = 
\bigoplus_{j = 0}^{2^n - 1} R_z(\zeta) _{\mathbb{B}_j}
\end{aligned}
\end{equation}
For convenience $\mathbb{B}_j$s are omitted in the rest of the proof. From above equations, for all even $L \in \NN$ and $\omega \in \RR, \bm\theta, \bm\phi \in \mathbb{R}^{L + 1}$,

\begin{align}
    \UQPP^{L}(U)  
& = \bigoplus_{j = 0}^{2^n - 1} R_z(\omega) R_y(\theta_0) R_z(\phi_0)
\prod_{l = 1}^{L / 2} e^{- i \frac{\tau_j}{2}} R_z(\tau_j) R_y(\theta_{2l-1})R_z(\phi_{2l-1}) e^{i \frac{\tau_j}{2}} R_z(\tau_j) R_y(\theta_{2l})R_z(\phi_{2l}) \\
& = \bigoplus_{j = 0}^{2^n - 1} R_y(\theta_0) R_z(\phi_0) \prod_{l = 1}^L R_z(\tau_j) R_y(\theta_l)R_z(\phi_l)
= \bigoplus_{j = 0}^{2^n - 1} \UQSP^{L}(\tau_j)
.\end{align}
Similar statement holds for odd $L \in \NN$.
\end{proof}
\renewcommand{\thelemma}{S\arabic{lemma}}

\subsection{Proofs of Theorem~\ref{thm:qpp evolution} and Theorem~\ref{thm:qpp evaluation}}\label{appendix:qpp theorems}

\renewcommand\thetheorem{\ref{thm:qpp evolution}}
\setcounter{theorem}{\arabic{theorem}-1}
\begin{theorem}[Quantum phase evolution]
Given an $n$-qubit unitary $U = \sum_{j=0}^{2^n - 1} e^{i \tau_j} \ketbra{\chi_j}{\chi_j}$, for any trigonometric polynomial $F(x) = \sum_{j=-L}^L c_j e^{ijx}$ with $\abs{F(x)} \leq 1$ for all $x \in \RR$, there exist $\omega \in \RR$ and $\bm\theta,\bm\phi \in \RR^{2L+1}$ such that
\begin{equation}
    \UQPP^{2L}(U) = \begin{bmatrix}
        F(U) & \ldots \\
        \ldots & \ldots
    \end{bmatrix}
,\end{equation}
where $F(U) \coloneqq \sum_{j=0}^{2^n - 1} F(\tau_j) \ketbra{\chi_j}{\chi_j}$.
\end{theorem}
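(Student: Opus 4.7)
The plan is to directly combine the two results that the excerpt already isolates: Corollary~\ref{coro:trig_qsp_proj}, which gives QSP angles realising any admissible trigonometric polynomial as the $\bra{0}\cdot\ket{0}$ matrix entry of a single-qubit $\UQSP^{2L}$, and Lemma~\ref{lem:qpp eigenspace decomp}, which shows that $\UQPP^{L}(U)$ block-diagonalises over the eigenspaces $\mathbb{B}_j=\{\ket{0,\chi_j},\ket{1,\chi_j}\}$ with the single-qubit QSP $\UQSP^{L}(\tau_j)$ evaluated at the eigenphase $\tau_j$.

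First, I will invoke Corollary~\ref{coro:trig_qsp_proj} on the trigonometric polynomial $F(x)=\sum_{j=-L}^{L} c_j e^{ijx}$, whose degree is at most $L$ and which satisfies $|F(x)|\le 1$ on $\RR$. This yields angles $\omega\in\RR$ and $\bm\theta,\bm\phi\in\RR^{2L+1}$ such that $\bra{0}\UQSP^{2L}(x)\ket{0}=F(x)$ for every real $x$. These are exactly the parameters I will feed into $\UQPP^{2L}(U)$.

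Next, I apply Lemma~\ref{lem:qpp eigenspace decomp} with these same angles. Because $2L$ is even, the global phase factor $(e^{-i\tau_j/2})^{2L\bmod 2}$ equals $1$ on every eigenspace, so
\begin{equation}
    \UQPP^{2L}(U) \;=\; \bigoplus_{j=0}^{2^n-1} \UQSP^{2L}(\tau_j)_{\mathbb{B}_j}.
\end{equation}
Taking the upper-left $2^n\times 2^n$ block amounts to computing $(\bra{0}\otimes I^{\otimes n})\,\UQPP^{2L}(U)\,(\ket{0}\otimes I^{\otimes n})$. Since each summand acts only on $\mathbb{B}_j$ and is orthogonal to the others, this block equals
\begin{equation}
    \sum_{j=0}^{2^n-1} \bigl(\bra{0}\UQSP^{2L}(\tau_j)\ket{0}\bigr)\,\ketbra{\chi_j}{\chi_j} \;=\; \sum_{j=0}^{2^n-1} F(\tau_j)\,\ketbra{\chi_j}{\chi_j} \;=\; F(U),
\end{equation}
which is exactly the claimed identity.

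There is no real obstacle here: the theorem is genuinely a corollary, and the only small bookkeeping step is checking the parity condition ($2L$ even) so that the phase $(e^{-i\tau_j/2})^{L\bmod 2}$ in Lemma~\ref{lem:qpp eigenspace decomp} trivialises. The remaining blocks of $\UQPP^{2L}(U)$ are determined by the complementary Laurent polynomial $Q$ produced by Lemma~\ref{lem:Laurent complement}, but their explicit form is irrelevant to the statement, so I would simply denote them by ellipses as in the theorem.
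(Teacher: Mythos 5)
Your proposal is correct and follows exactly the paper's own route: invoke Corollary~\ref{coro:trig_qsp_proj} to get angles with $\bra{0}\UQSP^{2L}(x)\ket{0}=F(x)$, then use Lemma~\ref{lem:qpp eigenspace decomp} to reduce $\UQPP^{2L}(U)$ to the direct sum of $\UQSP^{2L}(\tau_j)$ over the eigenspaces $\mathbb{B}_j$, and read off the upper-left block. Your explicit remark that $2L\bmod 2=0$ kills the extra phase factor in the lemma is a small but worthwhile bookkeeping step that the paper leaves implicit.
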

\begin{proof}
By Corollary~\ref{coro:trig_qsp_proj}, there exists $\omega \in \RR$, $\bm\theta  \in \RR^{2L+1}$ and $\bm\phi \in \RR^{2L+1}$ such that $\braket{0| \UQSP^{2L}(x) |0} = F(x)$. Then for such $\omega, \bm\theta$ and $\bm\phi$,
\begin{align}
    \left(\bra{0} \otimes I^{\otimes n}\right) \UQPP^{2L}(U)  \left(\ket{0} \otimes I^{\otimes n}\right)
    &= \sum_{j=0}^{2^n - 1} \left(\bra{0} \otimes I^{\otimes n}\right) \UQPP^{2L}(U)  \left(\ket{0} \otimes \ketbra{\chi_j}{\chi_j} \right) \\
    &= \sum_{j=0}^{2^n - 1} \left[\left(\bra{0} \otimes I^{\otimes n}\right) \UQPP^{2L}(U) \ket{0, \chi_j}\right] \bra{\chi_j} \\
    &= \sum_{j=0}^{2^n - 1} \bra{0}_{\mathbb{B}_j} \UQSP^{2L}(U)_{\mathbb{B}_j}(\tau_j) \ket{0}_{\mathbb{B}_j} \bra{\chi_j}  \qquad (\text{by Lemma~\ref{lem:qpp eigenspace decomp}}) \\
    &= \sum_{j=0}^{2^n - 1} F(\tau_j) \ketbra{\chi_j}{\chi_j} \\
    &= F(U)
,\end{align}
as required.
\end{proof}
\renewcommand{\thetheorem}{S\arabic{theorem}}

\renewcommand\thetheorem{\ref{thm:qpp evaluation}}
\setcounter{theorem}{\arabic{theorem}-1}
\begin{theorem}[Quantum phase evaluation]
Given an $n$-qubit unitary $U = \sum_{j=0}^{2^n - 1} e^{i \tau_j} \ketbra{\chi_j}{\chi_j}$ and an $n$-qubit quantum state $\rho$, for any real-valued trigonometric polynomial $F(x) = \sum_{j=-L}^L c_j e^{ijx}$ with \highlight{$\abs{F(x)} \leq 1$ for all $x \in \RR$}, there exist $\omega \in \RR$ and $\bm\theta,\bm\phi \in \RR^{L+1}$ such that $\widehat{\rho} = \UQPP(U) \left(\ketbra{0}{0} \otimes \rho\right)  \UQPP(U)^\dagger$ satisfies
\begin{equation}
    f_V(U) \coloneqq \tr \left[ \left(Z^{(0)}\otimes I^{\otimes n}\right) \cdot \widehat{\rho} \right] = \sum_{j=0}^{2^n - 1} p_{j} F(\tau_j),
\end{equation}
where $p_j = \bra{\chi_j} \rho \ket{\chi_j}$ and $Z^{(0)}$ is a Pauli-$Z$ observable acting on the first qubit.
\end{theorem}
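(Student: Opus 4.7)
The plan is to combine the eigenspace decomposition of QPP in Lemma~\ref{lem:qpp eigenspace decomp} with the single-qubit expectation-value result from Corollary~\ref{coro:trig_qsp_Z_measurement}. First, apply Corollary~\ref{coro:trig_qsp_Z_measurement} to the real-valued trigonometric polynomial $F(x)$ to obtain parameters $\omega \in \RR$ and $\bm\theta, \bm\phi \in \RR^{L+1}$ such that for all $x \in \RR$,
\begin{equation}
    \bra{0} \UQSPs^L(x)^\dagger Z \UQSPs^L(x) \ket{0} = F(x).
\end{equation}
These are precisely the angles I would feed into the QPP circuit $\UQPPs(U)$.

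Next, expand the input state in the eigenbasis of $U$ as $\rho = \sum_{j,k} \rho_{jk} \ketbra{\chi_j}{\chi_k}$, with diagonal entries $\rho_{jj} = p_j$. Then
\begin{equation}
    \ketbra{0}{0} \otimes \rho = \sum_{j,k} \rho_{jk}\, \ketbra{0,\chi_j}{0,\chi_k}.
\end{equation}
By Lemma~\ref{lem:qpp eigenspace decomp}, $\UQPPs(U)$ leaves each two-dimensional subspace $\mathbb{B}_j = \opspan\{\ket{0,\chi_j}, \ket{1,\chi_j}\}$ invariant, and acts on $\mathbb{B}_j$ as $(e^{-i\tau_j/2})^{L \bmod 2}\, \UQSPs^L(\tau_j)$. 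Consequently, writing $\UQPPs(U)\ket{0,\chi_j} = a_j \ket{0,\chi_j} + b_j \ket{1,\chi_j}$ (absorbing the irrelevant global phase), the key observation is that $Z^{(0)} \otimes I^{\otimes n}$ is diagonal in the product basis $\{\ket{b,\chi_\ell}\}$ with eigenvalues $(-1)^b$, so
\begin{equation}
    \tr\!\left[\left(Z^{(0)}\otimes I^{\otimes n}\right) \UQPPs(U)\, \ketbra{0,\chi_j}{0,\chi_k}\, \UQPPs(U)^\dagger \right] = \delta_{jk} \left(|a_j|^2 - |b_j|^2\right).
\end{equation}
The Kronecker delta arises because, for $j \neq k$, the resulting operator has its nontrivial support entirely in off-diagonal blocks $\ketbra{\cdot, \chi_j}{\cdot, \chi_k}$ whose diagonal entries are zero by orthogonality of eigenvectors.

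It then remains to identify $|a_j|^2 - |b_j|^2$ with $F(\tau_j)$. Since the $\mathbb{B}_j$-restricted action of $\UQPPs(U)$ coincides with $\UQSPs^L(\tau_j)$ up to a global phase that cancels in $|a_j|^2$ and $|b_j|^2$, Corollary~\ref{coro:trig_qsp_Z_measurement} yields $|a_j|^2 - |b_j|^2 = F(\tau_j)$. Summing over the diagonal contributions gives
\begin{equation}
    f_V(U) = \sum_{j=0}^{2^n - 1} p_j F(\tau_j),
\end{equation}
as required. The only mildly subtle step is verifying the vanishing of the off-diagonal $j \neq k$ cross terms; once the invariance of $\mathbb{B}_j$ from Lemma~\ref{lem:qpp eigenspace decomp} is combined with the diagonal form of the measurement observable, this follows immediately, so I do not foresee any substantive obstacle beyond bookkeeping.
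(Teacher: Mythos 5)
Your proof takes essentially the same route as the paper's: combine Lemma~\ref{lem:qpp eigenspace decomp} with Corollary~\ref{coro:trig_qsp_Z_measurement}, observe that $Z^{(0)}\otimes I$ and $\UQPPs(U)$ both block-decompose over the subspaces $\mathbb{B}_j$ so only the diagonal terms $\rho_{jj}=p_j$ survive the trace, and then identify each block's contribution with $F(\tau_j)$. The one small service you add is spelling out why the $j\neq k$ cross terms vanish, which the paper passes over quickly in moving from the double-sum to the single-sum line; this is correct bookkeeping and matches the paper's intent.
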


\begin{proof}
We begin the proof by decomposing the observable $Z^{(0)} \otimes I$. Note that
\begin{align}
    Z^{(0)} \otimes I = Z^{(0)} \otimes \sum_{j = 0}^{2^n - 1} \ketbra{\chi_j}{\chi_j} = \bigoplus_{j = 0}^{2^n - 1} Z_{\mathbb{B}_j}
,\end{align}
where $\mathbb{B}_j = \{\ket{0, \chi_j}, \ket{1, \chi_j}\}$. By Corollary~\ref{coro:trig_qsp_Z_measurement}, there exist $\omega, \bm\theta$ and $\bm\phi$ such that $\bra{0} \UQSP^{L}(x)^\dagger Z \UQSP^{L}(x) \ket{0} = F(x)$, we have
\begin{align}
    & \quad \, \tr \left[ \left(Z^{(0)}\otimes I\right) \cdot \widehat{\rho} \right]
    = \tr \left[  \left(Z^{(0)}\otimes I\right) \UQPP^{L}(U) \left(\sum_{j, k = 0}^{2^n - 1} \braket{\chi_j | \rho | \chi_k} \ketbra{0, \chi_j}{0, \chi_k}\right) \UQPP^{L}(U)^\dagger \right] \\
    & = \sum_{j = 0}^{2^n - 1} p_j \tr \left[ \bra{0, \chi_j} \UQPP^{L}(U)^\dagger  \left(Z^{(0)}\otimes I\right) \UQPP^{L}(U) \ket{0, \chi_j} \right] \qquad (\text{where $p_j \coloneqq \braket{\chi_j | \rho |\chi_j}$}) \\
    & = \sum_{j = 0}^{2^n - 1} p_j \braket{0 |\UQSP^{L}(\tau_j)^\dagger Z \UQSP^{L}(\tau_j) | 0} \qquad (\text{by Lemma~\ref{lem:qpp eigenspace decomp}}) \\
    & = \sum_{j = 0}^{2^n - 1} p_j F(\tau_j).
\end{align}
\end{proof}
\renewcommand{\thetheorem}{S\arabic{theorem}}

\section{Detailed Analysis of Quantum Phase Search}\label{appendix:eigenphase search}

\subsection{Proof of Lemma~\ref{lem:phase classification}}\label{appendix:phase classification}
First we show that there exists a trigonometric polynomial that approximates the square wave function $\sqw(x) \coloneqq \sgn(\sin x)$, following the results of approximating the sign function by polynomials~\cite{gilyen2019quantum}.
\begin{lemma}[Trigonometric approximation of the square wave function] \label{lem:trig sign approx}
For any $\Delta, \eps \in (0, 1)$, there exists a trigonometric polynomial $F \in \CC[e^{-ix}, e^{ix}]$ of degree $L = \mathcal{O}(\frac{1}{\Delta}\log\frac{1}{\eps})$ such that
\begin{itemize}
    \item for all $x \in [-\pi, \pi]$, $\abs{F(x)} \leq 1$, and
    \item for all $x \in [-\pi + \Delta, -\Delta] \cup [\Delta, \pi-\Delta]$, $\abs{\sqw(x) - F(x)} \leq \eps$,
\end{itemize}
where $\sqw(x) \coloneqq \sgn(\sin x)$ is the square wave function.
\end{lemma}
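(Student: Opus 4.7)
The plan is to reduce the trigonometric approximation of $\sqw(x) = \sgn(\sin x)$ to an ordinary polynomial approximation of the sign function $\sgn(y)$ on $[-1, -\Delta'] \cup [\Delta', 1]$, and then substitute $y = \sin x$ to convert the ordinary polynomial into a trigonometric polynomial. The key input is the standard odd-polynomial approximation of sign due to Low--Chuang / Gily\'en--Su--Low--Wiebe, which yields, for any gap $\Delta' \in (0,1)$ and precision $\eps \in (0,1)$, an odd polynomial $P \in \RR[y]$ of degree $d = \mathcal{O}\!\left(\tfrac{1}{\Delta'}\log\tfrac{1}{\eps}\right)$ with $|P(y)| \leq 1$ on $[-1,1]$ and $|P(y) - \sgn(y)| \leq \eps$ on $[-1,-\Delta']\cup[\Delta',1]$.

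First, I would define the candidate trigonometric polynomial as $F(x) \coloneqq P(\sin x)$. Since $\sin^k x = \bigl((e^{ix}-e^{-ix})/(2i)\bigr)^k$ lies in $\opspan\{e^{imx} : |m|\leq k\}$ for each $k$, and $P$ has degree $d$, $F$ is a Laurent polynomial in $e^{ix}$ of degree at most $d$, i.e.\ an element of $\CC[e^{-ix}, e^{ix}]$ of the required degree. The bound $|F(x)| \leq 1$ on $[-\pi,\pi]$ is immediate from $|\sin x| \leq 1$ and the uniform bound on $P$.

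Next, I would verify the approximation on the two target intervals. By concavity of $\sin$ on $[0,\pi]$, we have $\sin x \geq \tfrac{2}{\pi}x$ on $[0,\pi/2]$, so $\sin x \geq \tfrac{2\Delta}{\pi}$ for $x \in [\Delta, \pi-\Delta]$; symmetrically $\sin x \leq -\tfrac{2\Delta}{\pi}$ for $x \in [-\pi+\Delta, -\Delta]$. Choosing $\Delta' \coloneqq \tfrac{2\Delta}{\pi}$, the value $\sin x$ lands in $[-1,-\Delta']\cup[\Delta',1]$ on the target set, so
\begin{equation}
    |F(x) - \sqw(x)| = |P(\sin x) - \sgn(\sin x)| \leq \eps.
\end{equation}
The degree becomes $d = \mathcal{O}\!\left(\tfrac{1}{\Delta'}\log\tfrac{1}{\eps}\right) = \mathcal{O}\!\left(\tfrac{1}{\Delta}\log\tfrac{1}{\eps}\right)$, matching the claim.

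The main obstacle is really just cleanly importing the polynomial sign approximation and verifying that the substitution $y \mapsto \sin x$ preserves the degree when counted as a Laurent polynomial in $e^{ix}$; both are routine but worth stating carefully. A minor technicality is that some formulations of the sign-polynomial lemma produce $|P(y)| \leq 1 + \eps$ rather than $\leq 1$, in which case I would rescale $P$ by $1/(1+\eps)$ and absorb the resulting $\mathcal{O}(\eps)$ loss into the error parameter, which does not affect the asymptotic degree. No other delicate estimates are needed.
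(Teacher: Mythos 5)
Your proposal is correct and follows essentially the same route as the paper: invoke the polynomial approximation of $\sgn$ from Gily\'en et al.\ (Lemma 25), compose with $y=\sin x$, and use the lower bound on $|\sin x|$ over the target intervals to pick the gap parameter. The paper simply sets the gap to $\delta=\sin\Delta$ (the exact minimum of $|\sin x|$ there) and writes $P$ in Chebyshev form to exhibit $P(\sin x)$ explicitly as a Laurent polynomial in $e^{ix}$, whereas you use the Jordan-inequality bound $\tfrac{2\Delta}{\pi}$ and the span observation $\sin^k x\in\opspan\{e^{imx}:|m|\le k\}$ — cosmetic differences that don't change the argument or the asymptotic degree.
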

\begin{proof}
By Lemma 25 in~\cite{gilyen2019quantum}, there exist a polynomial $P \in \RR[x]$ of degree $L = \mathcal{O}(\frac{1}{\delta}\log \frac{1}{\eps})$ such that $\abs{P(x)} \leq 1$ for all $x \in [-1, 1]$ and $\abs{P(x) - \sgn(x)} \leq \eps$ for all $x \in [-1, 1] \setminus (-\delta, \delta)$. Let $\delta = \sin(\Delta)$, write the polynomial $P$ in a Chebyshev form as $P(x) = \sum_{j=0}^L a_j T_j(x)$ for some $a_j \in \CC$, then by a change of variable,
\begin{equation}
    P(\sin x) = \sum_{j=0}^{L} \frac{(-i)^j a_j}{2} \left[ e^{ijx} + (-1)^j e^{-ijx} \right]
\end{equation}
is a \tp{} of degree $L = \mathcal{O}(\frac{1}{\delta}\log \frac{1}{\eps}) = \mathcal{O}(\frac{1}{\Delta}\log \frac{1}{\eps})$. Simply let $F(x) = P(\sin x)$, then we have $\abs{F(x)} \leq 1$ for all $x \in [-\pi, \pi]$ and $\abs{\sgn(\sin x) - F(x)} \leq \eps$ for all $x \in [-\pi + \Delta, -\Delta] \cup [\Delta, \pi-\Delta]$.

\end{proof}

Then we show that there exists a QPP that utilizes the approximated square wave function to classify phases on the interval $[-\pi + \Delta, -\Delta] \cup [\Delta, \pi-\Delta]$.
\renewcommand\thelemma{\ref{lem:phase classification}}
\setcounter{lemma}{\arabic{lemma}-1}
\begin{lemma}[Phase classification]
Given a unitary $U = \sum_{j=0}^{2^n - 1} e^{i \tau_j} \ketbra{\chi_j}{\chi_j}$, then for any $\Delta \in (0, \pi)$ and $\eps \in (0, 1)$, there exists a QPP circuit $V(U)$ of $L = \cO \left(\frac{1}{\Delta}\log\frac{1}{\eps}\right)$ layers such that
\begin{equation}
    V (U) \ket{0, \chi_k} = \begin{dcases}
      \sqrt{1 - \eps_k} \ket{0, \chi_k} + \sqrt{\eps_k} \ket{1, \chi_k} & \text{if $\tau_k \in [\Delta, \pi-\Delta)$,}\\
      \sqrt{\eps_k} \ket{0, \chi_k} + \sqrt{1 - \eps_k} \ket{1, \chi_k} & \text{if $\tau_k \in(-\pi + \Delta, -\Delta]$,}
    \end{dcases}
\end{equation}
for $0 \leq k < 2^n$, where $\eps_k \in (0, \eps)$.
\end{lemma}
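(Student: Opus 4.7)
The plan is to obtain the QPP circuit by composing the trigonometric square-wave approximation of Lemma~\ref{lem:trig sign approx} with the quantum phase evolution theorem (Theorem~\ref{thm:qpp evolution}). Intuitively, the square wave $\sqw(x) = \sgn(\sin x)$ is exactly the characteristic function that distinguishes the two target intervals: it equals $+1$ on $[\Delta,\pi-\Delta]$ and $-1$ on $[-\pi+\Delta,-\Delta]$. Shifting and scaling it into the $[0,1]$-valued indicator of positive phases produces the polynomial we want to implement on the eigenphases of $U$.

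More concretely, first I would invoke Lemma~\ref{lem:trig sign approx} with the given $\Delta$ and $\eps$ to obtain a trigonometric polynomial $F_\sqw \in \CC[e^{ix}, e^{-ix}]$ of degree $L' = \cO(\frac{1}{\Delta}\log\frac{1}{\eps})$ satisfying $\abs{F_\sqw(x)}\leq 1$ on $[-\pi,\pi]$ and $\abs{F_\sqw(x) - \sqw(x)}\leq \eps$ on $[-\pi+\Delta,-\Delta]\cup[\Delta,\pi-\Delta]$. Define
\begin{equation}
F(x) \coloneqq \tfrac{1}{2}\bigl(1 + F_\sqw(x)\bigr),
\end{equation}
which is a trigonometric polynomial of the same degree, real-valued, and bounded in absolute value by $1$. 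By Theorem~\ref{thm:qpp evolution}, there exist angles $\omega,\bm\theta,\bm\phi$ producing a QPP circuit $V(U)$ of $L = 2L' = \cO(\frac{1}{\Delta}\log\frac{1}{\eps})$ layers whose upper-left block satisfies $(\bra{0}\otimes I)\,V(U)\,(\ket{0}\otimes I) = F(U)$.

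Next I would use the eigenspace decomposition of Lemma~\ref{lem:qpp eigenspace decomp} to evaluate the action of $V(U)$ on a single eigenstate. Since $V(U)$ acts within each two-dimensional block $\mathbb{B}_k = \{\ket{0,\chi_k},\ket{1,\chi_k}\}$ as $W(\tau_k)$ (up to a global phase), one obtains
\begin{equation}
V(U)\ket{0,\chi_k} = F(\tau_k)\ket{0,\chi_k} + G(\tau_k)\ket{1,\chi_k},
\end{equation}
with $\abs{F(\tau_k)}^2 + \abs{G(\tau_k)}^2 = 1$. Plugging in the bounds from the approximation on $F_\sqw$ gives $F(\tau_k)\geq 1-\eps/2$ when $\tau_k\in[\Delta,\pi-\Delta]$ and $F(\tau_k)\leq \eps/2$ when $\tau_k\in[-\pi+\Delta,-\Delta]$. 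Hence $\eps_k \coloneqq 1 - \abs{F(\tau_k)}^2$ lies in $(0,\eps)$ (after shrinking the approximation error by a small constant factor if needed), and by absorbing the phase of $G(\tau_k)$ into the basis label $\ket{1,\chi_k}$, the amplitudes take the real non-negative form $\sqrt{1-\eps_k}$ and $\sqrt{\eps_k}$ as claimed.

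The main obstacle is essentially book-keeping rather than a conceptual hurdle: one must verify that shifting $F_\sqw$ to $(1+F_\sqw)/2$ keeps the polynomial within the admissible class of Theorem~\ref{thm:qpp evolution} (it does, because the constant $\tfrac12$ is the $j=0$ term of a trigonometric polynomial and $\abs{F}\leq 1$ is preserved), and that the overall approximation error translates cleanly into the probability threshold $\eps_k < \eps$. Both are routine once the correct target polynomial $F$ has been identified; the cost in circuit depth is dominated by the degree of the square-wave approximant, giving the stated $L=\cO(\frac{1}{\Delta}\log\frac{1}{\eps})$.
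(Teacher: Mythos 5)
Your proposal is correct in spirit and reaches the same conclusion, but it routes through a slightly different part of the QPP machinery than the paper does. Both proofs start from the trigonometric square-wave approximant of Lemma~\ref{lem:trig sign approx}. The paper then applies Lemma~\ref{lem:sqrt existence} to obtain Laurent polynomials $P,Q\in\CC[e^{ix/2},e^{-ix/2}]$ with $PP^*=(1+f)/2$ and $QQ^*=(1-f)/2$, feeds these into Lemma~\ref{lem:trig_qsp} to build the full $2\times2$ QSP block, and reads off that the probability of outcome $\ket{0}$ on eigenstate $\ket{\chi_k}$ is exactly $(1+f(\tau_k))/2$; the natural choice $\eps_k=\tfrac12\abs{f(\tau_k)-\sqw(\tau_k)}$ then matches the lemma's claimed amplitudes directly. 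You instead set $F=(1+F_\sqw)/2$ and push it through Theorem~\ref{thm:qpp evolution}, so the \emph{amplitude} on $\ket{0,\chi_k}$ is $F(\tau_k)$ and the probability is its square, $\bigl((1+F_\sqw(\tau_k))/2\bigr)^2$. This still gives $\eps_k<\eps$ on both intervals, though the verification is slightly messier, and the layer count picks up an extra factor of $2$ (which the $\cO$ absorbs). One small bonus of your route is that $F(\tau_k)$ is itself real and nonnegative, whereas the paper's $P(\tau_k)$ is generally complex and the amplitudes are only equal to $\sqrt{(1\pm f)/2}$ up to phase --- both proofs implicitly treat the lemma's amplitudes as moduli, which is harmless since Algorithm~\ref{alg:interval search} only uses the measurement probabilities.

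One genuine slip in your write-up: you define $\eps_k\coloneqq 1-\abs{F(\tau_k)}^2$ uniformly, but this matches the lemma's statement only for $\tau_k\in[\Delta,\pi-\Delta)$. For $\tau_k\in(-\pi+\Delta,-\Delta]$ the lemma wants the $\ket{0}$-amplitude to be $\sqrt{\eps_k}$, so you must instead set $\eps_k\coloneqq\abs{F(\tau_k)}^2$ there. With your bounds, $\eps_k\le 1-(1-\eps/2)^2<\eps$ in the first case and $\eps_k\le(\eps/2)^2<\eps$ in the second, so the claim holds once $\eps_k$ is defined case-by-case. There is no need for the ``shrinking the approximation error by a small constant factor'' hedge you included --- the bounds already close as is.
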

\begin{proof}
By Lemma~\ref{lem:trig sign approx}, there exists a \tp{} $f(x)$ approximating the square wave function with order $L = \mathcal{O}(\frac{1}{\Delta}\log\frac{1}{\eps})$, such that $L$ is a multiple of $4$ and $\abs{f(x) - \text{sqw}(x)} < \eps$ for all $x \in (-\pi + \Delta, -\Delta] \cup [\Delta, \pi-\Delta)$. It follows from Lemma~\ref{lem:sqrt existence} that there exist \tps{} $P = \sqrt{\frac{1 + f(x)}{2}}$ and $Q = \sqrt{\frac{1 - f(x)}{2}}$. By Theorem~\ref{thm:qpp evolution}, there exist parameters $\omega \in \RR$, $\bm\theta,\bm\phi \in \RR^{L+1}$ such that
\begin{align}
    V(U) \ket{0, \chi_k}
    & = \sqrt{\frac{1 + f(\tau_k)}{2}} \ket{0, \chi_k} + \sqrt{\frac{1 - f(\tau_k)}{2}} \ket{1, \chi_k}
.\end{align}
Denote $\eps_k \coloneqq \frac{1}{2}|f(\tau_k) - \text{sqw}(\tau_k)|$. For $\tau_k \in (-\pi + \Delta, -\Delta]$,
\begin{equation}
    V(U) \ket{0, \chi_k} = \sqrt{\eps_k} \ket{0, \chi_k} + \sqrt{1 - \eps_k} \ket{1, \chi_k}
;\end{equation}
and for $\tau_k \in [\Delta, \pi-\Delta)$,
\begin{equation}
   V(U) \ket{0, \chi_k} = \sqrt{1 - \eps_k} \ket{0, \chi_k} + \sqrt{\eps_k} \ket{1, \chi_k}.
\end{equation}
Since $\abs{f(x) - \text{sqw}(x)} < \eps$ on the $(-\pi + \Delta, -\Delta] \cup [\Delta, \pi-\Delta)$, we have $\eps_k < \eps$ for each $\tau_k \in (-\pi + \Delta, -\Delta] \cup [\Delta, \pi-\Delta)$.

\end{proof}
\renewcommand{\thelemma}{S\arabic{lemma}}

\subsection{Phase interval search} \label{appendix:PIS}

The phase interval is the region containing the eigenphase of the input state. According to Lemma~\ref{lem:phase classification}, the main idea is to iteratively shrink the phase interval by the binary search method. At each iteration, we can decide the next subinterval, either $(-\pi + \Delta, -\Delta]$ or $[\Delta, \pi-\Delta)$, depending on the measurement result of the ancilla qubit. As $\Delta$ is small, the length of the interval reduces by nearly half, and thus the size of which would exponentially converge to $2 \Delta$. 

Formally, let $U$ be a unitary, $\Delta \in (0, \frac{1}{2})$ and $\eps \to 0$ (detailed analysis of success probability is discussed in section~\ref{appendix:PS through IA}). Suppose $\UQPPs(.)$ is the quantum circuit stated in Lemma~\ref{lem:phase classification} with respect to $\Delta$ and $\eps$. Denote $\cG \coloneqq [-\pi, -\pi + \Delta] \cup [-\Delta, \Delta] \cup [\pi - \Delta, \pi]$ as the area that produces garbage information. Then for any quantum state $\ket{\psi} = \sum_{j=0}^{2^n - 1} c_j \ket{\chi_j}$ and $\z \in [-\pi, \pi]$, we have
\begin{align}
    \UQPPs(e^{-i\z} U)\ket{0, \psi}
    & = \ket{0}\left(\sum_{j:\Delta< \tau_j-\z <\pi-\Delta} c_j \ket{\chi_j} + \sum_{j: \tau_j-\z \in \cG} g_j^{(0)} \ket{\chi_j}\right) \nonumber \\
     & + \ket{1}\left(\sum_{j:-\pi+\Delta< \tau_j-\z <-\Delta} c_j \ket{\chi_j} + \sum_{j: \tau_j-\z \in \cG} g_j^{(1)} \ket{\chi_j}\right)
,\end{align}
where $g_j^{(i)}$ are garbage coefficients corresponding to state $\ket{i}$ of the ancilla qubit. Consequently, the measurement of the ancilla qubit can identify the interval containing the remaining eigenphases of measured state. Note that above approach is no longer applicable if the length of interval is close to $2 \Delta$, in which case $\tau_j-\z$ will always fall into the garbage area $\cG$. Such interval is referred to be ``indistinguishable''. We could iteratively apply the binary search procedure to shrink the phase interval until it becomes indistinguishable. The following corollary guarantees that Algorithm~\ref{alg:interval search} can reduce the length of input interval close to $2\Delta$ with high probability.


\begin{lemma}\label{lem:phase interval search}
    Suppose $\Delta, \eps$ and $\cQ$ are inputs of Algorithm~\ref{alg:interval search}, then the algorithm output an interval $[\z_l, \z_r]$ such that $\abs{\z_r- \z_l} = 2(\Delta+\frac{\pi}{2^{\cQ + 1}})$ and $\tau \in [\z_l, \z_r]$ with probability at least $(1-\eps)^{\cQ}$.
\end{lemma}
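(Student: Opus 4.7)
The proof splits naturally into two parts: a deterministic analysis of the interval length and a probabilistic analysis of the correctness invariant.

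For the length, I would track $L_t \coloneqq \z_r - \z_l$ after $t$ iterations and read off the recurrence from the update rules. When the wrap-around branch ($\z_r - \z_l > 2\pi - 2\Delta$) fires, the updated interval picks up an extra $\Delta$ on \emph{both} sides of the midpoint split, giving $L_t = L_{t-1}/2 + 2\Delta$; otherwise only the midpoint side is padded and $L_t = L_{t-1}/2 + \Delta$. Because $\Delta < 1/2$, once $L_t$ drops below $2\pi - 2\Delta$ it stays there, so the wrap-around branch is triggered at most once (in the first iteration, where the input interval spans nearly the full circle). Solving the resulting linear recurrence (whose fixed point in the non-wrap regime is $2\Delta$) collapses to the claimed final length $2(\Delta + \pi/2^{\cQ+1})$.

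For correctness, I would prove by induction on $t$ the invariant ``with probability at least $(1-\eps)^t$, after iteration $t$ the eigenphase $\tau$ lies in the current interval $[\z_l, \z_r]$''. The base case is immediate. For the inductive step, condition on $\tau \in [\z_l, \z_r]$, let $\z_m$ be its midpoint, and apply Lemma~\ref{lem:phase classification} to $\UQPPs(e^{-i\z_m}U)$, whose effective eigenphase on $\ket{\chi}$ is $\tau - \z_m$. Three sub-cases arise. If $\tau - \z_m \in [\Delta, \pi - \Delta)$, outcome $0$ is obtained with probability at least $1-\eps$ and the corresponding ``outcome 0'' update formula manifestly contains $[\z_m + \Delta, \z_m + \pi - \Delta) \ni \tau$. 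The case $\tau - \z_m \in (-\pi + \Delta, -\Delta]$ is symmetric. Finally, if $\tau - \z_m$ lies in the garbage region (within $\Delta$ of $0$ or $\pm\pi$), Lemma~\ref{lem:phase classification} does not specify the outcome distribution, but a direct check shows that \emph{both} possible updates still yield an interval containing $\tau$, so the invariant is preserved with probability $1$ in those borderline cases. Chaining the conditional per-iteration success probabilities across the $\cQ$ iterations gives the bound $(1-\eps)^\cQ$.

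The main obstacle will be the garbage-region case analysis, especially in the wrap-around branch, where one has to work on the circle $\RR/2\pi\ZZ$ and verify that the symmetric $\Delta$-buffer absorbs the garbage regions near $\pm\pi$ after identifying $-\pi$ with $\pi$. A related subtlety is that once any iteration errs, no guarantee survives for later iterations, so the overall failure probability must be controlled by the product of per-step success probabilities rather than their union bound, which is precisely what produces the $(1-\eps)^\cQ$ factor in the statement.
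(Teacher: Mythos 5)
Your proposal follows the same two-pronged structure as the paper's proof — a deterministic recurrence for the interval length and a multiplicative chaining of per-iteration success probabilities — but it is noticeably more careful in the one place where the paper's argument is terse, namely the ``garbage region'' $\cG$ where $\tau - \z_m$ lies within $\Delta$ of $0$ or $\pm\pi$ and Lemma~\ref{lem:phase classification} says nothing about the outcome distribution. The paper simply asserts that ``the probability of failing to decide the correct subinterval is at most $\eps$ in each iteration,'' which is not literally a consequence of Lemma~\ref{lem:phase classification} in the garbage case; your observation that \emph{both} branches of the update preserve $\tau \in [\z_l, \z_r]$ whenever $\tau - \z_m \in \cG$ (with arithmetic mod $2\pi$ when the wrap-around branch fires) is exactly what closes that gap, and your three-case conditioning on $\tau - \z_m$ makes the $(1-\eps)^\cQ$ bound rigorous. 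You also correctly isolate the wrap branch to the first iteration: since $\Delta < 1/2 < \pi/4$, the post-wrap length $\pi + 2\Delta$ already falls below $2\pi - 2\Delta$, and the non-wrap map $L \mapsto L/2 + \Delta$ is monotone decreasing toward its fixed point $2\Delta$, so wrap never recurs and the outer pieces of $\cG$ are unreachable from iteration $1$ onward. The only thing still missing is that you assert the geometric series ``collapses to the claimed final length'' without actually carrying out the sum; you should do the computation starting from $L_1 = \pi + 2\Delta$ with $L_{j+1} = L_j/2 + \Delta$ for $j = 1,\dots,\cQ-1$ and verify it agrees with the constant stated in the lemma (the paper's own unwinding of this recurrence contains an internal indexing inconsistency in what it takes $|\z_r^{(0)} - \z_l^{(0)}|$ to be, so it is worth checking directly rather than taking the headline constant on faith; the discrepancy is only a factor of $2$ inside the $\pi/2^{\cO(\cQ)}$ term and does not affect any downstream complexity claims).
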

\begin{proof}
Denote $[\z_l^{(j)}, \z_r^{(j)}]$ as the interval generated at the end of the $j$-th iteration and $\z_m^{(j)}$ as the middle point of this interval. Observe that for $j > 0$, the interval generated at the end of the $j+1$-th iteration is either $(\z_m^{(j)}-\Delta, \z_r^{(j)})$ or $(\z_l^{(j)}, \z_m^{(j)}+\Delta)$. By induction we have
\begin{align}
    \abs{\z_r^{(\cQ)}-\z_l^{(\cQ)}}
& = \Delta + \frac{|\z_r^{(\cQ-1)}-\z_l^{(\cQ-1)}|}{2} = \Delta + \frac{1}{2}\left(\Delta + \frac{|\z_r^{(\cQ-2)} - \z_l^{(Q-2)}|}{2}\right) \\
& = \Delta + \frac{\Delta}{2} + \frac{\Delta}{2^2} + \ldots + \frac{\Delta}{2^{\cQ-1}}+\frac{|\z_r^{(0)} - \z_l^{(0)}|}{2^{\cQ}}\\
& = \Delta\left(\frac{1-2^{-\cQ}}{1-\frac{1}{2}}\right)+\frac{2\Delta+\pi}{2^{\cQ}}\\
&= 2\Delta + \frac{\pi}{2^\cQ}.
\end{align}
as required. By Lemma~\ref{lem:phase classification}, the probability of failing to decide the correct subinterval is at most $\eps$ in each iteration, thus the success probability of outputting an phase interval containing $\tau$ is at least $(1-\eps)^\cQ$.
\end{proof}

\subsection{Phase search through interval amplification} \label{appendix:PS through IA}

Let $\bar{\Delta} \coloneqq \Delta+\frac{\pi}{2^{\cQ + 1}}$, merely applying Algorithm~\ref{alg:interval search} will not provide an estimate within expected precision, given that the error is at most the $2\bar{\Delta}$. To address this issue, the main idea is to execute the phase interval search procedure on $(e^{i\z}U)^d$ for some appropriate integer $d$ so that the binary search procedure can continue to locate the amplified phase $d\tau \in [d\z_l, d\z_r]$, since the interval length is $2d\bar{\Delta} \gg 2\Delta$. Repeating the entire procedure can exponentially reduce estimation error.

Let us formally describe the entire procedure. In the first round of phase interval search, the initial phase interval is $[-\pi, \pi]$. We iteratively apply QPP $V(\cdot)$ on the target unitary $U^{(0)}=U$ to obtain a phase interval $[\z_l^{(0)}, \z_r^{(0)}]$ of length $2\bar{\Delta}$. We denote $\z_m^{(0)} \coloneqq (\z_l^{(0)} + \z_r^{(0)})/2$ the middle point of the interval. Let $d \coloneqq \lfloor 1 / \bar{\Delta} \rfloor$, we then construct a unitary as $U^{(1)}=(e^{-i \z_m^{(0)}} U^{(0)})^d$ so that the interval the amplified phase $d\tau$ are rescaled to $[-1, 1]$. Therefore, we can run the phase interval search procedure on $U^{(1)}$ to retrieve a new interval $[\z_l^{(1)}, \z_r^{(1)}]$ of length $2\bar{\Delta}$. Repeating the entire procedure above gives an estimation of phase $\tau$ up to required precision.



Now we analyse how above procedures improve the eigenphase estimation precision. For the target eigenphase $\tau$ of the input unitary $U$, the corresponding phase of unitary $U^{(1)}$ is $d(\tau - \z_m^{(0)}) \in [\z_l^{(1)},\z_r^{(1)}]$. Let $\z_m^{(1)} = (\z_l^{(1)}+\z_r^{(1)})/2$ denote the middle point of the second interval $[\z_l^{(1)},\z_r^{(1)}]$. Then we can readily give an inequality that characterizes the estimation error,
\begin{equation}
    \left|\z_m^{(1)}-d(\tau-\z_m^{(0)})\right|\leq \bar{\Delta}.
\end{equation}
We further rewrite this inequality as below,
\begin{equation}
    \left|\tau - \left(\z_m^{(0)} + \frac{\z_m^{(1)}}{d}\right) \right| \leq \frac{\bar{\Delta}}{d} \leq d^{-2}.
\end{equation}
From this equation, we can see that this scheme give an estimate of eigenphase with an error of $d^{-2}$. After repeating the procedure for sufficiently many times, we inductively obtain a sequence $(\z_m^{(0)}, \z_m^{(1)},\ldots)$ that could be taken as an estimate of the eigenphase. Therefore, the estimation error will exponentially decay as iterating. For instance, assuming our scheme executes $\cT$ times, it will inductively give an estimate as follows.
\begin{equation}
    \left|\tau - \left(\z_m^{(0)} + \frac{\z_m^{(1)}}{d} + \frac{\z_m^{(2)}}{d^2} + \cdots + \frac{\z_m^{(\cT - 1)}}{d^{\cT - 1}}\right) \right|\leq \frac{\bar{\Delta}}{d^{\cT-1}}\leq d^{-\cT}
.\end{equation} 
Note that $d = [1 / \bar{\Delta}]$ must be at least $2$, otherwise we cannot find a sequence that converges to the eigenphase.
Also, if the estimation precision is expected to be $\delta$, then $\cT$ should satisfy $d^{-\cT} \leq \delta$. Then it derives that
\begin{equation} \label{eqn:phase cT cQ bound}
    \cQ = \ceil*{\log\left(\frac{2\pi}{1 - 2\Delta}\right)} \quad \textrm{and} \quad \cT = \ceil*{\frac{\log(\frac{1}{\delta})}{\log(d)} }
.\end{equation}
The entire phase search procedure executes the phase interval search, i.e.\ Algorithm~\ref{alg:interval search}, for $\cT$ times. Then by Lemma~\ref{lem:phase interval search}, the success probability of the entire phase search procedure is $(1-\eta)^{\cQ\cT}$, where $\eta$ is the input threshold of Algorithm~\ref{alg:interval search}. For any $\eps \in (0,1)$, let $\eta = \frac{\eps}{\cQ \cT}$, then the total success probability is
\begin{equation}
    (1 - \eta)^{\cQ \cT} > 1 - \cQ \cT \eta  = 1 - \eps,
\end{equation}
where the first strict inequality is the Bernoulli's inequality for $\eta < 1$ and $\cQ \cT \ge 2$. By Lemma~\ref{lem:phase classification}, to ensure the success probability of the entire algorithm is at least $1 - \eps$, the number of QPP layers $L$ should be
\begin{equation} \label{eqn:layer L bound}
    L = \cO \left( \frac{1}{\Delta} \log\frac{\cQ \cT}{\eps} \right)
    = \cO \left( \frac{1}{\Delta} \log\frac{\log \frac{1}{\delta}}{\eps} + g(\Delta) \right)
.\end{equation}
Here $g(\Delta)$ is a function in terms of $\Delta$ only, and hence can be omitted in the complexity analysis of Theorem~\ref{thm:qps complexity}.

Overall, by running the scheme many times, we can find an estimate of eigenphase with precision $\delta$ and success probability at least $1 - \eps$. Above results are summarized in Algorithm~\ref{alg:eigenphase search}.


\subsection{Proof of Theorem~\ref{thm:qps complexity}} \label{appendix:qps complexity}

\renewcommand\thetheorem{\ref{thm:qps complexity}}
\setcounter{theorem}{\arabic{theorem}-1}
\begin{theorem} [Complexity of Quantum Phase Search]
    Given an $n$-qubit unitary $U$ and an eigenstate $\ket{\c}$ of $U$ with eigenvalue $e^{i\tau}$,  Algorithm~\ref{alg:eigenphase search} can use one ancilla qubit and $\cO \left(\frac{1}{\delta} \log \left(\frac{1}{\eps} \log\frac{1}{\d} \right) \right)$ queries to controlled-$U$ and its inverse to obtain an estimation of $\tau$ up to $\d$ precision with probability at least $1 - \eps$.
\end{theorem}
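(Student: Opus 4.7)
The plan is to establish correctness, success probability, and query complexity separately, each following naturally from the iterative structure of Algorithm~\ref{alg:eigenphase search}. I would fix a constant $\Delta \in (0, 1/2)$ throughout (for instance $\Delta = 1/4$), which automatically fixes $\cQ = \lceil \log(2\pi/(1-2\Delta))\rceil = \Theta(1)$ and $d = \lfloor 1/\bar\Delta \rfloor \geq 2$ as constants, and set the per-call failure threshold to $\eta = \eps/(\cQ \cT)$.

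For correctness, I would proceed inductively on the rounds $t = 0, 1, \ldots, \cT - 1$. The key invariant is that at the start of round $t$, the current unitary $U^{(t)}$ has $e^{i d^t (\tau - \sum_{s<t} \z_m^{(s)} d^{-s})}$ as the eigenphase on $\ket{\chi}$, and the running phase interval $[\z_l, \z_r]$ contains this amplified phase. By Lemma~\ref{lem:phase interval search} (the guarantee for $\PIS$ proved in Appendix~\ref{appendix:PIS}), a single call to $\PIS$ returns, with probability at least $(1-\eta)^\cQ$, an interval of length $2\bar\Delta = 2(\Delta + \pi/2^{\cQ+1})$ still containing the amplified phase. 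The rescaling step $U \leftarrow [e^{-i\z_m^{(t)}} U]^d$ then shifts the midpoint to $0$ and multiplies the remaining error by $d$; since $d \bar\Delta < 1 \leq \pi$, the new amplified phase remains in $[-\pi, \pi]$, preserving the invariant for round $t+1$. After $\cT$ rounds, unrolling the telescoping identity as in Appendix~\ref{appendix:PS through IA} gives
\begin{equation}
    \lvert \tau - \bar\tau \rvert \;=\; \left\lvert \tau - \sum_{t=0}^{\cT-1}\z_m^{(t)} d^{-t}\right\rvert \;\leq\; \frac{\bar\Delta}{d^{\cT-1}} \;\leq\; d^{-\cT} \;\leq\; \delta,
\end{equation}
where the last bound follows from the choice $\cT = \lceil \log(1/\delta)/\log(d)\rceil$.

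For the success probability, observe that the entire algorithm invokes the QPP circuit of Lemma~\ref{lem:phase classification} exactly $\cQ \cT$ times. Each call fails with probability at most $\eta$, so by the union bound the total failure probability is at most $\cQ\cT \eta = \eps$, giving success probability $\geq 1 - \eps$ as claimed. (One could alternatively use the Bernoulli inequality $(1-\eta)^{\cQ\cT} \geq 1 - \cQ\cT\eta$ as in Appendix~\ref{appendix:PS through IA}.) The ancilla count is immediate: every QPP circuit in Lemma~\ref{lem:phase classification} acts with a single ancilla qubit, and the algorithm uses no additional workspace beyond classical storage of interval endpoints.

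For the query complexity, Lemma~\ref{lem:phase classification} requires $L = \cO\bigl(\tfrac{1}{\Delta}\log\tfrac{1}{\eta}\bigr) = \cO\bigl(\log(\tfrac{1}{\eps}\log\tfrac{1}{\delta})\bigr)$ queries to the supplied (controlled) unitary per QPP call. At round $t$ the supplied unitary is $[e^{-i\z_m^{(t)}} U]^d$ to the $t$-th power of the rescaling, which costs $d^t$ queries to controlled-$U$ (and its inverse). Summing across rounds yields a geometric series:
\begin{equation}
    \sum_{t=0}^{\cT-1} \cQ \cdot L \cdot d^{t} \;=\; \cQ \cdot L \cdot \frac{d^{\cT}-1}{d-1} \;=\; \cO\!\left(\frac{L}{\delta}\right) \;=\; \cO\!\left(\frac{1}{\delta}\log\!\left(\frac{1}{\eps}\log\frac{1}{\delta}\right)\right),
\end{equation}
using $d^\cT = \cO(1/\delta)$ and $\cQ, d = \Theta(1)$.

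The main obstacle I anticipate is bookkeeping rather than any single deep step: one must carefully confirm that the interval amplification preserves the phase in $[-\pi,\pi]$ (so that $\PIS$ remains applicable at each round), that the per-call precision $\eta$ is chosen small enough to survive $\cQ\cT$ applications of the union bound without blowing up $L$, and that the doubly logarithmic factor $\log\log(1/\delta)$ inside $L$ composes correctly with the geometric factor $1/\delta$ from the amplification to yield exactly the stated bound. Once these bookkeeping items are in place, the three pieces assemble into the theorem directly.
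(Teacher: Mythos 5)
Your proposal is correct and follows essentially the same route as the paper's own proof in Appendix D.3–D.4: the telescoping error bound $\lvert\tau-\bar\tau\rvert\leq\bar\Delta/d^{\cT-1}\leq\delta$ from iterated phase amplification, the choice $\eta=\eps/(\cQ\cT)$ with a union/Bernoulli bound to get success probability $\geq 1-\eps$, and the geometric sum $\sum_{t<\cT}\cQ L d^t=\cO(d^\cT L)=\cO(\delta^{-1}\log(\eps^{-1}\log\delta^{-1}))$ for the query count, with $\Delta,\cQ,d$ treated as constants. The only cosmetic difference is that you fix a concrete $\Delta$ whereas the paper leaves it as a ``self-adjusted'' constant, which does not change the analysis.
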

\begin{proof}
We analyze the number of queries to the controlled-$U$ oracle in Algorithm~\ref{alg:eigenphase search} to get an estimate within required precision $\d > 0$, probability of failure $\eps > 0$ and input $\Delta$. Note that $\Delta$ is a self-adjusted parameter and hence can be considered as a constant in complexity analysis.

Observe that Algorithm~\ref{alg:eigenphase search} executes the phase interval search procedure $\cT$ times, while the $t$-th phase interval search procedure requires $\cQ$ calls of circuit $\UQPPs(U^{d^t})$ of $L$ layers.
By Eq.~\eqref{eqn:phase cT cQ bound} and Eq.~\eqref{eqn:layer L bound}, the total query complexity of the controlled-$U$ oracle is 
\begin{equation}
    \sum_{t = 0}^{\cT - 1} \cQ \times d^t \times L  
    = \cO (d^\cT L \cQ ) = \cO \left(\frac{1}{\delta} \frac{1}{\Delta} \log(\frac{\log \frac{1}{\delta}}{\eps}) \log\left(\frac{2\pi}{1 - 2\Delta}\right) \right)
    = \cO \left(\frac{1}{\delta} \log \left(\frac{1}{\eps} \log \frac{1}{\delta} \right) \right).
\end{equation}
\end{proof}
\renewcommand{\thetheorem}{S\arabic{theorem}}

\subsection{Application: period finding and factoring} \label{appendix:app period finding}
In this section, we consider applying the quantum phase search algorithm to solve the period-finding problem. The goal of period-finding is to find the smallest integer $r$ (namely the order) of a given element $x$ in the rings of integers modulo $N \in \NN$ such that $x^{r} \equiv 1 \pmod N$.

In the quantum setting, the problem of reversible quantum modular multiplier has been well-studied~\cite{rines2018high, cho2020quantum}. It has been proved that the quantum operator
\begin{align}
    U_x\ket{y \bmod N} = \ket{xy \bmod N}, \, y \in \ZZ/N\ZZ
\end{align}
can be constructed in cubic resources for every integer $x \in \ZZ/N\ZZ$. A novel property of such modular multiplier is that there is no additional quantum cost to realize a power of $U_x$ since $U_x^j = U_{x^j}$. Moreover, an eigenvector of $U_x$ is in form
\begin{align}
    \ket{u_s} \coloneqq \frac{1}{\sqrt{r}}\sum_{k=0}^{r-1} \exp\left(\frac{2\pi i sk}{r}\right)\ket{x^k \bmod N}
\end{align}
corresponding to its eigenphase $\frac{2\pi s}{r}$, and the uniform superposition of all eigenvectors is $\ket{1}$, i.e.\ $\ket{1}=\frac{1}{\sqrt{r}}\sum_{s=0}^{r-1}\ket{u_s}$.

The conventional quantum period-finding algorithm is to apply the quantum phase estimation algorithm to the modular multiplier with input state $\ket{1}$, then use the continued fraction algorithm to extract the order from the estimated phases. Similarly, we use the quantum phase search algorithm to extract eigenphases of the modular operator. The details of the algorithm are shown below.
\begin{figure}[H]
\begin{algorithm}[H]
\caption{Quantum Period Finding Algorithm}
\label{alg:period_finding}
\begin{algorithmic}[1]
    \REQUIRE $N \in \ZZ$, $x \in \ZZ/N\ZZ$, constant $\Delta\in(0,\frac{1}{2})$ and error tolerance $\eps, \d > 0$.
    \ENSURE order $r$ of $x$ in $\ZZ/N\ZZ$.
\STATE Construct the modular multiplier $U_x$ by $x$ and $N$.
\STATE Retrieve an estimated eigenphase $\tau$ of $U_x$ by the $\QPS$ algorithm. That is, $\tau \leftarrow \QPS(U_x, \ket{1}, \Delta, \eps, \d)$.
\STATE Apply continued fractional algorithm on $\tau$ to retrieve $l$ and $r$; return $r$.
\end{algorithmic}
\end{algorithm}
\end{figure}

\subsection{Application: amplitude estimation} \label{appendix:app qae}
The problem of quantum amplitude estimation (QAE) can be efficiently solved by the phase estimation algorithm~\cite{brassard2002quantum}, providing a quadratic speedup over classical Monte Carlo methods. In recent years, several studies~\cite{suzuki2020amplitude, grinko2021iterative, aaronson2021quantum, rall2022amplitude} can realize phase-estimation-free amplitude estimation with same quantum speedup. However, these works require large number of samplings i.e.\ measurements from quantum circuits. Here we show that our phase search algorithm can also apply to the amplitude estimation and inherit the computational advantage from conventional phase estimation.

Let $\mathcal{A}$ denote a quantum circuit that acts on $n$ qubits. Applying the circuit $\mathcal{A}$ to $\ket{0^n}$, the produced state is of the following form:
\begin{align}
    \mathcal{A}\ket{0^{\otimes n}}=\cos(\tau)\ket{0}\ket{\psi}+\sin(\tau)\ket{1}\ket{\phi}. \label{eqn:amplitude}
\end{align}
where $\ket{\psi}$ and $\ket{\phi}$ are $(n-1)$-qubit states, and $\tau\in(-\pi,\pi)$ is the phase. Here, $\cos(\tau)$ and $\sin(\tau)$ denote the amplitude of states $\ket{\psi}$ and $\ket{\phi}$, respectively. Our goal is to estimate $|\sin(\tau)|$ up to a given precision with high probability.

Suppose we can repeatedly apply the circuit $\mathcal{A}$ and its inverse $\mathcal{A}^{\dagger}$. Then we can construct a circuit for the Grover operator
\begin{align} \label{eqn:grover op}
G=\mathcal{A}(2\ket{0^{\otimes n}}\bra{0^{\otimes n}} - I^{\otimes n})\mathcal{A}^\dagger\cdot (I-2\ketbra{1}{1}) \otimes I^{\otimes (n-1)}.
\end{align}
Note that, in the space spanned by $\{\ket{0}\ket{\psi}, \ket{1}\ket{\phi}\}$, the Grover operator $G$ has an eigenphase $2\tau$ or $-2\tau$. Thus our amplitude estimation algorithm just applies the quantum phase search algorithm to Grover operator. Specifically, applying our quantum phase search algorithm can effectively extract eigenphases $\pm2\tau$. Moreover, post-processing can estimate the amplitude within the required precision. We show more details below.
\begin{figure}[H]
\begin{algorithm}[H]
\caption{Quantum Amplitude Estimation Algorithm}
\label{alg:amplitude_estimation}
\begin{algorithmic}[1]
    \REQUIRE circuit $\mathcal{A}$, constant $\Delta\in(0,\frac{1}{2})$ and error tolerance $\eps, \d > 0$.
    \ENSURE an amplitude $\sin(\tau)$.
\STATE Prepare the initial state $\ket{\chi} = \mathcal{A}\ket{0}^{\otimes n}$ and construct the Grover operator $G$ in equation~\eqref{eqn:grover op}.
\STATE Retrieve an estimated eigenphase $\widetilde{\tau}$ of $G$ by the $\QPS$ algorithm. That is, $\widetilde{\tau} \leftarrow \QPS(G, \ket{\chi}, \Delta, \eps, \d)$.
\STATE Return $|\sin(\frac{\widetilde{\tau}}{2})|$.
\end{algorithmic}
\end{algorithm}
\end{figure}



\section{Further Review for Block Encoding}

\subsection{Qubitization} \label{appendix:qubitization intro}

In this section we review the technique of qubitization purposed by~\cite{low2019hamiltonian}. Such technique associates the spectrum of target block encoding $U_A$ with the block encoded matrix $A$. Assume $A$ is Hermitian with $\norm{A} \leq 1$, since our work only deal with Hamiltonians and density operators. Recent work has discussed an explicit construction scheme for building a block encoding sparse matrices \cite{camps2022explicit}. To better understand qubitization, we analyze the spectral information of the circuit in Fig.~\ref{fig:qubitization}.

\begin{figure}[H]
\[ 
\Qcircuit @C=1.5em @R=1.5em {
\lstick{\ket{0}} & /^1 \qw & \gate{H} & \ctrlo{1} & \ctrl{1} & \gate{X} & \gate{H} & \multigate{1}{\rm REFLECTOR} & \qw  \\
\lstick{\ket{0^{\otimes m}}} & /^m \qw & \qw & \multigate{1}{U_A} & \multigate{1}{U_A^\dagger} & \qw & \qw & \ghost{\rm REFLECTOR} & \qw \\
\lstick{} & /^n \qw & \qw & \ghost{U_A} & \ghost{U_A^\dagger} & \qw & \qw & \qw & \qw
\gategroup{1}{3}{3}{7}{1.5em}{--}
}
\]
\caption{Circuit realization for the qubitized unitary $\widehat{U}_A$. Here $U_A$ is a $(n + m)$-qubit block encoding of $n$-qubit matrix $A$, and the gate \textrm{REFLECTOR} is equivalent to $2\ketbra{0^{\otimes (m + 1)}}{0^{\otimes (m + 1)}} - I^{\otimes (m + 1)}$. Note that it suffices to control the \textrm{REFLECTOR} and three \textrm{X} gates to realize controlled-$\widehat{U}_A$.}
\label{fig:qubitization}
\end{figure}
Let $\widehat{U}_A$ denotes the qubitization of block encoding unitary $U_A$ and $\widetilde{U}_A$ denotes the unitary for dashed region in Fig.~\ref{fig:qubitization}. Then Lemma 10 of~\cite{low2019hamiltonian} implies that $\widetilde{U}_A$ satisfies
\begin{align}
    (\bra{0^{\otimes (m + 1)}} \otimes I^{\otimes n})\, \widetilde{U}_A \,(\ket{0^{\otimes (m + 1)}} \otimes I^{\otimes n}) & = A, \\
    (\bra{0^{\otimes (m + 1)}} \otimes I^{\otimes n})\, \widetilde{U}_A^2 \,(\ket{0^{\otimes (m + 1)}} \otimes I^{\otimes n}) & = I^{\otimes n}
    \label{eqn:qubitization_condition}
.\end{align}
Let $\ket{\psi_\lambda}$ be an eigenstate of $A$ corresponding to its eigenvalue $\lambda$. Denote 
$\ket{\widehat{\psi}_\lambda} \coloneqq \ket{0^{\otimes (m + 1)}, \psi_\lambda}$. After applying $\widetilde{U}_A$ and $\widetilde{U}_A^2$ to $\ket{\widehat{\psi}_\lambda}$ respectively, states are of the following form:
\begin{align}
    \widetilde{U}_A \ket{\widehat{\psi}_\lambda} &= \lambda \ket{\widehat{\psi}_\lambda} +\sqrt{1-\lambda^2}\ket{\widehat{\bot}_\lambda}, \\
    \widetilde{U}_A^2 \ket{\widehat{\psi}_\lambda} &= \ket{\widehat{\psi}_\lambda},
    \label{eqn:u_bot}
\end{align}
where $\ket{\widehat{\bot}_\lambda}$ is an orthogonal state and satisfies $(\ketbra{0^{\otimes (m + 1)}}{0^{\otimes (m + 1)}} \otimes I^{\otimes n})\ket{\widehat{\bot}_\lambda}=0$. Above results also imply that all subspaces ${\rm span}\{\ket{\widehat{\psi}_\lambda}, \ket{\widehat{\bot}_\lambda}\}$ are mutually perpendicular. 
Moreover, Eq.~\eqref{eqn:u_bot} implies
\begin{equation}
    \widetilde{U}_A\ket{\widehat{\bot}_\lambda} = \sqrt{1-\lambda^2}\ket{\widehat{\psi}_\lambda} - \lambda\ket{\widehat{\bot}_\lambda}
.\end{equation}
Note that it suffices to analyze $\widehat{U}_A$ under subspace 
\begin{equation} \label{eqn:hilbert decomposition}
    \cH_A \coloneqq \bigoplus_{\lambda} {\rm span}\{\ket{\widehat{\psi}_\lambda}, \ket{\widehat{\bot}_\lambda}\}
,\end{equation}
since the input state of ancilla qubits is always $\ket{0^{\otimes (m + 1)}}$. In this subspace, we can see that $\widehat{U}_A$ is essentially a rotation whose matrix is similar to $R_Y$, i.e.\
\begin{align}
  \widehat{U}_A = (\textrm{REFLECTOR} \otimes I^{\otimes n}) \cdot \widetilde{U}_A = \bigoplus_{\lambda}
  \begin{bmatrix}
      \lambda & -\sqrt{1-\lambda^2}\\
      \sqrt{1-\lambda^2} & \lambda
  \end{bmatrix}_{\{\ket{\widehat{\psi}_\lambda},  \ket{\widehat{\bot}_\lambda}\}} \oplus [\cdots]_{\cH_A^\bot}.
  \label{eqn:rotation}
\end{align}
The spectral details of $\widehat{U}_A$ in $\cH$ follow immediately:
\begin{align}
\begin{matrix}
   \textrm{ eigenvector} & \ket{\chi_\lambda^{+}}=\frac{1}{\sqrt{2}}(\ket{\widehat{\psi}_\lambda} + i\ket{\widehat{\bot}_\lambda}), & \textrm{ eigenvalue $e^{+i\tau_\lambda}$, where $\lambda=\cos(\tau_\lambda)$};\\
    \textrm{ eigenvector} &\ket{\chi_\lambda^{-}}=\frac{1}{\sqrt{2}}(\ket{\widehat{\psi}_\lambda} - i\ket{\widehat{\bot}_\lambda}), & \textrm{ eigenvalue $e^{-i\tau_\lambda}$, where $\lambda=\cos(-\tau_\lambda)$}.
\end{matrix}
\label{eqn:spectral}
\end{align}
Therefore, we can select $\widehat{U}_A$ as the input unitary in the QPP circuit, to access the arc-cosine of eigenvalues of $A$, allowing phase evolution and evaluation to be applied on block encoded matrices.

\subsection{Block encoding construction for density matrices} \label{appendix:dm block enc}

The quantum purification model, which prepares a purification of a mixed state $\rho$, is an extensively explored model for entropy in the literature. Consider quantum registers $A$ and $B$ storing $n$ and $n'$ qubits, respectively. Suppose we have accessed to a $(n + n')$-qubit unitary oracle $U_\rho$ acting on these two registers, such that
\begin{equation}
    \ket{\Psi}_{AB} \coloneqq U_\rho \ket{0^{\otimes n}}_A \ket{0^{\otimes n'}}_B \textrm{ and } \tr_{B}(\ket{\Psi}_{AB}\bra{\Psi}_{AB}) = \rho
.\end{equation}
Such oracle can be further employed to construct a block encoding $\widehat{U}_\rho$ of $\rho$. We recall Lemma 7 in \cite{low2019hamiltonian} and give the circuit construction for $\widehat{U}_\rho$ in Fig.~\ref{fig:state_block encoding}, using $U_\rho$ and $U_\rho^\dagger$ once.
\begin{figure}[H]
    \[ 
    \Qcircuit @C=1.5em @R=1.5em {
    \lstick{\ket{0^{\otimes n'}}_{B}} & /^{n'} \qw & \multigate{1}{U_\rho} & \qw & \multigate{1}{U_\rho^\dagger} & \multigate{1}{\rm REFLECTOR} & \qw \\
    \lstick{\ket{0^{\otimes n}}_{A}} & /^n \qw& \ghost{U_\rho} & \multigate{1}{\rm SWAP} & \ghost{U_\rho^\dagger} & \ghost{\rm REFLECTOR} & \qw \\
    \lstick{} & /^n \qw & \qw & \ghost{\rm SWAP} & \qw & \qw & \qw^{\rm} 
    \gategroup{1}{3}{3}{5}{1.5em}{--}
    } \] 
\caption{Circuit realization for $\widehat{U}_\rho$. The input state for two ancilla registers is $\ket{0^{\otimes (n + n')}}_{AB}$. Here \textrm{SWAP} is a $2n$-qubit swap operator and \textrm{REFLECTOR} is equivalent to $2\ketbra{0^{\otimes (n + n')}}{0^{\otimes (n + n')}} - I^{\otimes (n + n')}$. Note that it suffices to control the \textrm{SWAP} and \textrm{REFLECTOR} gates to realize controlled-$\widehat{U}_\rho$.}
\label{fig:state_block encoding}
\end{figure}

Note that the unitary $\widetilde{U}_\rho$ for the dashed region in Fig.~\ref{fig:state_block encoding} satisfies
\begin{align}
    (\bra{0^{\otimes (n + n')}}_{AB} \otimes I^{\otimes n})\, \widetilde{U}_\rho \,(\ket{0^{\otimes (n + n')}}_{AB} \otimes I^{\otimes n}) & = \rho, \\
    (\bra{0^{\otimes (n + n')}}_{AB} \otimes I^{\otimes n})\, \widetilde{U}_\rho^2 \,(\ket{0^{\otimes (n + n')}}_{AB} \otimes I^{\otimes n})  & = I^{\otimes n}
.\end{align} 
Denote $\{ p_j \}_{j=0}^{2^n - 1}$ as the set of eigenvalues of $\rho$, and $\{ \tau_j \}_{j=0}^{2^{n + 1} - 1}$ as the set of eigenphases of $\widehat{U}_\rho$ under subspace $\cH_\rho$ defined in Eq.~\eqref{eqn:hilbert decomposition}. Through same reasoning in Appendix~\ref{appendix:qubitization intro}, we have
\begin{equation} \label{eqn:phase cosine relation}
    \{ \tau_j \}_{j=0}^{2^{n + 1} - 1} = \{ \arccos(p_j), -\arccos(p_j) \}_{j=0}^{2^n - 1}
.\end{equation}
As shown above, the spectrum of $\widehat{U}_\rho$ is connected with that of $\rho$ in subspace $\cH_\rho$.

\section{Further Discussion for Hamiltonian Problems} \label{appendix:hamiltonian simulation}
\subsection{Hamiltonian simulation}
In this section, we explain the main idea of our method for Hamiltonian simulation and discuss how to find parameters for simulating the target function $f(x)$. For convenience, let $\widehat{U}_H$ denote the qubitized block encoding of a Hamiltonian $H$.

Prepare the initial state $\ket{0^{\otimes (m + 2)}, \psi}$, where the first one is the ancilla qubit of QPP, and the other $m+1$ qubits are ancilla qubits of the qubitized block encoding $\widehat{U}_H$. Decompose the initial state $\ket{0^{\otimes (m + 2)},\psi}$ by eigenvectors of the Hamiltonian. 
\begin{align}
    \ket{0}\ket{0^{\otimes (m + 1)},\psi}=\sum_{\lambda} \beta_\lambda \ket{0}\ket{0^{\otimes (m + 1)},\psi_\lambda}.
\end{align}
where $\sum_{\lambda}|\beta_\lambda|^2=1$. As shown in Eq.~\eqref{eqn:rotation}, the qubitized block encoding $\widehat{U}_H$ is a rotation in each subspace ${\rm span}\{\ket{0^{\otimes (m + 1)}, \psi_\lambda}, \ket{\widehat{\bot}_\lambda}\}$. Then we construct the circuit of Hamiltonian simulation by incorporating $\widehat{U}_H$ into the structure of QPP. Note that one eigenvalue of the Hamiltonian corresponds to two eigenvalues of $\widehat{U}_H$, and then the action of the circuit can be described as the matrix below, with respect to the basis $\{\ket{0,\chi_\lambda^{+}}, \ket{1,\chi_\lambda^{+}}\}$ and $\{\ket{0,\chi_\lambda^{-}}, \ket{1,\chi_\lambda^{-}}\}$ for each eigenvalue $\lambda$.
\begin{align}
    \bigoplus_{\lambda}\left(
    \begin{bmatrix}
        P(\tau_\lambda) & -Q(\tau_\lambda)\\
        Q^*(\tau_\lambda)& P^*(\tau_\lambda)
    \end{bmatrix}_\lambda
    \oplus
    \begin{bmatrix}
        P(-\tau_\lambda) & -Q(-\tau_\lambda)\\
        Q^*(-\tau_\lambda)& P^*(-\tau_\lambda)
    \end{bmatrix}_\lambda\right).
    \label{eqn:hs_circuit_matrix}
\end{align}
On the other hand, note that each state $\ket{0^{m+1}, \psi_\lambda}$ can be written as an equal-weighted sum of eigenvectors of $\widehat{U}_H$, implying the following equation.
\begin{align}
   \ket{0^{\otimes (m + 2)}, \psi_\lambda}= \frac{1}{\sqrt{2}}\left[\frac{1}{\sqrt{2}}\ket{0}(\ket{0^{\otimes (m + 1)},\psi_\lambda}+i \ket{\widehat{\bot}_\lambda})+\frac{1}{\sqrt{2}}\ket{0}(\ket{0^{\otimes (m + 1)},\psi_\lambda}-i \ket{\widehat{\bot}_\lambda})\right].
    \label{eqn:qubitization_rotation}
\end{align}
Using Eq.~\eqref{eqn:qubitization_rotation}, we thus rewrite the initial state as a superposition of eigenvectors of $\widehat{U}_H$. With the decomposition in Eq.~\eqref{eqn:hs_circuit_matrix}, applying QPP to the state $\ket{0^{m+2},\psi}$ outputs a state of the following form
\begin{align}
    \sum_{\lambda} \frac{\beta_\lambda}{\sqrt{2}} \left[
\frac{\left(P(\tau_\lambda)\ket{0}+Q^*(\tau_\lambda)\ket{1}\right)}{\sqrt{2}}(\ket{0^{\otimes (m + 1)},\psi_\lambda} + i \ket{\widehat{\bot}_\lambda}) + \frac{\left(P(-\tau_\lambda)\ket{0}+Q^*(-\tau_\lambda)\ket{1}\right)}{\sqrt{2}}(\ket{0^{\otimes (m + 1)},\psi_\lambda}-i \ket{\widehat{\bot}_\lambda})
    \right]
.\end{align}

The output state is near the target state as much as possible by suitably truncating the target function. In fact, the difference between the final output state and the target state is bounded by the quantity below.
\begin{align}
    \textrm{error}&\leq\max_{x\in[-\pi,\pi]}\left\|\left(P(x)\ket{0}+Q^*(x)\ket{1}\right)-e^{-i\cos(x)t}\ket{0}\right\|\\
    &=\max_{x\in[-\pi,\pi]}\sqrt{|P(x)-e^{-i\cos(x)t}|^2+1-|P(x)|^2}.
\end{align}
Let $a_x=P(x)-e^{-i\cos(x)t}$ denote the difference, and assume that $|a_x|\leq \eps$ for all $x\in[-\pi, \pi]$. Then we show how large $|P(x)|^2$ is.
\begin{align}
    |a_x|^2=(P(x)-e^{-i\cos(x)t})(P^*(x)-e^{i\cos(x)t})\Rightarrow |P(x)|^2&=2{\rm Re}(e^{i\cos(x)t}P(x))+|a_x|^2-1\\
    &\geq 1-2|a_x|+|a_x|^2=(1-|a_x|)^2.
\end{align}
The state approximation error is at most as large as
\begin{align}
    \textrm{error}\leq \max_{x\in[-\pi, \pi]}\sqrt{|a_x|^2+1-(1-|a_x|)^2}\leq\sqrt{2\eps}. 
    \label{eqn:error}
\end{align}
Hence, Eq.~\eqref{eqn:error} establishes the relation between the state approximation and the function approximation. 

The remaining is to show $P(x)$ can approximate $f(x)=e^{-i\cos(x)t}$ with arbitrary precision, which is true since $f(x)$ could be expanded into a trigonometric polynomial. We summarize the results in the following theorem and discuss how to truncate $f(x)$ in the proof.

\renewcommand\thetheorem{\ref{thm:ham simulation complexity}}
\setcounter{theorem}{\arabic{theorem}-1}
\begin{theorem}
Given a block encoding $U_H$ of $H/\Lambda$ for some $\Lambda \geq \norm{H}$, there exists an algorithm that simulates evolution under the Hamiltonian $H$ for time $t \in \mathbb{R}$ within precision $\delta>0$, using two ancilla qubits and querying controlled-$U_H$ and controlled-$U_H^\dagger$ for a total number of times in \[\Theta\left(\Lambda|t|+\frac{\log(2/\delta^2)}{\log\left(e+\frac{\log(2/\delta^2)}{\Lambda|t|}\right)}\right).\]
\end{theorem}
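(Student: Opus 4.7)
\textbf{Proof plan for Theorem~\ref{thm:ham simulation complexity}.} The plan is to construct a single trigonometric polynomial that, when fed into Theorem~\ref{thm:qpp evolution}, implements the desired time evolution on the eigenspaces identified by qubitization. First I would rescale: since evolution under $H$ for time $t$ equals evolution under $H/\Lambda$ for time $T := \Lambda |t|$, I may assume $\norm{H/\Lambda} \leq 1$ (which holds automatically because $U_H$ is a block encoding). Applying the qubitization construction reviewed in Section~\ref{sec:block enc intro} to $U_H$ yields a block encoding $\widehat{U}_H$ on one extra ancilla qubit whose nontrivial eigenphases are exactly $\pm \tau_\lambda = \pm \arccos(\lambda)$, where $\lambda$ ranges over the eigenvalues of $H/\Lambda$. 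Since $\cos(\pm \tau_\lambda) = \lambda$, any trigonometric polynomial $F(x)$ that uniformly approximates $f(x) := e^{-iT\cos(x)}$ will, once substituted into Theorem~\ref{thm:qpp evolution}, produce an operator whose action on each qubitization subspace approximates multiplication by $e^{-i\lambda T}$, i.e.\ $e^{-iHt}$ up to sign conventions of $t$.

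The central step is therefore a quantitative trigonometric approximation of $f(x) = e^{-iT\cos(x)}$. I would use the Jacobi--Anger expansion
$$
e^{-iT\cos(x)} \;=\; \sum_{k=-\infty}^{\infty} (-i)^k J_k(T)\, e^{ikx},
$$
truncated to $|k| \leq L$, i.e.\ $F(x) := \sum_{|k|\leq L}(-i)^k J_k(T) e^{ikx}$. This is the main obstacle: one needs a sharp non-asymptotic bound on the truncation error. The relevant fact is that Bessel functions $J_k(T)$ are of order $1/\sqrt{T}$ for $|k|\lesssim T$ but decay super-exponentially once $|k|$ exceeds roughly $eT/2$, so summing the tail using standard factorial-type estimates (essentially the analysis already performed in~\cite{low2017optimal, Berry2015}) gives $\norm{F-f}_\infty \leq \eps$ provided
$$
L \;=\; \Theta\!\left(T + \frac{\log(1/\eps)}{\log\!\bigl(e + \log(1/\eps)/T\bigr)}\right).
$$
Everything outside this Bessel-tail estimate is essentially packaging.

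With $F$ in hand, Theorem~\ref{thm:qpp evolution} supplies angles $\omega, \bm\theta, \bm\phi$ and a QPP circuit $V(\widehat{U}_H)$ of $2L$ layers whose top-left block equals $F(\widehat{U}_H)$ on each qubitization subspace. Running it on $\ket{0}\ket{0^{\otimes(m+1)}}\ket{\psi}$ and post-selecting the two flag registers on $\ket{0}$ yields a state whose distance to $e^{-iHt}\ket{\psi}$ is at most $\sqrt{2\eps}$ by the state-versus-function error conversion already derived in Eq.~\eqref{eqn:error}. Choosing $\eps = \delta^2/2$ thus secures overall precision $\delta$. Each QPP layer invokes controlled-$\widehat{U}_H$ or its inverse once, and each of these uses a constant number of queries to controlled-$U_H$ and controlled-$U_H^\dagger$; the total query count is therefore $\Theta(L)$ with $\eps = \delta^2/2$, which substitutes to exactly the stated asymptotic. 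The ancilla overhead is one qubit for the QPP flag and one for qubitization beyond those of $U_H$, giving the claimed two ancilla qubits.
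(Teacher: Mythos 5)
Your proposal follows essentially the same route as the paper: qubitize $U_H$ so that the nontrivial eigenphases of $\widehat{U}_H$ become $\pm\arccos\lambda$, approximate $e^{-it\Lambda\cos x}$ by a truncated Jacobi--Anger series of degree $\Theta\bigl(\Lambda|t| + \log(2/\delta^2)/\log(e + \log(2/\delta^2)/(\Lambda|t|))\bigr)$, feed that trigonometric polynomial into Theorem~\ref{thm:qpp evolution}, and convert the uniform function error $\eps=\delta^2/2$ into the state error $\sqrt{2\eps}=\delta$ via the inequality derived in Eq.~\eqref{eqn:error}. One small imprecision: Eq.~\eqref{eqn:error} bounds the distance of the full (unmeasured) output state from the target tensored with $\ket{0}$ on the ancillas, whereas you post-select on the flag registers --- after renormalization the constant in that bound shifts slightly, though the asymptotic query complexity is unchanged, and the paper itself treats post-selection only as an optional relaxation in the remark following the theorem.
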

\begin{proof}
By Theorem~\ref{thm:qpp evolution}, QPP can simulate any trigonometric polynomial with an order $N$. Thus we just need to find such a polynomial that approximates the function $f(x)=e^{-i\cos(x)t}$ within the precision $\delta^2/2$. We recall the Jacobi-Anger expansion $e^{iz\cos(\theta)}=\sum_{k=-\infty}^{\infty}i^{k}J_{k}(z)e^{ik\theta}$ \cite{abramowitz1988handbook}, where $J_{k}(z)$ is the $k$-th Bessel function of the first kind. The truncation error of the Jacobi-Anger expansion has been well-studied in the literature \cite{Berry2015,gilyen2019quantum}. Given the truncation error $\delta^2/2$, the order $N$ is given by
\begin{align}
    N=\Theta\left(|z|+\frac{\log(2/\delta^2)}{\log\left(e+\frac{\log(2/\delta^2)}{|z|}\right)}\right).
\end{align}
Then QPP circuit uses $N$ times controlled $U_H$ and $U_H^\dagger$. Recall that the evolution time is $t\Lambda$, thus we set the parameter $z=t\Lambda$. Clearly, the cost of the circuit is the same as claimed, and the proof is finished.
\end{proof}
\renewcommand{\thetheorem}{S\arabic{theorem}}

\subsection{Hamiltonian eigensolver}
A fundamental problem in physics is to calculate static properties of a quantum
system. Of all the questions which one might ask about a quantum system, there is one most frequently asked: what are the energy and eigenstates of a Hamiltonian? After the development of quantum phase estimation, numerous quantum algorithms for computing the spectrum of a Hamiltonian have been developed. Clearly, we could just direct apply our phase search algorithm in a similar spirit to give quantum algorithms for extracting Hamiltonian eigen-information. To this end, we describe several quantum algorithms based on quantum phase estimation below.

The quantum algorithm proposed by~\citet{Abrams1999} computes the eigenstates and eigenvalues of a Hamiltonian via applying quantum phase estimation to the simulated real-time evolution operator, resulting in exponential speedups over known classical methods. Under post-selection of the ancilla register, this approach can output the spectrum by reading an eigenvalue and obtaining the corresponding eigenvector. 
Despite being conceptually simple, the technical realization of the synthesis between quantum phase estimation and Hamiltonian simulation could be difficult. Indeed, realizing the evolution operator requires a potentially large circuit, which would complicate the entire phase estimation circuit and make its practical implementation challenging.

In fact, the use of the time evolution operator is not necessary for phase estimation. Recent work by \citet{PhysRevLett.121.010501} proposed a method that extracts the spectrum of a Hamiltonian $H$ from a qubitized block encoding of $H/\L$. The idea is to estimate a phase of the qubitized block encoding $\widehat{U}$ then directly calculate the corresponding eigenvalue of $H$ by the relationship between their spectrum, i.e.\ $\lambda = \L \cos(\tau_\lambda)$.
To achieve the desired precision $\delta$, one needs to apply the block encoding for $\mathcal{O}(\Lambda/\delta)$ times.

Moreover, the eigenvector of the Hamiltonian can be extracted from the post-measurement state of quantum phase estimation. Specifically, suppose we obtain an eigenvector $\ket{\chi_\lambda^\pm}=\frac{1}{\sqrt{2}}(\ket{0^{\otimes (m + 1)}, \psi_\lambda}\pm i\ket{\bot_\lambda})$ by phase estimation. Since $(\ketbra{0^{\otimes (m + 1)}}{0^{\otimes (m + 1)}} \otimes I^{\otimes n}) \ket{\bot_\lambda}=0$, we directly measure the ancilla register of the qubitized block encoding in the computational basis. Clearly, the probability of receiving all zeros from the measurement is exactly half, and the post-measurement state is the corresponding eigenvector. If the result is not as expected, we could re-apply the phase estimation and repeat the measurement, projecting the state into $\ket{\chi_\lambda^+}$ or $\ket{\chi_\lambda^-}$. Repeating for $k$ times makes the failure probability decay exponentially as $(1/2)^{k}$.

Another important Hamiltonian problem is to compute ground and excited state properties of many-body systems, which is quite a challenge in quantum physics and quantum chemistry. Recently, \citet{dong2022ground} proposed a method via quantum eigenvalue transformation of unitary matrices, which could be naturally generalized into the structure of QPP.

\section{Theorems of Entropy Estimation} \label{appendix:entropy estimation}
\subsection{Proof of Theorem~\ref{thm:state entropy estimation}} \label{appendix:state entropy estimation}
\renewcommand\thetheorem{\ref{thm:state entropy estimation}}
\setcounter{theorem}{\arabic{theorem}-1}
\begin{theorem}
    Let $\ket{\Psi_\rho}_{AB}$ be a purification of an $n$-qubit state $\rho$ and $\widehat{U}_\sigma$ be a qubitized block encoding of an $n$-qubit state $\sigma$ with $m$ ancilla qubits. For any real-valued polynomial $f(x) = \sum_{k=0}^L c_j x^k$ with \highlight{$\abs{F(x)} \leq 1$ for all $x \in \RR$}, there exists a QPP circuit $\UQPPs(\widehat{U}_\sigma)$ of $L$ layers such that 
    \begin{align}
        \langle Z^{(0)} \rangle_{\ket{\Phi}} = \tr \left(\rho f(\sigma) \right)
    ,\end{align}
    where $\ket{\Phi} = \left(\UQPPs(\widehat{U}_\sigma) \otimes I_B\right)\ket{0^{\otimes (m + 1)}} \ket{\Psi_\rho}_{AB}$ and the polynomial on a quantum state is defined as $f(\sigma) = \sum_{k=0}^L c_j \sigma^k$.
\end{theorem}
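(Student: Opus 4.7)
The plan is to reduce Theorem~\ref{thm:state entropy estimation} to the quantum phase evaluation result (Theorem~\ref{thm:qpp evaluation}) by a suitable change of variables, together with the spectral correspondence between $\sigma$ and its qubitized block encoding $\widehat{U}_\sigma$ recalled in Appendix~\ref{appendix:qubitization intro}.

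First, I would define the companion trigonometric polynomial $F(x) \coloneqq f(\cos x)$. Since every $\cos^k(x)$ expands as a real trigonometric polynomial of degree $k$ in $\CC[e^{ix},e^{-ix}]$, $F$ is a real-valued trigonometric polynomial of degree at most $L$, and the hypothesis $|F|\le 1$ (i.e.\ $|f(\cos x)| \le 1$) is inherited from the assumption of the theorem. Hence $F$ satisfies the hypotheses of Theorem~\ref{thm:qpp evaluation}, and there exist angle parameters $\omega,\bm\theta,\bm\phi$ defining an $L$-layer QPP circuit $\UQPPs(\widehat{U}_\sigma)$ whose $Z^{(0)}$-expectation on any input realises the phase-weighted sum $\sum_j p_j F(\tau_j)$.

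Next, I would apply Theorem~\ref{thm:qpp evaluation} with the target unitary $\widehat{U}_\sigma$ and the mixed input state $\ketbra{0^{\otimes m}}{0^{\otimes m}}\otimes\rho$ on the register on which $\widehat{U}_\sigma$ acts. To connect this to the purified-input formulation in the statement, I would verify the elementary identity
\begin{equation}
  \langle Z^{(0)}\rangle_{\ket{\Phi}} = \tr\!\left[(Z^{(0)}\otimes I)\,\UQPPs(\widehat{U}_\sigma)\bigl(\ketbra{0}{0}\otimes\ketbra{0^{\otimes m}}{0^{\otimes m}}\otimes\rho\bigr)\UQPPs(\widehat{U}_\sigma)^\dagger\right]
\end{equation}
by tracing out register $B$ using $\tr_B\ketbra{\Psi_\rho}{\Psi_\rho}=\rho$ and the fact that $\UQPPs(\widehat{U}_\sigma)\otimes I_B$ commutes with tracing $B$. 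This gives $\langle Z^{(0)}\rangle_{\ket{\Phi}} = \sum_j p_j F(\tau_j)$ with $p_j = \bra{\chi_j}(\ketbra{0^{\otimes m}}{0^{\otimes m}}\otimes\rho)\ket{\chi_j}$.

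The heart of the proof, and the only step that requires real care, is evaluating $\sum_j p_j F(\tau_j)$ against the qubitization structure. Writing $\sigma=\sum_j q_j\ketbra{\phi_j}{\phi_j}$, Appendix~\ref{appendix:qubitization intro} (equations \eqref{eqn:spectral} and \eqref{eqn:phase cosine relation}) gives pairs of eigenphases $\pm\tau_{q_j}=\pm\arccos(q_j)$ on the invariant subspace $\cH_\sigma$, with eigenvectors $\ket{\chi_{q_j}^\pm}=\tfrac{1}{\sqrt{2}}(\ket{0^{\otimes m},\phi_j}\pm i\ket{\widehat\bot_{q_j}})$. Crucially $F$ is even, so $F(\pm\tau_{q_j})=f(q_j)$; and $\ket{\widehat\bot_{q_j}}$ is orthogonal to $\ket{0^{\otimes m}}\otimes\cH$, whence the cross terms vanish and
\begin{equation}
  \bra{\chi_{q_j}^+}\bigl(\ketbra{0^{\otimes m}}{0^{\otimes m}}\otimes\rho\bigr)\ket{\chi_{q_j}^+}
  + \bra{\chi_{q_j}^-}\bigl(\ketbra{0^{\otimes m}}{0^{\otimes m}}\otimes\rho\bigr)\ket{\chi_{q_j}^-}
  = \bra{\phi_j}\rho\ket{\phi_j}.
\end{equation}
Any eigenvector of $\widehat{U}_\sigma$ lying in $\cH_\sigma^\perp$ satisfies $(\bra{0^{\otimes m}}\otimes I)\ket{\chi}=0$ and therefore contributes zero to $p_j$. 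Collecting everything yields $\sum_j p_j F(\tau_j)=\sum_j f(q_j)\bra{\phi_j}\rho\ket{\phi_j}=\tr(\rho f(\sigma))$, which is the claimed identity. The expected obstacle is not conceptual but bookkeeping: one must be scrupulous about which subspaces the $\pm$ eigenvectors live in and track that the kernel of $\bra{0^{\otimes m}}\otimes I$ kills all out-of-subspace contributions, so that only the eigenvalues $q_j$ of $\sigma$ (rather than the full spectrum of $\widehat{U}_\sigma$) survive in the final sum.
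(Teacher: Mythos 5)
Your proposal is correct and follows essentially the same route as the paper's proof: both define $F(x) = f(\cos x)$, invoke Theorem~\ref{thm:qpp evaluation} on $\widehat{U}_\sigma$ with effective input $\ketbra{0^{\otimes m}}{0^{\otimes m}}\otimes\rho$ (obtained by tracing out $B$ from the purification), and use the qubitization eigenstructure $\ket{\chi_j^\pm}=\tfrac{1}{\sqrt2}(\ket{0^{\otimes m},\psi_j}\pm i\ket{\widehat\bot_j})$ together with the evenness of $F$ and the orthogonality of $\ket{\widehat\bot_j}$ to the $\ket{0^{\otimes m}}$-subspace to show $p_j^+ + p_j^- = \bra{\psi_j}\rho\ket{\psi_j}$, giving $\tr(\rho f(\sigma))$. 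Your explicit observation that eigenvectors of $\widehat{U}_\sigma$ lying in $\cH_\sigma^\perp$ contribute nothing is a cleaner statement of what the paper compresses into the remark that $\ketbra{0^{\otimes m}}{0^{\otimes m}}\otimes\rho$ is supported on $\cH_\sigma$.
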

\begin{proof}
We start with the spectral decomposition of $\sigma$ and $\widehat{U}_\sigma$ under subspace $\cH_\sigma$ defined in Eq.~\eqref{eqn:hilbert decomposition}. From equation Eq.~\eqref{eqn:spectral} we have
\begin{align}
    \sigma &= \sum_{j=0}^{2^n - 1} q_j \ketbra{\psi_j}{\psi_j},\\
    \quad \widehat{U}_\sigma &= \bigoplus_{j=0}^{2^n - 1}
      \begin{bmatrix}
          e^{i \tau_j} & 0\\
          0 & e^{-i \tau_j}
      \end{bmatrix}_{\{\ket{\chi^+_j}, \ket{\chi^-_j}\}} \oplus [\cdots]_{\cH_\sigma^\bot}
,\end{align}
where $\tau_j = \arccos(q_j)$, and $\ket{\chi^\pm_j}$ are eigenstates of $\widehat{U}_\sigma$ such that 
\begin{equation} \label{eqn:superpositioin of sigma}
\begin{matrix}
    \ket{\chi_j^{+}} = \frac{1}{\sqrt{2}}(\ket{0^{\otimes m}, \psi_j} + i\ket{\widehat{\bot}_j});\\
    \ket{\chi_j^{-}} = \frac{1}{\sqrt{2}}(\ket{0^{\otimes m}, \psi_j} - i\ket{\widehat{\bot}_j})
\end{matrix}
\end{equation}
for some quantum state $\ket{\widehat{\bot}_j}$ (defined in Appendix~\ref{appendix:qubitization intro}) so that $(\ketbra{0^{\otimes m}}{0^{\otimes m}} \otimes I^{\otimes n}) \ket{\widehat{\bot}_j}  = 0$. 
Also, note that $\ketbra{0^{\otimes m}}{0^{\otimes m}} \otimes \rho$ is a density matrix in $\cH_\sigma$ and hence can be decomposed by the basis $\{ \ket{\chi^\pm_j} \}_{j=0}^{2^n - 1}$. Now we are ready to analyze the effect of QPP on the input state. Define the function $F(x) = \sum_{k=-L}^L d_k e^{ikx} \coloneqq f(\cos(x))$. Note that $\abs{F}(x) \leq 1$ implies $\norm{\bm d}_1 \leq 1$.

Suppose the input state is a purified state $\ket{\Psi_\rho}_{AB}$ so that $\tr_B(\ket{\Psi_\rho}_{AB}) = \rho$.
Here, the register $A$ is the main register of the QPP circuit. By Schmidt decomposition, there exists an orthonormal set $\{\ket{\phi_j}\}_{j=0}^{2^n - 1}$ of quantum states on register $B$ such that
\begin{align}
   \ket{\Psi_\rho}_{AB} = \sum_{j=0}^{2^n - 1} \sqrt{p_j} \ket{\psi_j}_A \ket{\phi_j}_B
.\end{align}
From Theorem~\ref{thm:qpp evaluation}, there exists a QPP circuit $\UQPPs(\widehat{U}_\sigma \otimes I_B)$ of $L$ layers such that
\begin{align}
        \langle Z^{(0)}  \rangle_{\ket{\Phi}} &= \tr\left[\left(Z^{(0)} \otimes I^{\otimes (m + n)} \otimes I_B\right) \cdot \ket{\Phi} \bra{\Phi} \right] \\
        & = \sum_{j=0}^{2^n - 1} p^+_j F(\tau_j) + \sum_{j=0}^{2^n - 1} p^-_j F(-\tau_j) = \sum_{j=0}^{2^n - 1} (p^+_j + p^-_j) f(q_j)
,\end{align}
where $p^\pm_j = \bra{\chi^\pm_j}\, (\ketbra{0^{\otimes m}}{0^{\otimes m}} \otimes \rho) \,\ket{\chi^\pm_j}$. Further, Eq.~\eqref{eqn:superpositioin of sigma} implies
\begin{align}
    p^+_j + p^-_j & = \bra{\chi^+_j} \left(\ketbra{0^{\otimes m}}{0^{\otimes m}} \otimes \rho\right) \ket{\chi^+_j} + \bra{\chi^-_j} \left(\ketbra{0^{\otimes m}}{0^{\otimes m}} \otimes \rho \right) \ket{\chi^-_j} \\
    & = \bra{\psi_j} \rho \ket{\psi_j}
.\end{align}
The statement holds as
\begin{equation}
    \langle Z^{(0)}  \rangle_{\ket{\Phi}} = \sum_{j=0}^{2^n - 1} \bra{\psi_j} \rho \ket{\psi_j} f(q_j) = \tr \left(\rho f(\sigma) \right)
.\end{equation} 
\end{proof}\\
\renewcommand{\thetheorem}{S\arabic{theorem}}

That is, for any entropic function that is well approximated by a polynomial, we can use QPP circuits to estimate such an entropy by Theorem~\ref{thm:state entropy estimation}. The following result guarantees that the number of circuit layers $L$ for polynomial approximation is $\cO \left(\log \frac{1}{\eps}\right)$ up to precision $\eps$.

\begin{lemma} [Corollary 66 in~\cite{gilyen2019quantum}] \label{lem:gurantee for poly approx}
Let $x_0 \in [-1, 1]$, $r \in (0, 2]$, $\delta \in (0, r]$, and let $f: [-x_0 - r - \delta, x_0 + r + \delta] \to \CC$ and be such that $f(x_0 + x) = \sum_{k=0}^{\infty} c_k x^k$ for all $x \in [-r - \delta, r + \delta]$. Suppose $B > 0$ is such that $\sum_{k=0}^{\infty} (r + \delta)^k |c_k| \leq B$. Let $\eps \in (0, \frac{1}{2B}]$, then there is an efficiently computable polynomial $P \in \CC[x]$ of degree $\cO(\frac{1}{\delta} \log \frac{B}{\eps})$ such that
\begin{itemize}
    \item for all $x \in [-1, 1]$, $\abs{P(x)} \leq \eps + B$ and
    \item for all $x \in [x_0 - r, x_0 + r]$, $\abs{f(x) - P(x)} \leq \eps$.
\end{itemize}
\end{lemma}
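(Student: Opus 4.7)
The plan is to take $P$ to be a truncated Taylor series of $f$ about $x_0$, namely $\tilde P(x) \coloneqq \sum_{k=0}^{N} c_k (x-x_0)^k$, possibly multiplied by a smoothing window polynomial to control its growth outside the target interval. I will fix the degree $N$ to be of order $\cO(\delta^{-1}\log(B/\eps))$; the coefficients $c_k$ are given by hypothesis so $\tilde P$ is efficiently computable. As a first observation, the assumption $\sum_k (r+\delta)^k|c_k| \le B$ already yields the uniform bound $|\tilde P(x)| \le B$ on the inflated interval $[x_0 - (r+\delta),\,x_0 + (r+\delta)]$, term by term.

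Next, I would bound the approximation error on $[x_0 - r,\,x_0+r]$ via the standard geometric-tail argument: for $|y| \le r$,
\begin{align*}
  |f(x_0 + y) - \tilde P(x_0+y)|
  &\le \sum_{k>N} |c_k|\, r^k
  = \sum_{k>N}\, (r+\delta)^k|c_k|\left(\frac{r}{r+\delta}\right)^{\!k} \\
  &\le B \cdot \left(\frac{r}{r+\delta}\right)^{\!N+1} \cdot \frac{r+\delta}{\delta}.
\end{align*}
Invoking $\log(1+\delta/r) \ge \delta/(r+\delta) \ge \delta/4$ (using $r+\delta\le 4$), the right-hand side drops below $\eps$ as soon as $N = \Theta(\delta^{-1}\log(B/\eps))$, which establishes the pointwise error bound on $[x_0-r,\,x_0+r]$.

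For the global bound on $[-1,1]$, I would first check whether $[-1,1] \subseteq [x_0-(r+\delta),\,x_0+(r+\delta)]$; if so, Step~1 already gives $|\tilde P(x)| \le B$ on $[-1,1]$ and I can simply set $P = \tilde P$. Otherwise, I plan to multiply $\tilde P$ by a window polynomial $W$ of degree $\cO(\delta^{-1}\log(B/\eps))$ that approximates the indicator of $[x_0 - r, x_0 + r]$ to accuracy $\eps/B$, is bounded by $1+\eps/B$ on $[-1,1]$, and satisfies $|W(x)-1| = O(\eps/B)$ on the target interval --- such a window can be assembled from two shifted polynomial approximations to $\mathrm{sgn}$, in the same spirit as the construction underlying Lemma~\ref{lem:trig sign approx}. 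The product $P \coloneqq W\tilde P$ will then inherit $|P| \le B+\eps$ on $[-1,1]$ and an $O(\eps)$ approximation on $[x_0-r,\,x_0+r]$. The hardest step will be this window construction: its degree must stay within the claimed budget while remaining sharp on both sides of the approximation region, and I will need to verify that combining the sign-approximation error with the Taylor tail error does not inflate the constants beyond the stated $\cO(\delta^{-1}\log(B/\eps))$ dependence on $B$ and $\eps$.
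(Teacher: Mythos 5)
This statement is Corollary~66 of Gilyén et al.~\cite{gilyen2019quantum}, which the paper cites as an external result and does not reprove, so there is no in-paper proof to compare your attempt against; I can only assess your sketch on its own terms.

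Your Taylor-truncation step and its geometric-tail estimate are fine (indeed, using $\sum_k(r+\delta)^k|c_k|\leq B$ rather than the term-by-term bound would even save you the extra $(r+\delta)/\delta$ factor), and the resulting degree $N=\Theta(\delta^{-1}\log(B/\eps))$ is the right budget for the truncation. The genuine gap is in the window step. Outside the inflated interval $[x_0-(r+\delta),\,x_0+(r+\delta)]$ the truncated Taylor polynomial $\tilde P(x)=\sum_{k\leq N}c_k(x-x_0)^k$ is \emph{not} controlled by $B$: on $[-1,1]$ one has $|x-x_0|\leq 2$, and the hypothesis only gives $|c_k|\leq B/(r+\delta)^k$, so in the worst case (mass concentrated on the top coefficient) $|\tilde P(x)|$ can be as large as $\Theta\bigl(B\,(2/(r+\delta))^{N}\bigr)$, which is exponentially large in $N$ whenever $r+\delta<2$. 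To make $|W\tilde P|\leq B+\eps$ on $[-1,1]$ your window would therefore have to decay to about $((r+\delta)/2)^{N}$ in a transition region of width $\delta$, and a sgn-based window achieving accuracy $e^{-cN}$ over a width-$\delta$ band has degree $\Omega(\delta^{-1}N)=\Omega(\delta^{-2}\log(B/\eps))$, blowing past the claimed $\cO(\delta^{-1}\log(B/\eps))$. So the ``truncate then mask'' plan, as stated, does not deliver the stated degree.

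The fix in Gilyén et al.\ is structurally different: rather than truncating the Taylor series sharply and then trying to tame the resulting polynomial with a multiplicative window, they keep each monomial but replace it by a low-degree, \emph{uniformly bounded} approximant obtained from Chebyshev truncation (their Lemma~57/Corollary~64 machinery), effectively building the ``localization'' into the polynomial approximations of the monomials themselves rather than applying it after the fact. If you want to salvage the window idea you would at minimum need to either (i) smoothly damp the Taylor coefficients so that $\tilde P$ stays $\cO(B)$ on all of $[-1,1]$ before windowing, or (ii) prove a sharper pointwise bound on $\tilde P$ that beats the naïve $B(2/(r+\delta))^N$ estimate; as written, neither is done, and the claimed degree bound is not established.
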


\subsection{Proofs for von Neumann and relative entropy estimation} \label{appendix:guarantee for neumann and relative}

\renewcommand\thetheorem{\ref{thm:von Neumann complexity}}
\setcounter{theorem}{\arabic{theorem}-1}
\begin{theorem}[von Neumann entropy estimation]
    Given a purified quantum query oracle $U_\rho$ of a state $\rho$ whose non-zero eigenvalues are lower bounded by $\gamma > 0$, there exists an algorithm that estimates $S(\rho)$ up to precision $\eps$ with high probability by measuring a single qubit, querying $U_\rho$ and $U_\rho^\dagger$ for $\cO\left(\frac{1}{\gamma \eps^2}\log^2(\frac{1}{\gamma})\log(\frac{\log(1/\gamma)}{\eps})\right)$ times. Moreover, using amplitude estimation improves the query complexity to $\cO\left(\frac{1}{\gamma \eps}\log(\frac{1}{\gamma})\log(\frac{\log(1/\gamma)}{\eps})\right)$.
\end{theorem}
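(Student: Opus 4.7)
The plan is to reduce entropy estimation to polynomial-in-$\rho$ trace evaluation via Theorem~\ref{thm:state entropy estimation}, then construct an explicit polynomial approximation of $-\ln$ on the spectral support $[\gamma,1]$ using Lemma~\ref{lem:gurantee for poly approx}, and finally balance polynomial approximation error against statistical sampling error. Since $S(\rho)=-\tr(\rho\ln\rho)$ and the only accessible block encoding is that of $\rho$ itself, I apply Theorem~\ref{thm:state entropy estimation} with $\sigma=\rho$: the input purification on register $AB$ supplies $\rho$ at the input, while $\widehat{U}_\rho$ implements the polynomial transformation of eigenvalues; the measured $Z$-expectation then directly equals $\tr(\rho P(\rho))$ for any real polynomial $P$ with $\lVert P\rVert_{[-1,1]}\le 1$.

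Next I would construct the polynomial. Choose $x_0=(1+\gamma)/2$, $r=(1-\gamma)/2$, and $\delta=\gamma/2$, so the approximation interval is $[\gamma,1]$ and $r+\delta=1/2<x_0$. Expanding $-\ln(x_0+y)=-\ln(x_0)+\sum_{k\ge 1}(-1)^k(y/x_0)^k/k$ and summing gives
\begin{equation}
B \;\coloneqq\; \sum_{k=0}^\infty (r+\delta)^k|c_k|\;=\;|\ln x_0|+\ln\!\left(\tfrac{1+\gamma}{\gamma}\right)\;=\;\Theta\!\left(\log\tfrac{1}{\gamma}\right).
\end{equation}
Lemma~\ref{lem:gurantee for poly approx} then yields, for any target error $\eps_1\in(0,\tfrac{1}{2B}]$, an efficiently computable polynomial $\widetilde P$ of degree $L=\cO\!\left(\tfrac{1}{\gamma}\log\tfrac{B}{\eps_1}\right)$ with $|\widetilde P(x)|\le \eps_1+B$ on $[-1,1]$ and $|\widetilde P(x)-(-\ln x)|\le\eps_1$ on $[\gamma,1]$. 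Rescaling $P\coloneqq \widetilde P/(\eps_1+B)$ places $P$ inside the admissible range of Theorem~\ref{thm:state entropy estimation}.

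Now I would chain the errors. Setting $\eps_1=\eps/2$ so that the truncation error contributes at most $\eps/2$ to the estimate of $S(\rho)$ (using that $\rho$ is supported on eigenvalues $\ge\gamma$), and noting that the QPP circuit produces a bounded random variable whose mean is $(\eps_1+B)^{-1}\tr(\rho\widetilde P(\rho))$, I need to estimate that mean to additive error $\eta=\eps/(2(\eps_1+B))=\Theta(\eps/\log(1/\gamma))$. By Chebyshev, $\cO(1/\eta^2)=\cO(\log^2(1/\gamma)/\eps^2)$ independent measurements of the single ancilla qubit suffice with constant probability of failure, which can be amplified by a median-of-means boost at logarithmic overhead. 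Each measurement invokes one QPP circuit of $L=\cO(\tfrac{1}{\gamma}\log\tfrac{\log(1/\gamma)}{\eps})$ layers, and each layer uses $\cO(1)$ calls to controlled-$\widehat{U}_\rho$ or its inverse, which in turn consumes a constant number of queries to $U_\rho$ and $U_\rho^\dagger$ via the qubitized block-encoding construction of Appendix~\ref{appendix:dm block enc}. Multiplying gives the stated $\cO\!\left(\tfrac{1}{\gamma\eps^2}\log^2(\tfrac{1}{\gamma})\log(\tfrac{\log(1/\gamma)}{\eps})\right)$ query complexity; replacing naive sampling with iterative amplitude estimation (applicable natively since the ancilla already plays the role of a flag qubit, as noted after Theorem~\ref{thm:qpp evaluation}) improves the $1/\eta^2$ to $1/\eta$ and yields the second bound.

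The main obstacle is obtaining the tight $B=\Theta(\log(1/\gamma))$ bound on the coefficient sum and verifying the admissibility conditions of Lemma~\ref{lem:gurantee for poly approx}; once that is handled, the remaining steps are a careful bookkeeping of approximation versus sampling error. A minor subtlety is ensuring that zero eigenvalues of $\rho$ do not contribute (since $P$ is only controlled on $[\gamma,1]$): this follows because eigenvalues outside the support appear with coefficient $\bra{\psi_j}\rho\ket{\psi_j}=0$ in the expansion provided by Theorem~\ref{thm:state entropy estimation}, so only the eigenvalues in $[\gamma,1]$ matter.
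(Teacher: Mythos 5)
Your proposal follows the same route as the paper: reduce to $\tr(\rho\,P(\rho))$ via Theorem~\ref{thm:state entropy estimation}, build $P$ from Lemma~\ref{lem:gurantee for poly approx}, then close with Chebyshev (or iterative QAE) and multiply by the circuit depth $L = \cO\bigl(\frac{1}{\gamma}\log\frac{\log(1/\gamma)}{\eps}\bigr)$. The only material difference is cosmetic: you approximate $-\ln x$ itself (getting $B=\Theta(\log(1/\gamma))$) and rescale the resulting polynomial by $1/(\eps_1+B)$ at the end, whereas the paper pre-scales the target to $f(x)=\frac{\ln x}{2\ln\gamma}$ so that $B=\tfrac12$ before invoking the lemma. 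Both yield the same asymptotics. One small caveat with your ordering: Lemma~\ref{lem:gurantee for poly approx} requires $\eps_1\le\frac{1}{2B}$, and with $B=\Theta(\log(1/\gamma))$ your choice $\eps_1=\eps/2$ can violate this when $\eps\gtrsim 1/\log(1/\gamma)$; you should take $\eps_1=\min(\eps/2,\,1/(2B))$ (which does not change the stated complexity) or adopt the paper's prescaling, under which the constraint $\eta\le 1/(2B)=1$ is vacuous.
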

\begin{proof}
Denote $f(x) = \frac{\ln(x)}{2\ln(\gamma)}$. We expect to find a polynomial $P(x) \in \RR[x]$ such that for all $x \in [\gamma, 1]$,
\begin{equation}
    |P(x) - f(x)| \leq \frac{\eps}{4 \ln(1 / \gamma)}
.\end{equation}
By taking $x_0=1$, $r=1-\gamma$, $\delta=\frac{\gamma}{2}$, and $B = \frac{1}{2}$ and $\eta = \frac{\eps}{4 \ln(1/\gamma)}$ into Lemma~\ref{lem:gurantee for poly approx}, such polynomial $P$ exists with degree 
\begin{equation}
    L = \cO\left(\frac{1}{\delta} \log \frac{B}{\eta}\right) 
    = \cO\left(\frac{1}{\gamma} \log \frac{\log(1/\gamma)}{\eps}\right) 
.\end{equation} 
Note that Lemma 11 of previous work~\cite{gilyen2019quantum} used the same setup for Lemma~\ref{lem:gurantee for poly approx}. Then Theorem~\ref{thm:state entropy estimation} implies that there exists a QPP circuit $\UQPPs(\widehat{U}_\rho)$ of $L$ layers to estimate $\tr(\rho P(\rho))$. Up to precision $\frac{\eps}{4 \ln(1 / \gamma)}$, the experimental estimation $E$ can be retrieved by measuring the first qubit of $\UQPP(\widehat{U}_\rho)$ with input state $\ketbra{0^{\otimes (m + 1)}}{0^{\otimes (m + 1)}} \otimes \rho$. We have
\begin{align}
    |E - \tr(\rho f(\rho))| & \leq |E - \tr(\rho P(\rho))| + |\tr(\rho P(\rho)) - \tr(\rho f(\rho))| \\
    & \leq \frac{\eps}{4 \ln(1/\gamma)} + \norm{P - f}_{[\gamma, 1]} \leq \frac{\eps}{2 \ln(1/\gamma)}
\end{align}
and hence
\begin{equation}
    |2 \ln(\gamma) E - S(\rho)| \leq \eps
.\end{equation}
To receive the desired precision, by Chebyshev's inequality, the total number of measurements is $\cO\left(\frac{\ln^2(\gamma)}{\eps^2}\right)$, while each $\widehat{U}_\rho$ requires one call of $U_\rho$ and its inverse. Consequently, the total query complexity of $U_\rho$ and its inverse is 
\begin{equation}
    \cO\left(\frac{1}{\gamma \eps^2}\log^2(\frac{1}{\gamma})\log\frac{\log(1/\gamma)}{\eps}\right)
.\end{equation}
The statement for using amplitude estimation follows by switching Chebyshev's inequality to the complexity of amplitude estimation~\cite{brassard2002quantum, grinko2021iterative}.
\end{proof}\\
\renewcommand{\thetheorem}{S\arabic{theorem}}

Note that by replacing $\widehat{U}_\rho$ with $\widehat{U}_\sigma$, the proof of Theorem~\ref{thm:relative complexity} is exactly the same as the proof of Theorem~\ref{thm:von Neumann complexity}.

\renewcommand\thecorollary{\ref{coro:von Neumann complexity no gamma}}
\setcounter{corollary}{\arabic{corollary}-1}
\begin{corollary}
    Given a purified quantum query oracle $U_\rho$ of a state $\rho$ whose rank is $\kappa \leq 2^n$, there exists an algorithm that estimates $S(\rho)$ up to precision $\eps$ with high probability by measuring a single qubit, querying $U_\rho$ and $U_\rho^\dagger$ for $\cO\left( \frac{\kappa}{\eps^3} \log^3 \left(\frac{\kappa}{\eps}\right)\log\left(\frac{1}{\eps}\right)\right)$ times. Moreover, using amplitude estimation improves the query complexity to $\cO\left( \frac{\kappa}{\eps^2} \log^2 \left(\frac{\kappa}{\eps}\right) \log\left(\frac{1}{\eps}\right)\right)$.
\end{corollary}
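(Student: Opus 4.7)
The plan is to reduce Corollary~\ref{coro:von Neumann complexity no gamma} to Theorem~\ref{thm:von Neumann complexity} by introducing an effective spectral cutoff $\gamma'$ that depends only on the rank $\kappa$ and the precision $\eps$. The guiding observation is that while the non-zero eigenvalues of $\rho$ may be arbitrarily close to zero, the entropy contribution $-p_j\ln p_j$ of any such eigenvalue vanishes as $p_j\to 0$. Hence, pretending that all eigenvalues smaller than a carefully chosen cutoff are absent should introduce only a controlled error, and on the complement we are exactly in the regime handled by Theorem~\ref{thm:von Neumann complexity}.

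Concretely, first I would note that on $(0,1/e]$ the map $x\mapsto x\ln(1/x)$ is monotonically increasing, so the combined contribution of all eigenvalues $p_j<\gamma'$ to $S(\rho)=-\sum_j p_j\ln p_j$ is bounded by $\kappa\gamma'\ln(1/\gamma')$, since there are at most $\kappa$ non-zero eigenvalues in total. Next, I would invoke exactly the polynomial $P$ constructed in the proof of Theorem~\ref{thm:von Neumann complexity} but with $\gamma$ replaced by $\gamma'$; by Lemma~\ref{lem:gurantee for poly approx} this $P$ satisfies $|P(x)|\le 1$ on $[-1,1]$ and $|P(x)-\ln(x)/(2\ln\gamma')|\le \eps/(4\ln(1/\gamma'))$ on $[\gamma',1]$. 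Splitting $\tr(\rho P(\rho))=\sum_j p_j P(p_j)$ according to whether $p_j\ge\gamma'$ or $p_j<\gamma'$, the large-eigenvalue block is controlled by the polynomial approximation bound, whereas the small-eigenvalue block is bounded simply by $\kappa\gamma'$ using $|P|\le 1$. Combined with the contribution of those same small eigenvalues to the true entropy, the total error of estimating $-2\ln(\gamma')\tr(\rho P(\rho))$ against $S(\rho)$ is $\cO(\kappa\gamma'\ln(1/\gamma'))$ plus the polynomial-approximation and measurement errors already accounted for in Theorem~\ref{thm:von Neumann complexity}.

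Choosing the cutoff $\gamma'=\Theta(\eps/(\kappa\log(\kappa/\eps)))$ renders the small-eigenvalue error $\cO(\eps)$, so rebalancing constants in the measurement and approximation error budgets brings the total error below $\eps$. Substituting $\gamma=\gamma'$ into the query complexity $\cO(\tfrac{1}{\gamma\eps^2}\log^2(1/\gamma)\log(\log(1/\gamma)/\eps))$ of Theorem~\ref{thm:von Neumann complexity} and simplifying $\log(1/\gamma')=\Theta(\log(\kappa/\eps))$ yields the advertised $\cO(\tfrac{\kappa}{\eps^3}\log^3(\kappa/\eps)\log(1/\eps))$ query count; the amplitude-estimation variant follows by the same substitution into the corresponding improved bound.

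The main obstacle will be ensuring that the polynomial $P$, which is designed to track $\ln$ only on $[\gamma',1]$, does not misbehave on the interval $[0,\gamma')$ where many eigenvalues of $\rho$ may sit. This is resolved by the universal bound $|P|\le 1$ on $[-1,1]$ guaranteed by Lemma~\ref{lem:gurantee for poly approx}: the small-eigenvalue contribution to the measured observable is then dominated, up to a $\log(1/\gamma')$ factor, by the entropy contribution we are already willing to ignore, so a single cutoff calibration handles both error sources simultaneously.
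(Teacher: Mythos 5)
Your proposal is correct and follows essentially the same route as the paper's own proof: introduce an effective cutoff $\gamma' = \Theta(\eps/(\kappa\log(\kappa/\eps)))$, bound the at most $\kappa$ small-eigenvalue contributions using $|P|\le 1$ (the paper phrases this as $\kappa\|xP - xf\|_{[0,\gamma]}$ rather than via a separate entropy-tail bound, but the mechanism and the resulting $\kappa\gamma'\log(1/\gamma')$ error term are the same), and then substitute $\gamma'$ into the query complexity of Theorem~\ref{thm:von Neumann complexity}. No gap.
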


\begin{proof}
We follow the same proof in Theorem~\ref{thm:von Neumann complexity}, with extra consideration on the threshold value $\gamma > 0$ and the error generated by this threshold value. We choose the same function $f$ and a polynomial $P$ of degree $\cO(\frac{1}{\gamma} \log \frac{1}{\eta})$ such that for all $x \in [\gamma, 1]$,
\begin{equation}
    \abs{P(x) - f(x)} \leq \eta
,\end{equation}
where $\eta > 0$ is decided later. Denote $E$ as the experiment estimation. Then
\begin{align}
    \abs{E - \tr(\rho f(\rho))} &\leq \abs{E - \tr(\rho P(\rho))} + \abs{\tr(\rho P(\rho)) - \tr(\rho f(\rho))} \\
    &\leq \abs{E - \tr(\rho P(\rho))} + \kappa \norm{x P - x f}_{[0, \gamma]} + \norm{P - f}_{[\gamma, 1]} \\
    &\leq \abs{E - \tr(\rho P(\rho))} + 2\kappa\gamma+ \eta
.\end{align}
Choose $\abs{E - \tr(\rho P(\rho))} \leq \frac{\eps}{8\ln(1/\gamma)}$, $\gamma = \frac{\eps}{16\kappa\ln(\kappa/\eps)}$ and $\eta = \frac{\eps}{8\ln(1/\gamma)}$. We have
\begin{equation}
    \abs{2\ln(\gamma)E -S(\rho)} \leq \frac{\eps}{4} + \frac{\eps}{2} + \frac{\eps}{4} = \eps
.\end{equation}
Then the total complexity is 
\begin{equation}
      \cO\left(\frac{1}{\eps^2} \log^2 \left(\frac{\kappa}{\eps}\right)\right) \cdot \cO\left(\frac{\kappa}{\eps} \log \left(\frac{\kappa}{\eps}\right)\log\left(\frac{1}{\eps}\right)\right) = \cO\left( \frac{\kappa }{\eps^3} \log^3\left(\frac{\kappa}{\eps}\right) \log\left(\frac{1}{\eps}\right) \right)
,\end{equation}
as required. The statement for using amplitude estimation follows by switching Chebyshev's inequality to the complexity of amplitude estimation~\cite{brassard2002quantum, grinko2021iterative}.
\end{proof}
\renewcommand{\thecorollary}{S\arabic{corollary}}

\subsection{Proofs for quantum R\'enyi entropy estimation}\label{appendix:guarantee for Renyi}

An extra lemma is required for proceeding to the main content. Note that this lemma is the result of Lemma~\ref{lem:gurantee for poly approx}.

\begin{lemma} [Corollary 67 in \cite{gilyen2019quantum}] \label{lem:gurantee for neg power approx}
Suppose $\gamma, \eps \in (0, \frac{1}{2})$ and $c \in (0, 1)$. Then there exists a polynomial $P \in \RR[x]$ of degree $\cO(\frac{1}{\gamma} \log \frac{1}{\eps})$ such that
\begin{itemize}
    \item for all $ x \in [-1, 1]$, $\abs{P(x)} \leq 1$ and
    \item for all $x\in [\gamma, 1]$, $\abs{P(x) - \frac{\gamma^c}{2} x^{-c}} \leq \eps$.
\end{itemize}
\end{lemma}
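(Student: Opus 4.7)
The plan is to reduce the statement to Lemma~\ref{lem:gurantee for poly approx} by choosing the reference function $f(x) = x^{-c}$, expanding it in a Taylor series around $x_0 = 1$, and then rescaling the resulting polynomial by the factor $\gamma^c/2$. The rescaling is precisely what both tames the global growth of $x^{-c}$ on $[-1,1]$ and tightens the approximation error on $[\gamma,1]$ down to $\eps$.

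First I would set $x_0 = 1$, $r = 1-\gamma$, $\delta = \gamma/2$, so that $[-x_0 - r - \delta, x_0 + r + \delta] = [-2 + \gamma/2, 2 - \gamma/2]$ and $[x_0 - r, x_0 + r] = [\gamma, 1]$, which matches the target interval of Lemma~\ref{lem:gurantee for neg power approx}. The Taylor series of $x^{-c}$ around $1$ is
\begin{equation}
    (1 + y)^{-c} = \sum_{k=0}^\infty \binom{-c}{k} y^k,
\end{equation}
so the Taylor coefficients are $c_k = \binom{-c}{k}$. Using $|\binom{-c}{k}| = \binom{c+k-1}{k}$ and the generating identity $\sum_{k\ge 0}\binom{c+k-1}{k} z^k = (1-z)^{-c}$ valid for $|z|<1$, I would take
\begin{equation}
    B \;=\; \sum_{k=0}^\infty (r+\delta)^k |c_k| \;=\; \left(1-\left(1-\tfrac{\gamma}{2}\right)\right)^{-c} \;=\; (\gamma/2)^{-c}.
\end{equation}

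Next I would apply Lemma~\ref{lem:gurantee for poly approx} with these parameters and with approximation target $\eta = 2\eps/\gamma^c$ (noting that $\eta \le \tfrac{1}{2B}$ whenever $\eps$ is small enough, which can be arranged by the standing assumption $\eps \in (0,1/2)$, possibly after shrinking $\eps$ by an absolute constant). This produces a polynomial $Q \in \RR[x]$ of degree
\begin{equation}
    \cO\!\left(\tfrac{1}{\delta}\log\tfrac{B}{\eta}\right) = \cO\!\left(\tfrac{1}{\gamma}\log\tfrac{(\gamma/2)^{-c}\gamma^c}{\eps}\right) = \cO\!\left(\tfrac{1}{\gamma}\log\tfrac{1}{\eps}\right),
\end{equation}
satisfying $|Q(x)| \le B + \eta$ on $[-1,1]$ and $|Q(x) - x^{-c}| \le \eta$ on $[\gamma,1]$. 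Finally, I would set $P(x) \vcentcolon= \tfrac{\gamma^c}{2} Q(x)$ and verify both conclusions: on $[\gamma,1]$ the approximation error becomes $\tfrac{\gamma^c}{2}|Q(x)-x^{-c}| \le \tfrac{\gamma^c\eta}{2} = \eps$, while on $[-1,1]$
\begin{equation}
    |P(x)| \;\le\; \tfrac{\gamma^c}{2}\bigl((\gamma/2)^{-c} + \eta\bigr) \;=\; 2^{c-1} + \tfrac{\gamma^c\eta}{2} \;<\; 1,
\end{equation}
where the strict inequality uses $c\in(0,1)$ (so $2^{c-1} < 1$) together with $\eta$ being small.

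The only nontrivial piece is the bookkeeping around $B$: one must recognize that the binomial series $\sum_k |\binom{-c}{k}| z^k$ has the clean closed form $(1-z)^{-c}$ thanks to all coefficients $\binom{-c}{k}(-1)^k = \binom{c+k-1}{k}$ being nonnegative. Once that is in hand, the factor $\gamma^c/2$ exactly cancels the blow-up of $B$ at rate $\gamma^{-c}$, so the normalization $|P|\le 1$ comes for free and the logarithmic dependence of the degree on $\eps$ (rather than on $\eps/\gamma^c$) is preserved.
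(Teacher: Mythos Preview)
Your reduction to Lemma~\ref{lem:gurantee for poly approx} with $x_0=1$, $r=1-\gamma$, $\delta=\gamma/2$, together with the closed form $B=\sum_k\bigl|\binom{-c}{k}\bigr|(1-\gamma/2)^k=(\gamma/2)^{-c}$ and the rescaling $P=\tfrac{\gamma^c}{2}Q$, is exactly the route the paper indicates (it cites the result and remarks that it ``is the result of Lemma~\ref{lem:gurantee for poly approx}''); this is also how Corollary~67 is derived from Corollary~66 in \cite{gilyen2019quantum}.

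There is one genuine bookkeeping gap, though. You assert that the hypothesis $\eta\le \tfrac{1}{2B}$ of Lemma~\ref{lem:gurantee for poly approx} ``can be arranged by the standing assumption $\eps\in(0,1/2)$, possibly after shrinking $\eps$ by an absolute constant.'' That is not true: with $\eta=2\eps/\gamma^c$ and $B=(\gamma/2)^{-c}$, the condition reads $\eps\le \gamma^{2c}/2^{c+2}$, which depends on $\gamma$ (and $c$), not on an absolute constant. The same issue recurs in your final bound $|P(x)|\le 2^{c-1}+\eps$: to conclude $|P|\le 1$ you need $\eps<1-2^{c-1}$, which for $c$ close to $1$ is much smaller than $1/2$. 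Neither issue is fatal --- one standard fix is to run the argument with $\eps'=\min\{\eps,\,c_0(c,\gamma)\}$ and check that in the regime $\eps>\eps'$ the resulting degree $\cO(\tfrac{1}{\gamma}\log\tfrac{1}{\eps'})$ is still $\cO(\tfrac{1}{\gamma}\log\tfrac{1}{\eps})$ up to constants depending only on $c$ --- but you should make this explicit rather than wave it away as an ``absolute constant.''
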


\renewcommand\thetheorem{\ref{thm:real Renyi complexity}}
\setcounter{theorem}{\arabic{theorem}-1}
\begin{theorem} [Quantum R\'enyi entropy estimation for real $\alpha$]
 Given a purified quantum query oracle $U_\rho$ of a state $\rho$ whose non-zero eigenvalues are lower bounded by $\gamma > 0$, there exists an algorithm that estimates $S_\alpha(\rho)$ up to precision $\eps$ with high probability by measuring a single qubit, querying $U_\rho$ and $U_\rho^\dagger$ for a total number of times in
\begin{equation}
    \begin{dcases}
        \cO \left(\frac{1}{\gamma^{3 - 2\a} \eps^2} \log\left(\frac{\tr(\rho^\a)}{\gamma^{1 - \a}\eps}\right) \cdot \eta^2 \right), & \textrm{if } \a \in (0, 1); \\[10pt]
        \cO \left(\frac{\log^2(1/\gamma)}{\gamma \eps^2} \left[\a \gamma + \log \left(\frac{\tr(\rho^\a) \log(1/\gamma)}{\eps}\right)\right] \cdot \eta^2\right), & \textrm{if } \a \in (1, \infty);
    \end{dcases}
\end{equation}
where $\eta = \frac{\tr(\rho^\a)^{-1}}{\abs{1 - \a}}$.
Moreover, using amplitude estimation improves the query complexity to
\begin{equation}
    \begin{dcases}
        \cO \left(\frac{1}{\gamma^{2 - \a} \eps} \log\left(\frac{\tr(\rho^\a)}{\gamma^{1 - \a}\eps}\right) \cdot \eta \right), & \textrm{if } \a \in (0, 1). \\[10pt]
        \cO \left(\frac{\log(1/\gamma)}{\gamma \eps} \left[\a \gamma + \log \left(\frac{\tr(\rho^\a) \log(1/\gamma)}{\eps}\right)\right]  \cdot \eta \right), & \textrm{if } \a \in (1, \infty).
    \end{dcases}
\end{equation}
\end{theorem}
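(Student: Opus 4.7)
\textbf{Proof Proposal for Theorem~\ref{thm:real Renyi complexity}.}
The plan is to reduce the problem to estimating $\tr(\rho^\alpha)$ with an appropriate multiplicative accuracy, then invoke Theorem~\ref{thm:state entropy estimation} with a carefully chosen polynomial approximation of $x^{\alpha-1}$. First I would convert the additive error target on $S_\alpha(\rho) = \frac{1}{1-\alpha} \log \tr(\rho^\alpha)$ into an error target on $T\coloneqq\tr(\rho^\alpha)$: by the mean-value theorem applied to $\log$, it suffices to estimate $T$ with additive error $\Theta(|1-\alpha|\,\epsilon\,T)$, i.e., with absolute error proportional to $\epsilon/\eta$ where $\eta=T^{-1}/|1-\alpha|$. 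Since $\tr(\rho\cdot\rho^{\alpha-1})=T$, the task becomes estimating $\tr(\rho\,f(\rho))$ with $f(x)=x^{\alpha-1}$, which Theorem~\ref{thm:state entropy estimation} lets us do for any real polynomial $P$ with $\|P\|_{[-1,1]}\leq 1$.

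Next I would construct the polynomial $P$ separately for the two regimes. For $\alpha\in(0,1)$, I would apply Lemma~\ref{lem:gurantee for neg power approx} with $c=1-\alpha$ to obtain a polynomial $P$ of degree $\cO(\tfrac{1}{\gamma}\log\tfrac{1}{\eta'})$ satisfying $|P(x)|\leq 1$ on $[-1,1]$ and $|P(x)-\tfrac{\gamma^{1-\alpha}}{2}x^{\alpha-1}|\leq\eta'$ on $[\gamma,1]$. Running QPP on this $P$ and rescaling by $2/\gamma^{1-\alpha}$ yields an estimator whose absolute error in $T$ is $\cO(\delta/\gamma^{1-\alpha}+\eta'/\gamma^{1-\alpha})$ when the single-qubit measurement gives precision $\delta$; balancing these two error sources and setting both to the target $\cO(\epsilon/\eta)$ forces $\delta,\eta' \sim \gamma^{1-\alpha}\epsilon/\eta$. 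For $\alpha\in(1,\infty)$, I would split $\alpha-1=k+c$ with $k=\lfloor\alpha-1\rfloor$ and $c\in[0,1)$; since $x^k$ is already a degree-$k$ monomial with $|x^k|\leq 1$ on $[-1,1]$, I only need to approximate $x^c$ on $[\gamma,1]$ via Lemma~\ref{lem:gurantee for poly approx} (centered at $x_0=1$ with $r=1-\gamma$, $\delta_{\mathrm{approx}}=\gamma/2$), obtaining a polynomial of degree $\cO(\tfrac{1}{\gamma}\log\tfrac{1}{\eta'})$. The final approximating polynomial is $P(x)=\tfrac{1}{2}x^k\widetilde P(x)$, contributing a total degree of $k+\cO(\tfrac{1}{\gamma}\log\tfrac{1}{\eta'})=\cO(\alpha+\tfrac{1}{\gamma}\log\tfrac{1}{\eta'})$; here $|P(x)|\leq 1$ on $[-1,1]$ holds automatically, and no extra $\gamma^{1-\alpha}$ rescaling is needed, so $\delta,\eta'\sim \epsilon/\eta$.

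Finally I would collect the two cost contributions. By Theorem~\ref{thm:state entropy estimation}, each QPP call uses $L$ queries to $U_\rho,U_\rho^\dagger$ where $L$ is the polynomial degree. To reach a measurement precision $\delta$ via Chebyshev sampling of the single-qubit observable one needs $\cO(1/\delta^2)$ repetitions; replacing the sampling by iterative amplitude estimation (as invoked in the proof of Theorem~\ref{thm:von Neumann complexity}) improves this to $\cO(1/\delta)$ at the cost of longer circuits. Multiplying the per-circuit length $L$ by the number of repetitions and plugging in the $\delta,\eta'$ chosen above gives, up to logarithmic factors, the quoted $\widetilde\cO(\gamma^{-(3-2\alpha)}\epsilon^{-2}\eta^2)$ and $\widetilde\cO(\tfrac{\alpha\gamma+1}{\gamma\epsilon^2}\eta^2)$ bounds (and their amplitude-estimation counterparts).

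The main obstacle I expect is the bookkeeping in the $\alpha\in(0,1)$ case: because Lemma~\ref{lem:gurantee for neg power approx} can only deliver a polynomial bounded by $1$ that approximates a \emph{scaled} negative power $\tfrac{\gamma^{1-\alpha}}{2}x^{\alpha-1}$, the unavoidable rescaling blows up both the measurement-precision requirement and the polynomial-approximation error by a factor $\gamma^{-(1-\alpha)}$, producing the unusual exponent $3-2\alpha$. Tracking this rescaling consistently through the composition with the outer logarithm, and verifying that the propagated error in $S_\alpha(\rho)$ remains $\cO(\epsilon)$ after all approximations are combined, is the step requiring the most care; everything else is a direct application of the QPP framework and the polynomial-approximation lemmas already in the appendix.
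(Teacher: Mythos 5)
Your overall route mirrors the paper's: reduce to estimating $T=\tr(\rho^\alpha)$ with additive error $\epsilon'\sim\epsilon/\eta$, then obtain a bounded polynomial approximant of (a scaled) $x^{\alpha-1}$ from Lemma~\ref{lem:gurantee for neg power approx} when $\alpha\in(0,1)$ or Lemma~\ref{lem:gurantee for poly approx} when $\alpha>1$, and feed it into Theorem~\ref{thm:state entropy estimation} with Chebyshev sampling or amplitude estimation. For $\alpha\in(0,1)$ your accounting (including the $\gamma^{1-\alpha}$ rescaling producing the $\gamma^{-(3-2\alpha)}$ exponent) is correct. Your split $\alpha-1=k+c$ with $k=\lfloor\alpha-1\rfloor$ is also cleaner than the paper's, which approximates $x^{\alpha}$ rather than $x^{\alpha-1}$; this is only a constant shift in degree.

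There is, however, a real gap in your $\alpha>1$ case: you set $P(x)=\tfrac12 x^k\widetilde P(x)$ and assert that $|P(x)|\le 1$ on $[-1,1]$ ``holds automatically.'' It does not. Lemma~\ref{lem:gurantee for poly approx} only guarantees $|\widetilde P(x)|\le \epsilon + B$ on $[-1,1]$, and for $f(x)=x^c$ expanded around $x_0=1$ with $r+\delta=1-\gamma/2$ the required bound is
\begin{equation}
B\ \ge\ \sum_{l\ge 0}\Bigl|\tbinom{c}{l}\Bigr|(1-\gamma/2)^l\ =\ \Theta\bigl(\log(1/\gamma)\bigr),
\end{equation}
as the paper computes explicitly (obtaining $\le\ln(2e/\gamma)$). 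So $|\widetilde P|$ can be as large as $\Theta(\log(1/\gamma))$, and a prefactor of $1/2$ does not bring $P$ back into $[-1,1]$; one must instead take $P(x)=\frac{1}{2\ln(2e/\gamma)}x^k\widetilde P(x)$. This extra normalization is not cosmetic: undoing it after the measurement multiplies the precision target $\delta$ by $1/\log(1/\gamma)$, which under Chebyshev sampling inflates the sample count by $\log^2(1/\gamma)$ --- exactly the explicit prefactor appearing in the stated complexity. Your proposal recovers the right $\widetilde\cO$ expression only because the tilde happens to absorb the factor you omitted; the underlying claim that no rescaling is needed is false, and without fixing it the appeal to Theorem~\ref{thm:state entropy estimation} (which requires $\|P\|_{[-1,1]}\le 1$) does not go through.
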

\begin{proof}
Previous work~\cite{wang2022quantum} states that it is able to obtain an estimation of $S_\a(\rho)$ up to precision $\eps$, by an estimation of $\tr(\rho^\a)$ within error $\eps' = \frac{\abs{1 - \alpha} \tr(\rho^\a)}{2} \eps$. Then we turn to demonstrate how to obtain $\tr(\rho^\a)$ with error bounded by $\eps'$.

Suppose $\a \in (0, 1)$. By Lemma~\ref{lem:gurantee for neg power approx}, there exists a polynomial $P \in \RR[x]$ of degree $\cO(\frac{1}{\gamma} \log \frac{1}{\gamma^{1 - \a} \eps'})$ such that
\begin{itemize}
    \item for all $x\in [-1, 1]$, $\abs{P(x)} \leq 1$,
    \item for all $x \in [\gamma, 1]$, $\abs{P(x) - \frac{\gamma^{1 - \a}}{2} x^{\a - 1}} \leq \frac{\gamma^{1 - \a}}{4}\eps'$.
\end{itemize}
Similar to the proof of Theorem~\ref{thm:von Neumann complexity}, Theorem~\ref{thm:state entropy estimation} implies that there exists a QPP circuit $\UQPPs(\widehat{U}_\rho)$ of $L$ layers to estimate $\tr(\rho P(\rho))$. Up to precision $\frac{\gamma^{1 - \a}}{4} \eps'$, the experimental estimation $E$ can be retrieved by measuring the first qubit of $\UQPPs(\widehat{U}_\rho)$ with input state $\ketbra{0^{\otimes (m + 1)}}{0^{\otimes (m + 1)}} \otimes \rho$. We have
\begin{align}
    |E - \frac{\gamma^{1 - \a}}{2} \tr(\rho^\a)| \leq \frac{\gamma^{1 - \a}}{2}\eps'
\end{align}
and hence
\begin{equation}
    |2\gamma^{\a - 1} E - \tr(\rho^\a)| \leq \eps'
.\end{equation}
By considering Chebyshev's inequality, the total query complexity of $U_\rho$ and $U_\rho^\dagger$ is $\cO (\frac{1}{\gamma^{3 - 2\a} \eps'^2}\log\frac{1}{\gamma^{1 - \a} \eps'})$. Now substitute $\eps'$ back to $\eps$. The query complexity turns to
\begin{equation}
    \cO \left(\frac{\tr(\rho^\a)^{-2}}{\abs{1 - \a}^2 \gamma^{3 - 2\a} \eps^2} \log\left(\frac{\tr(\rho^\a)}{\gamma^{1 - \a}\eps}\right)\right)
.\end{equation}

Now suppose $\a > 1$. Define $f(x) \coloneqq \frac{1}{2\ln(2e/\gamma)} 
x^{\a - \lfloor \a \rfloor}$ which is bounded by $\frac{1}{2\ln(2e/\gamma)}$ for $x \in [0, 1]$. Using the same setup in the proof of Theorem~\ref{thm:von Neumann complexity}, we would have $B= \frac{1}{2}$. See deductions below.
\begin{align}
    \sum_{l=0}^{\infty} \left|\binom{\alpha-\lfloor \alpha\rfloor}{l}\right| (1-\gamma/2)^l & \leq \sum_{l=0}^{\infty} \frac{\alpha-\lfloor \alpha\rfloor}{l}(-1+\gamma/2)^l \\
    &=1-(\alpha-\lfloor \alpha\rfloor)\sum_{l=1}^{\infty} \frac{(-1)^{l-1}}{l}(-1+\gamma/2)^l \\
    &=1-(\alpha-\lfloor \alpha\rfloor) \ln(\gamma/2)\leq \ln(2e/\gamma).
\end{align}
Immediately, Lemma~\ref{lem:gurantee for poly approx} implies that there exists a polynomial $\widetilde{P}(x) \in \RR[x]$ of degree $\cO(\frac{1}{\gamma} \log \frac{\log(1/\gamma)}{\eps})$ such that for all $x \in [\gamma, 1]$,
\begin{equation}
    |\widetilde{P}(x) - f(x)| \leq \frac{\eps'}{4\ln(2e/\gamma)}
.\end{equation}
Redefine $P(x) \coloneqq x^{\lfloor \a \rfloor} \widetilde{P}(x)$ of degree $\cO(\alpha + \frac{1}{\gamma} \log \frac{\log(1/\gamma)}{\eps})$. By the same procedure in the case of $\a \in (0, 1)$, we can find a QPP circuit estimating $\tr(\rho P(\rho))$, so that the experimental estimation $E$ satisfies
\begin{equation}
    |2\ln(2e/\gamma)E - \tr(\rho^\a)| \leq \eps'
.\end{equation}
As a result, the total query complexity in terms of $\eps$ is
\begin{equation}
    \cO \left(\frac{\tr(\rho^\a)^{-2}\log^2(1/\gamma)}{\abs{1 - \a}^2 \gamma \eps^2} \left(\a \gamma + \log\frac{\tr(\rho^\a) \log(1/\gamma)}{\eps}\right)\right)
.\end{equation}
The statement for using amplitude estimation follows by switching Chebyshev's inequality to the complexity of amplitude estimation.
\end{proof}\\
\renewcommand{\thetheorem}{S\arabic{theorem}}

As for the proof of Theorem~\ref{thm:integer Renyi complexity}, Theorem~\ref{thm:state entropy estimation} implies the query complexity is simply $\cO(\frac{\alpha}{\eps'^2})$ when $\a$ is an integer. Then the statement of Theorem~\ref{thm:integer Renyi complexity} follows by replacing $\eps'$ with $\eps$. Note that in this case $\abs{1 - \a} \in \cO(\a)$. In the following corollary, we present a method to estimate $S_\a(\rho)$ when $\gamma$ is unknown.

\begin{corollary} \label{coro:renyi complexity no gamma}
Assume a rank $\kappa \leq 2^{n}$ for an $n$-qubit state $\rho$ and a purified quantum oracle $U_\rho$. There exists a QPP circuit that estimates $S_\alpha(\rho)$ within precision $\eps$ by measuring a single qubit, at an expense of querying $U_\rho$ and $U_\rho^\dagger$ for a number of 
\begin{equation}
    \begin{dcases}
        \cO\left( \frac{2^{12/\a} \kappa^{3/\a - 2}}{\a \eps^{3/\a}} \left[3 + \log\left( \frac{\tr(\rho^\a) \kappa}{\abs{1 - \a} \eps} \right)\right] \cdot \eta^{3/\a}\right), & \textrm{if } \a \in (0, 1), \\[10pt]
        \cO\left(  \frac{\kappa}{\eps^3} \log^2 \left(\frac{\tr(\rho^\a)^{-1} \kappa}{\abs{1 - \a} \eps}\right) \left[ \frac{\a^2 \eps}{2\kappa} + \log \left(\frac{\tr(\rho^\a)^{-1} \kappa}{\abs{1 - \a} \eps}\right)\log\left(\frac{\tr(\rho^\a)^{-1}}{\abs{1 - \a} \eps}\right) \right] \cdot \eta^3\right), & \textrm{if } \a \in (1, \infty),
    \end{dcases}
.\end{equation}
where $\eta = \frac{\tr(\rho^\a)^{-1}}{\abs{1 - \a}}$. 
Moreover, using amplitude estimation improves query complexity to
\begin{equation}
    \begin{dcases}
        \cO\left( \frac{2^{8/\a} \kappa^{2/\a - 1}}{\a \eps^{2/\a}} \left[3 + \log\left( \frac{\tr(\rho^\a) \kappa}{\abs{1 - \a} \eps} \right)\right] \cdot \eta^{2/a} \right), & \textrm{if } \a \in (0, 1). \\[10pt]
        \cO\left(  \frac{\kappa}{\eps^2} \log \left(\frac{\tr(\rho^\a)^{-1} \kappa}{\abs{1 - \a} \eps}\right) \left[ \frac{\a^2 \eps}{2\kappa} + \log \left(\frac{\tr(\rho^\a)^{-1} \kappa}{\abs{1 - \a} \eps}\right)\log\left(\frac{\tr(\rho^\a)^{-1}}{\abs{1 - \a} \eps}\right) \right]  \cdot \eta^{2} \right), & \textrm{if } \a \in (1, \infty).
    \end{dcases}
\end{equation}
\end{corollary}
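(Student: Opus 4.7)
The plan is to reduce the unknown-$\gamma$ case to Theorem~\ref{thm:real Renyi complexity} by introducing an artificial threshold $\gamma$, then controlling the truncation error from eigenvalues below $\gamma$ via the rank bound. As a warm-up, I would mirror the structure of the proof of Corollary~\ref{coro:von Neumann complexity no gamma}: fix a threshold $\gamma > 0$ to be chosen later, take exactly the polynomial $P$ from the proof of Theorem~\ref{thm:real Renyi complexity} (approximating $x^{\alpha-1}$ up to the appropriate scaling on $[\gamma,1]$), run the QPP circuit to produce an estimate $E$ of $\tr(\rho P(\rho))$, and decompose the total error into (i) sampling error, (ii) polynomial-approximation error on eigenvalues in $[\gamma,1]$, and (iii) truncation error from eigenvalues in $[0,\gamma)$.

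The key new ingredient, absent in the $\gamma$-known setting, is bounding (iii). Since $\rho$ has rank $\kappa$, at most $\kappa$ eigenvalues $p_j$ lie in $[0,\gamma)$. For each such $p_j$, both the target contribution $p_j \cdot f(p_j)$ and the approximation $p_j \cdot P(p_j)$ are small: for $\alpha \in (0,1)$, $|P|\le 1$ globally, so $|p_j P(p_j)| \le \gamma$, while the target $\tfrac{\gamma^{1-\alpha}}{2} p_j^{\alpha} \le \gamma/2$; for $\alpha>1$, the factored form $P(x)=x^{\lceil \alpha-1\rceil}\widetilde{P}(x)$ with $|\widetilde{P}|=O(1)$ again yields per-eigenvalue error $O(\gamma)$ (up to the $1/\log(2e/\gamma)$ prefactor). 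Summing, the truncation error is $O(\kappa\gamma)$ for $\alpha\in(0,1)$ and $O(\kappa\gamma/\log(1/\gamma))$ for $\alpha>1$.

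Finally, balance this truncation error against the required precision $\eps' = \tfrac{|1-\alpha|\tr(\rho^\alpha)}{2}\eps$ for $\tr(\rho^\alpha)$, which translates to precision $\eps$ for $S_\alpha(\rho)$. For $\alpha\in(0,1)$, the condition $\kappa\gamma \lesssim \gamma^{1-\alpha}\eps'$ forces $\gamma \sim (\eps'/\kappa)^{1/\alpha}$, so $\gamma^{-(3-2\alpha)} \sim (\kappa/\eps')^{3/\alpha-2}$; plugging into the $\widetilde{O}(\eta^2/(\gamma^{3-2\alpha}\eps^2))$ complexity of Theorem~\ref{thm:real Renyi complexity} and rewriting $1/\eps'$ via $\eta$ yields the $\kappa^{3/\alpha-2}\eta^{3/\alpha}/\eps^{3/\alpha}$ scaling (the $2^{12/\alpha}/\alpha$ factor collects the various constants and the $1/\alpha$ from $(\cdot)^{1/\alpha}$ optimization). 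For $\alpha>1$, setting $\gamma \sim \eps'/(\kappa\log(\kappa/\eps'))$ gives $1/\gamma \sim \kappa\log\cdot\eta/\eps$, so $\widetilde{O}(\eta^2/(\gamma\eps^2)) = \widetilde{O}(\kappa\eta^3\log/\eps^3)$, matching the advertised bound. The amplitude-estimation variants follow immediately by replacing Chebyshev sampling with quantum amplitude estimation's linear-in-$1/\eps$ complexity.

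The main obstacle will be the bookkeeping in the $\alpha>1$ regime: the polynomial carries a $1/\log(2e/\gamma)$ prefactor that itself depends on the chosen $\gamma$, so the equations defining $\gamma$ are implicit and I will need to verify that the resulting logarithmic factors can be absorbed into the $\log^2(\tr(\rho^\alpha)^{-1}\kappa/(|1-\alpha|\eps))$ terms as stated. Everything else is a direct substitution into Theorem~\ref{thm:real Renyi complexity}.
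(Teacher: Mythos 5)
Your proposal reproduces the paper's own argument essentially verbatim: introduce an artificial cutoff $\gamma$, reuse the polynomial $P$ from Theorem~\ref{thm:real Renyi complexity}, split the error as (sampling) $+\,\kappa\norm{xP - xf}_{[0,\gamma]} + \norm{P - f}_{[\gamma,1]} \le (\text{sampling}) + 2\kappa\gamma + \eta$, tune $\gamma \sim (\eps'/\kappa)^{1/\alpha}$ for $\alpha\in(0,1)$ and $\gamma \sim \eps'/(\kappa\log(\kappa/\eps'))$ for $\alpha>1$ so the truncation term is an $\cO(\eps')$ fraction, and substitute back into the base theorem, with amplitude estimation replacing Chebyshev sampling for the improved bound. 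This is the same decomposition, the same choice of $\gamma$, and the same bookkeeping the paper carries out, so no meaningful comparison is needed.
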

\begin{proof}
As shown in the proof of Theorem \ref{thm:real Renyi complexity}, we could find a polynomial $P(x)$ such that 
\begin{equation}
    \abs{P(x) - f_\alpha(x)} \leq \eta, \forall x\in[\gamma ,1].
\end{equation}
where 
\begin{align}
    f_\alpha(x)=
    \begin{dcases}
   \frac{\gamma^{1 - \a}}{2} x^{\a - 1}, & \text{if } \alpha\in(0,1), \\
    \frac{x^\alpha}{2\ln(2e/\gamma)}, & \text{if } \alpha>1.
    \end{dcases}
\end{align}
Denote $E$ as the experiment estimation. Then
\begin{align}
    \abs{E - \tr(\rho f(\rho))} &\leq \abs{E - \tr(\rho P(\rho))} + \abs{\tr(\rho P(\rho)) - \tr(\rho f(\rho))} \\
    &\leq \abs{E - \tr(\rho P(\rho))} + \kappa \norm{x P - x f}_{[0, \gamma]} + \norm{P - f}_{[\gamma, 1]} \\
    &\leq \abs{E - \tr(\rho P(\rho))} + 2\kappa\gamma+ \eta
.\end{align}

For $\alpha>1$, choose $\abs{E - \tr(\rho P(\rho))} \leq \frac{\eps'}{16\ln(2e/\gamma)}$, $\gamma = \frac{\eps'}{16\kappa\ln(16\kappa/\eps')}$ and $\eta =  \frac{\eps'}{16\ln(2e/\gamma)}$. We have
\begin{equation}
    \abs{2\ln(2e/\gamma)E -\tr(\rho^\alpha)} \leq \frac{\eps'}{4} + \frac{\eps'}{2} + \frac{\eps'}{4} = \eps'
.\end{equation}
Then the total complexity is 
\begin{align}
      & \quad \,\, \cO\left(\frac{1}{(\eps')^2} \log^2 \left(\frac{\kappa}{\eps'}\right)\right) \cdot \cO\left(\alpha+\frac{\kappa}{\eps'} \log \left(\frac{\kappa}{\eps'}\right)\log\left(\frac{1}{\eps'}\right)\right) \\
      & = \cO\left(  \frac{\kappa}{(\eps')^3} \log^2 \left(\frac{\kappa}{\eps'}\right) \left[ \frac{\a \eps'}{\kappa} + \log \left(\frac{\kappa}{\eps'}\right)\log\left(\frac{1}{\eps'}\right) \right] \right) \\
      & = \cO\left(  \frac{\tr(\rho^\a)^{-3} \kappa}{\abs{1 - \a}^3 \eps^3} \log^2 \left(\frac{\tr(\rho^\a)^{-1} \kappa}{\abs{1 - \a} \eps}\right) \left[ \frac{\a \abs{1 - \a} \tr(\rho^\a) \eps}{2\kappa} + \log \left(\frac{\tr(\rho^\a)^{-1} \kappa}{\abs{1 - \a} \eps}\right)\log\left(\frac{\tr(\rho^\a)^{-1}}{\abs{1 - \a} \eps}\right) \right] \right)
.\end{align}
The result follows by the fact that $\a \abs{1- \a} = \cO(\a^2)$ and $\tr(\rho^\a) \leq 1$.

For $\alpha\in(0,1)$, choose $\abs{E - \tr(\rho P(\rho))} \leq \frac{\gamma^{1-\a}\eps'}{8}$, $\gamma = 8^{-1/\a} \cdot \left(\frac{\eps'}{\kappa}\right)^{1/\a}$ and $\eta =\frac{\gamma^{1-\a}\eps'}{8}$. We have
\begin{equation}
    \abs{2\gamma^{\alpha-1}E -\tr(\rho^\alpha)} \leq \frac{\eps'}{4} + \frac{\eps'}{2} + \frac{\eps'}{4} = \eps'
.\end{equation}
Then the total complexity is 
\begin{align}
    & \quad \,\, \cO\left(\frac{\gamma^{2(\alpha-1)}}{(\eps')^2}\right) \cdot \cO\left( \frac{1}{\gamma} \log \left(\frac{\gamma^{\alpha-1}}{\eps'}\right)\right) \\
    & = \cO\left( \frac{2^{6/\a} \kappa^{2/\a - 2}}{(\eps')^{2/\a}} \right) \cdot \cO\left(\frac{2^{3/\a} \kappa^{1/\a}}{(\eps')^{1/\a}} \log \left(\frac{8^{1/\a-1} \kappa^{1/\a - 1}}{(\eps')^{1/\a}}\right)\right) = \cO\left( \frac{2^{9/\a} \kappa^{3/\a - 2}}{\a(\eps')^{3/\a}} \left[3 + \log\left( \frac{\kappa}{\eps'} \right)\right]\right) \\
    & = \cO\left( \frac{2^{12/\a} \tr(\rho^\a)^{-3/\a} \kappa^{3/\a - 2}}{\a\abs{1 - \a}^{3/\a} \eps^{3/\a}} \left[3 + \log\left( \frac{\tr(\rho^\a) \kappa}{\abs{1 - \a} \eps} \right)\right]\right)
,\end{align}
as required. 
The statement for using amplitude estimation follows by switching Chebyshev's inequality to the complexity of amplitude estimation.
\end{proof} \\

\subsection{Comparison on entropies estimation algorithms} \label{appendix:entropy table}

We present further comparison on different entropies estimation algorithms in this section. In Table~\ref{table:complete von Neumann algs}, we add the von Neumann entropy estimation algorithm proposed by~\citet{wang2022new}. When assuming rank $\kappa$, the QPP-based algorithm improves the result in~\cite{wang2022new} by a factor of $\kappa$. In Table~\ref{table:complete Renyi algs}, we add R\'enyi entropy estimation algorithms proposed by~\citet{wang2022new}.

\begin{table}[htbp]
\centering
\setlength{\tabcolsep}{1em}
\begin{tabular}{lccc}
\toprule
Methods for $S(\rho)$ estimation & Total queries to $U_\rho$ and $U_\rho^\dagger$ & Queries per use of circuit\\
\midrule
QSVT-based with QAE (\cite{gilyen2020distributional}) & $\widetilde\cO(\frac{d}{\eps^{1.5}})$ & $\widetilde\cO(\frac{d}{\eps^{1.5}})$ \\
\addlinespace
QSVT-based with QAE (assumes rank, \cite{wang2022new}) & $\widetilde\cO(\frac{\kappa^2}{\eps^2})$ & $\widetilde\cO(\frac{\kappa^2}{\eps^2})$   \\
\addlinespace
QPP-based (assumes rank, in Corollary~\ref{coro:von Neumann complexity no gamma}) & $\widetilde\cO(\frac{\kappa}{\eps^3})$ & $\widetilde\cO(\frac{\kappa}{\eps})$  \\
\addlinespace
QPP-based with QAE (assumes rank, in Corollary~\ref{coro:von Neumann complexity no gamma})  & $\widetilde\cO(\frac{\kappa}{\eps^2})$ &  $\widetilde\cO(\frac{\kappa}{\eps^2})$  \\
\addlinespace
QPP-based (in Theorem~\ref{thm:von Neumann complexity})  & $\widetilde\cO(\frac{1}{\gamma \eps^2})$ & $\widetilde\cO(\frac{1}{\gamma})$  \\
\addlinespace
QPP-based with QAE (in Theorem~\ref{thm:von Neumann complexity}) & $\widetilde\cO(\frac{1}{\gamma \eps})$ &  $\widetilde\cO(\frac{1}{\gamma\eps})$  \\
\bottomrule
\end{tabular}
\caption{Comparison of algorithms on estimating von Neumann entropy within additive error. Here the $\widetilde\cO$ notation omits $\log$ factors, $\gamma > 0$ is the lower bound of eigenvalues, $\kappa > 0$ is the rank of the state $\rho \in \CC^{d \times d}$, and $\eps$ is the additive error of estimating $S(\rho)$. QAE is short for quantum amplitude estimation.} 
\label{table:complete von Neumann algs}
\end{table}

\begin{table}[H]
\setlength{\tabcolsep}{1em}
\centering
\begin{tabular}{lccc}
\toprule
\multirow{2}{*}[-2pt]{Methods for $S_\a(\rho)$ estimation} & \multicolumn{3}{c}{Total queries to $U_\rho$ and $U_\rho^\dagger$}  \\
\cmidrule{2-4}
{} & $\a \in (0, 1)$ & $\a \in (1, \infty)$ &  $\a \in \NN_+$\\
\midrule
QSVT-based with $\mathsf{DQC1}$ (\cite{subramanian2021quantum}) & $\widehat{\cO} \left(\frac{d^2}{\gamma \eps^2} \cdot \eta^2\right)$ & $\widehat{\cO} \left(\frac{d^2}{\gamma \eps^2} \cdot \eta^2\right)$ & $\widehat{\cO} \left(\frac{d^2}{\eps^2} \cdot \eta^2\right)$ \\
\addlinespace
QSVT-based with QAE (assumes rank, \cite{wang2022new}) & $\widetilde{\cO} \left(\frac{\kappa^{\frac{3 - \a^2}{2\a}}}{\eps^{\frac{3 + \a}{2\a}}}\right)$ & $\widetilde{\cO} \left(\frac{\kappa^{\a - 1 + \a/\{ \frac{\a - 1}{2} \}}}{\eps^{1 + 1/\{ \frac{\a - 1}{2} \}}}\right)$ & $\widetilde{\cO} \left(\frac{\kappa^{\a - 1}}{\eps}\right)$ (only for odd $\a$)\\
\addlinespace
QPP-based (assumes rank, in Corollary~\ref{coro:renyi complexity no gamma}) & $\widehat{\cO}\left( \frac{\kappa^{3/\a - 2}}{\eps^{3/\a}} \cdot \eta^{3/\a}\right)$ & $\widehat{\cO}\left(  \frac{\alpha^2\eps + \kappa}{\eps^3} \cdot \eta^{3} \right)$ & $\widehat{\cO} \left(\frac{1}{\eps^2} \cdot \eta^2 \right)$ \\
\addlinespace
QPP-based with QAE (assumes rank, in Corollary~\ref{coro:renyi complexity no gamma}) & $\widehat{\cO}\left( \frac{\kappa^{2/\a - 1}}{\eps^{2/\a}} \cdot \eta^{2/\a}\right)$ & $\widehat{\cO}\left(  \frac{\alpha^2\eps + \kappa}{\eps^2} \cdot \eta^2 \right)$ & $\widehat{\cO} \left(\frac{1}{\eps} \cdot \eta \right) $ \\
\addlinespace
QPP-based (in Theorem~\ref{thm:real Renyi complexity} and~\ref{thm:integer Renyi complexity}) & $\widehat{\cO} \left(\frac{1}{\gamma^{3 - 2\a} \eps^2}\cdot \eta^2\right)$ & $\widehat{\cO} \left(\frac{\a\gamma + 1}{\gamma \eps^2}\cdot \eta^2\right)$ & $\widehat{\cO} \left(\frac{1}{\eps^2} \cdot \eta^2\right)$\\
\addlinespace
QPP-based with QAE (in Theorem~\ref{thm:real Renyi complexity} and~\ref{thm:integer Renyi complexity}) & $\widehat{\cO} \left(\frac{1}{\gamma^{2 - \a} \eps}\cdot \eta\right)$ & $\widehat{\cO} \left(\frac{\a\gamma + 1}{\gamma \eps}\cdot \eta\right)$ & $\widehat{\cO} \left(\frac{1}{\eps} \cdot \eta\right)$\\
\bottomrule
\end{tabular}
\caption{Comparison of algorithms on estimating quantum $\a$-R\'enyi entropies for different $\a$, in terms of the query complexity of $U_\rho$ and $U_\rho^\dagger$. Here the $\widehat{\cO}$ notation omits the $\log$ and $\a$ factors. $\gamma > 0$ is the lower bound of eigenvalues and $\kappa > 0$ is the rank of a mixed state $\rho \in \CC^{d \times d}$, and $\eps$ is the additive error of estimating $S_\a(\rho)$. QAE is short for quantum amplitude estimation.}\label{table:complete Renyi algs}
\end{table}

\end{document}